\newtheorem{assumption}{Assumption}
\newtheorem{corollary}{Corollary}
\newtheorem{definition}{Definition}
\newtheorem{lemma}{Lemma}
\newtheorem{proposition}{Proposition}
\begin{document}
\onehalfspacing
\title{Systemic Risk of Optioned Portfolios: Controllability and Optimization}
\author{Xiaochuan Pang\thanks{School of Business, Sun Yat-Sen University, Guangzhou 510275, China. Email: pangxch@mail2.sysu.edu.cn}\quad Shushang Zhu\thanks{Corresponding author. School of Business, Sun Yat-Sen University, Guangzhou 510275, China. Email: zhuss@mail.sysu.edu.cn}\quad Xueting Cui\thanks{School of Mathematics, Shanghai University of Finance and Economics, Shanghai 200433, China. Email: cui.xueting@shufe.edu.cn}\quad Jiali Ma\thanks{College of Big Data Statistics, Guizhou University of Finance and Economics, Guiyang 550025, China. Email: majli@mail2.sysu.edu.cn}}
\maketitle
\begin{abstract}
We investigate the portfolio selection problem against the systemic risk which is measured by CoVaR. We first demonstrate that the systemic risk of pure stock portfolios is essentially uncontrollable due to the contagion effect and the seesaw effect. Next, we prove that it is necessary and sufficient to introduce options to make the systemic risk controllable by the correlation hedging and the extreme loss hedging. In addition to systemic risk control, we show that using options can also enhance return-risk performance. Then, with a reasonable approximation of the conditional distribution of optioned portfolios, we show that the portfolio optimization problem can be formulated as a second-order cone program (SOCP) that allows for efficient computation. Finally,  we carry out comprehensive simulations and empirical tests to illustrate the theoretical findings and the performance of our method.
\end{abstract}
\vskip 0.3cm\hskip 0.3cm\textit{Keywords:} {Systemic risk, Contagion effect, Seesaw effect, Optioned asset, Controllability}

\section{Introduction}
The outbreak of the global financial crisis in 2008 seriously hurt the financial system and even the world economy, highlighting the importance of systemic risk management. Since that, systemic risk has attracted extensive attention from both regulators and researchers, and consequently, we have witnessed a rapid growth of the literature on it.

A typical definition of systemic risk is the risk of collapse of an entire financial institutional system or an entire financial market caused by the cascading loss contagion from some part of the system or the market due to the interlinkages and interdependencies among entities (see Wikipedia). In addition to pointing out the common essence of systemic risk, i.e., contagion, this definition also divides systemic risk into two types: systemic risk of  institutional systems and systemic risk of financial markets. In the existing literature, researches on these two types of systemic risk both focus on risk measurement. The difference is that the former is mainly based on risk contagion channels such as cross holding (see, e.g., \cite{eisenberg2001}, \cite{elliott2014}, \cite{ma2021}) to explore the contagion mechanism, while the latter mainly uses market data to quantify the risk contagion effect by upgrading the traditional risk measures. As \cite{benoit2017} pointed out, over the last decade, the most commonly used systemic risk measures are CoVaR of \cite{adrian2011}, SRISK of \cite{acharya2012} and \cite{brownless2017}, and Marginal Expected Shortfall (MES) and Systemic Expected Shortfall (SES) of \cite{acharya2017}. The systemic risk of a portfolio studied in this paper refers to the systemic risk of financial markets. Among the popular systemic risk measures mentioned above, CoVaR is especially suitable for quantifying the systemic risk of a portfolio. Thus we use it in this paper as a representative of systemic risk measures.

Although portfolio selection models originated from \cite{markowitz1952} consider risk management, they do not efficiently deal with the extreme portfolio loss caused by some systemic events. In the classical portfolio theory, the risk is divided into two parts, the individual risk and the systematic risk, and diversification is emphasized since it not only eliminates the individual risk, but also optimizes and reshapes the systematic risk without harming the expected return. However,  the systemic risk of a portfolio, i.e., the risk of collapse of the entire portfolio caused by cascading contagion from the distress of some assets in the portfolio, is evidently different from the systematic risk. In fact, the correlation among financial assets in the portfolio provides the possibility for the contagion, amplification and formation of systemic risk. Thus, diversification among correlated assets may increase systemic risk since a high level of diversification implies more interdependencies.

\begin{figure}[H]
    \centering
    \includegraphics[width=10cm]{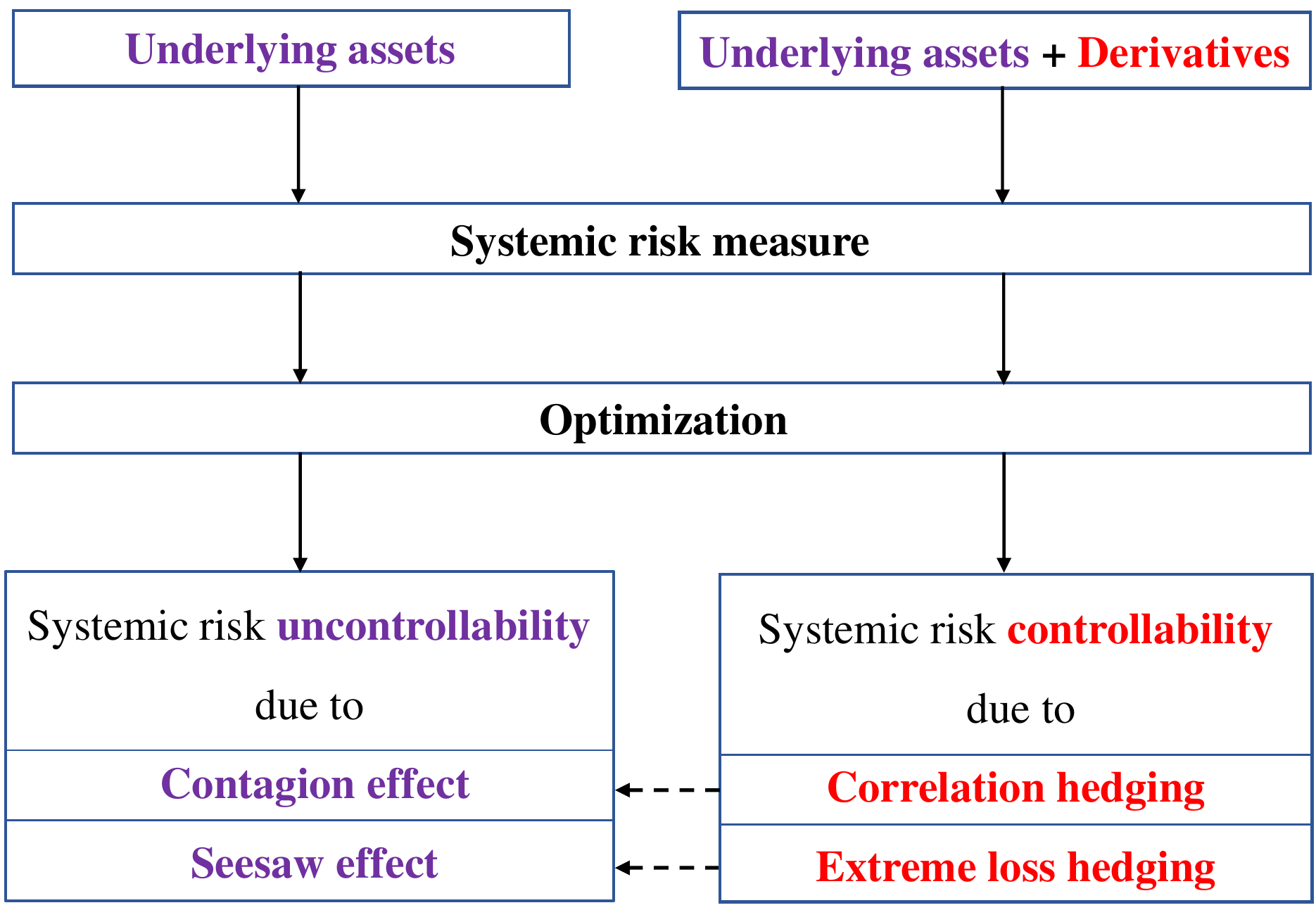}
    \caption{Framework of portfolio with systemic risk control}
    \label{framework}
\end{figure}

As shown in Figure {\ref{framework}}, in this paper, we propose a portfolio optimization model with the consideration of systemic risk control, which also involves conventional procedures: the selection of constituent assets, risk measure and optimization. In addition, we try to answer a more fundamental question: whether the systemic risk is controllable for a given portfolio.

We demonstrate that the systemic risk of portfolios only containing underlying assets is usually uncontrollable due to the contagion effect and the  seesaw effect. More specifically, the contagion effect refers to that, even if only one asset gets distressed, the portfolio usually suffers large losses since underlying assets are positively correlated, especially in an extreme market case, while the seesaw effect means that if the systemic risk caused by the collapse of an asset is restricted to a low level, then the systemic risk caused by the collapse of another asset might rise to a high level at the same time. If investors want to control systemic risk to a safe level, they have to execute a more conservative strategy with a relatively low expected return since there is a trade-off between the systemic risk and the expected return. To effectively prevent the systemic risk and maintain a relatively high return-risk performance in the traditional sense, we suggest adding derivatives into the portfolio to reconcile the conflict between diversification and systemic risk. By combining each underlying asset with derivatives written on it, we show theoretically that the systemic risk of portfolios containing derivatives is controllable compared with portfolios only containing underlying assets, since the derivatives can be used effectively to hedge the correlations among underlying assets and hedge the extreme losses, and thus alleviate the contagion effect and the seesaw effect simultaneously.

Surprisingly, to the best of our knowledge, researches on portfolio selection problems relevant to systemic risk are very limited. \cite{das2004} proposed a model with a common jump-diffusion process in stock returns to capture the systemic risk. They found that there is little difference between the portfolio of risky assets with and without the consideration of systemic risk and that the total position on risky assets that accounts for systemic risk is smaller. \cite{biglova2014} used Co-Expected Tail Loss as the systemic risk measure and Co-Expected Tail Profit as the reward and maximized the ratio of reward to systemic risk to select the portfolio. They verified their model with stock indexes of 14 countries and found that their model had a good performance during the period of financial instability. \cite{Capponi2022} used two types of Conditional Expected Shortfall as the systemic risk measure and discussed how the correlation between the portfolio and the market affects the systemic risk. Given the distribution of stock returns as Student's t, they provided the solutions to the systemic risk minimization problems. In the empirical study, similarly, their model performed well during the period of the market downturn. Although these models had a satisfactory performance in the out-of-sample test, they just consider the systemic risk control of pure stock portfolios. In this paper, we demonstrate that optioned portfolios are much more effective than stock portfolios in the sense of systemic risk control.

The literature on portfolio selection problems containing derivatives is also relatively limited and can be roughly divided according to risk measures. \cite{liang2008} used variance as the risk measure and solved the multi-stage portfolio selection problem with options under the mean-variance framework. Given the price changes of the underlying assets are normally distributed, \cite{britten-jones1999} and \cite{cui2013} provided the moments of the portfolio containing stocks and options and solved the mean-VaR optimization model. \cite{alexander2006} found that the CVaR minimization problems including derivatives are often ill-posed and they accounted for the transaction and management cost to alleviate this problem. \cite{zymler2013} considered worst-case VaR as the risk measure and provided the solution to the portfolio optimization model with European potions and American options respectively. \cite{deng2014} focused on the minimal risk of the portfolio given that the underlying stock returns and volatility variations are bounded within a circle or a rectangle. Instead of using return-risk framework, some papers derived the optimal portfolio of options by maximizing an expected utility function, such as \cite{liujun2003}, \cite{driessen2007} and \cite{faias2017}.

In the literature, the use of derivatives for the purpose of optimally hedging the extreme risk of a portfolio is originated from \cite{wilmott2007} and extended by \cite{zhu2020}, which are relevant to this paper. However, there exists essential differences between those works and our study. In this paper, we use CoVaR that is defined as the quantile of the random return of the portfolio as the systemic risk measure, while \cite{wilmott2007}  and  \cite{zhu2020} use the worst-case loss realized within a box or an ellipse as the extreme risk, treating the loss as a deterministic variable. Actually, the risk discussed in  \cite{wilmott2007}  and \cite{zhu2020} is the crash risk, which considers the case where all the assets fall down together. Obviously, crash risk is more severe than the systemic risk discussed in this paper that results from contagion initiated by the distress of some assets.

In this paper, we regard stocks, the underlying assets, as risk factors and investigate how and why options, the derivatives, can be used to hedge the systemic risk. We call the portfolios involving options as {\it optioned portfolios} for simplicity. The main contributions are in three aspects. First, we explore why the contagion effect and seesaw effect can lead to the uncontrollability of the systemic risk of pure stock portfolio. Second, we demonstrate the necessity and merit of introducing options for systemic risk control from both the perspectives of correlation hedging and extreme loss hedging related to the contagion effect and  seesaw effect respectively. Third, we investigate how the systemic risk of optioned portfolios can be controlled optimally. By approximating the distribution of the optioned portfolio with its asymptotic distribution, we derive the analytical form of the systemic risk measure, based on which we further show that the optimization model can be formulated as an SOCP that allows for efficient computation.

The rest of this paper is organized as follows. In Section 2, we propose the systemic risk constrained model with CoVaR as the systemic risk measure and clarify why the systemic risk of stock portfolios is uncontrollable. In Section 3, we comprehensively show why and how options improve the efficiency of systemic risk control in portfolio selection. In Section 4, we carry out simulations and empirical studies to examine the theoretical results and the performance of the method. Conclusions and discussions are summarised in Section 5.

\textbf{Notations:} Throughout this paper, lowercase boldface letters represent column vectors and uppercase letters represent matrices. For an index subset $\mathcal{I}$, $\bm{x}_{\mathcal{I}}$ is the subvector of $\bm{x}$ with elements indexed by $\mathcal{I}$. $A_{\mathcal{I}\mathcal{J}}$ represents the submatrix of $A$ consisting of rows and columns indexed by $\mathcal{I}$ and $\mathcal{J}$. $A_{\mathcal{I}\cdot}~ (A_{\cdot\mathcal{I}})$ represents the submatrix of $A$ consisting of rows (columns) indexed by $\mathcal{I}$. For a countable finite set $\mathcal{I}$, denote $|\mathcal{I}|$ as the cardinality of the set. Denote $||\cdot||$ as the $l_2$ norm. Denote ${\rm I}$  as the identity matrix, and $\bm{0}$ the vector of all entries being zeros.

\section{Problem setup}
Consider a portfolio consisting of assets such as stocks, bonds or derivatives over a given period. Denote $\bm{u}$ as the vector of asset prices at the beginning of the investment period and $\Delta\bm{u}$ as the random vector of asset price changes during this period. A portfolio is characterized by a weight vector $\bm{w}$ representing the holding amount of assets. Let $\Delta v(\bm{w})=\Delta\bm{u}^{\top}\bm{w}$
denote the change in the value of the portfolio. In this paper, we consider the portfolio selection problem with systemic risk control as follows
\begin{eqnarray*}
(P)&\max\limits_{\bm{w}}&\mu(\bm{w})\\
&{\rm s.t.}&\sigma(\bm{w})\leq \bar\sigma,\\
&&\rho_j(\bm{w})\leq \bar\rho_j,~~~~ j=1,\cdots,h, \\
&&\bm{w}\in\Omega,
\end{eqnarray*}
where $\mu(\bm{w})$, $\sigma(\bm{w})$ and $\rho_j(\bm{w})$ are the expected return, standard deviation and systemic risk driven by the $j$th of $h$ distressed events given the portfolio $\bm{w}$, $\bar\sigma$ and $\bar\rho_j$ are pre-determined parameters representing the tolerances of investors on traditional risk and systemic risk, and $\Omega$ is the set of admissible portfolio.

Following \cite{adrian2011}, we adopt CoVaR as the systemic risk measure in this paper. More specifically,
denote
\begin{eqnarray*}&&\mathscr{G}=\{\Delta u_i=-k_i, i\in\mathcal{I}\}
\end{eqnarray*}
as the distressed event indexed by $\mathcal{I}$. Here, $k_i$ is set to be the Value-at-Risk (VaR) of asset $i$, which is defined by
\begin{eqnarray}\label{VaR}
&& VaR_i^p=\inf\left\{a\in\mathbb{R}:{\rm Pr}(-\Delta u_i\geq a)\leq1-p\right\},
\end{eqnarray}
where ${\rm Pr(\cdot)}$ denotes the probability, $p(\geq0.5)$ is the confidence level. Then the systemic risk of the portfolio conditioning on the occurrence of $\mathscr{G}$ is defined by
\begin{eqnarray}\label{CoVaR}
&& CoVaR_q^{\Delta v(\bm{w})|\mathscr{G}}=\inf\left\{a\in\mathbb{R}:{\rm Pr}(-\Delta\bm{u}^{\top}\bm{w}\geq a|\mathscr{G})\leq1-q\right\},
\end{eqnarray}
where $q(\geq0.5)$ is the confidence level.

Since the essence of systemic risk is the risk contagion triggered by some distressed assets, it is necessary to attach multiple risk constraints to control the potential systemic risk and distinguish the different initial distresses. Without loss of generality, in the sequel, we consider only one systemic risk constraint, i.e., $h=1$, in the model to simplify the discussion, except that multiple systemic risk constraints are indeed necessary to be introduced into the discussion.

\subsection{Systemic risk measure for pure stock portfolio}

Now we specify the portfolio as a pure stock portfolio consisting of $m$ stocks. Denote the vector of stock prices and random vector of stock price changes by $\bm{p}=(p_i)_{m}$ and $\Delta\bm{p}=(\Delta p_i)_{m}$. Let $\bm{y}=(y_i)_{m}$ denote the weight vector of holding amount of stocks. Thus for the pure stock portfolio, we have $\bm{u}=\bm{p}$, $\Delta\bm{u}=\Delta\bm{p}$ and $\bm{w}=\bm{y}$.

Throughout this paper, we assume the returns/price changes of stocks follow a multivariate normal distribution, which is a common assumption in risk analysis of optioned portfolios for the sake of tractability (see, e.g., \cite{britten-jones1999}, \cite{cui2013}). Actually, fat tail distributions can be modeled by heteroscedasticity models such as the GARCH type models (see, e.g., \cite{bollerslev1986} and \cite{Engle2002}), where the stock returns are usually assumed to follow a time-varying conditional normal distribution. This also provides the rationality for the assumption of normal distribution in a single period case.
\begin{assumption}
The distribution of the price changes of stocks is normally distributed,  i.e.  $\Delta\bm{p} \thicksim \mathcal {N}(\bm{\mu},\Sigma)$, where $\bm{\mu}=(\mu_i)_{m}$ is the mean vector and $\Sigma=(\sigma_{ij})_{m\times m}$ is the nonsingular covariance matrix.
\end{assumption}
Under Assumption 1, the Value-at-Risk of stock $i$ defined by (\ref{VaR}) can be analytically expressed as
\begin{eqnarray*}
&&k_i=\alpha_p\sqrt{\sigma_{ii}}-\mu_i,
\end{eqnarray*}
where $\alpha_p$ is the $p$-quantile of the standard normal distribution. Denote $\bm{k}=(k_i)_{|\mathcal{I}|}$ and  $\mathcal{J}=\{1,\cdots,m\}\backslash\mathcal{I}$ as the complementary set of $\mathcal{I}$.
By the properties of multivariate normal distribution, the conditional distribution of the portfolio remains normally distributed. Furthermore, given the condition $\mathscr{G}=\{\Delta p_i=-k_i, i\in\mathcal{I}\},$
the conditional mean and conditional variance of the portfolio are
\begin{eqnarray*}
&&\mathbb{E}(\Delta v(\bm{y})|\mathscr{G}) = \bm{c}^{\top}\bm{y}_{\mathcal{J}}-\bm{k}^{\top}\bm{y}_{\mathcal{I}}~\mbox{and}~
\mathbb{V}(\Delta v(\bm{y})|\mathscr{G}) = \bm{y}_{\mathcal{J}}^{\top}E\bm{y}_{\mathcal{J}},
\end{eqnarray*}
where
\begin{eqnarray*}
&&\bm{c} = (c_j)_{|\mathcal{J}|}=\bm{\mu}_{\mathcal{J}}-\Sigma_{\mathcal{J}\mathcal{I}}\Sigma_{\mathcal{I}\mathcal{I}}^{-1}(\bm{k}+\bm{\mu}_{\mathcal{I}})
~\mbox{and}~
E = (e_{jj})_{|\mathcal{J}|\times |\mathcal{J}|}= \Sigma_{\mathcal{J}\mathcal{J}}-\Sigma_{\mathcal{J}\mathcal{I}}\Sigma_{\mathcal{I}\mathcal{I}}^{-1}\Sigma_{{\mathcal{I}}{\mathcal{J}}}
\end{eqnarray*}
are the conditional mean vector and conditional covariance matrix of price changes of stocks indexed by $\mathcal{J}$. Then
CoVaR can be explicitly expressed as the $q$-quantile of the conditional distribution of the portfolio
\begin{eqnarray}\label{CoVaR_normal}
&& CoVaR_q^{\Delta v(\bm{y})|\mathscr{G}}=\alpha_q\sqrt{\bm{y}_{\mathcal{J}}^{\top}E\bm{y}_{\mathcal{J}}}-\bm{c}^{\top}\bm{y}_{\mathcal{J}}+\bm{k}^{\top}\bm{y}_{\mathcal{I}}.
\end{eqnarray}

\subsection{Uncontrollability of systemic risk for pure stock portfolio}

A straightforward meaning of uncontrollability is that the systemic risk of pure stock portfolios can not be controlled to any level. We show that the systemic risk measured by CoVaR is bounded below. We assume that the initial wealth of the portfolio is one and short selling is not permitted. Denote by
\begin{eqnarray*}
&&\mathcal{K}(\bar\rho)=\left\{\bm{y}\in\mathbb{R}^{m}: \bm{p}^{\top}\bm{y}=1,\bm{y}\geq\bm{0},CoVaR_q^{\Delta v(\bm{y})|\mathscr{G}}\leq\bar\rho\right\}
\end{eqnarray*}
the feasible set of the portfolio given the acceptable systemic risk level $\bar\rho$.
Obviously, the emptiness of $\mathcal{K}(\bar\rho)$ is equivalent to the uncontrollability of systemic risk.

The following proposition provides the  conditions of uncontrollability/controllability of systemic risk for the pure stock portfolio. To reconcile the ordinal of elements in vectors with different length, we further assume
\begin{eqnarray*}
&&\mathcal{J}=\{1,\cdots,m'\}~\mbox{ and }~\mathcal{I}=\{m'+1,\cdots,m\}
\end{eqnarray*}
without loss of generality (e.g., under this assumption, for $i\in\mathcal{J}$, $\mu_i$ and $c_i$ represent the mean and conditional mean of the price change of the same stock).

\begin{proposition}\label{controlability}
$\mathcal{K}(\bar\rho)=\emptyset$ if and only if
\begin{eqnarray*}
&&\bar\rho<\min\limits_{i\in\mathcal{I}}\left\{\frac{\alpha_p\sqrt{\sigma_{ii}}-\mu_i}{p_i}\right\} ~\mbox{and}~ \bar\rho<\max\limits_{||\bm{r}||\leq 1}\min\limits_{j\in\mathcal{J}}\left\{\frac{\alpha_q F_{\cdot j}^{\top}\bm{r}-c_{j}}{p_j}\right\},
\end{eqnarray*}
where $F^{\top}F=E$.
\end{proposition}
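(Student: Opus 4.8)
The plan is to reduce the emptiness question to a single minimization and then exploit the additive, positively homogeneous structure of the CoVaR formula (\ref{CoVaR_normal}). Since the budget set $\{\bm{y}:\bm{p}^{\top}\bm{y}=1,\ \bm{y}\geq\bm{0}\}$ is a compact simplex and the map $\bm{y}\mapsto CoVaR_q^{\Delta v(\bm{y})|\mathscr{G}}$ is continuous (indeed convex, being the norm term $\alpha_q\|F\bm{y}_{\mathcal{J}}\|$ plus linear terms), the value $\bar\rho^{\ast}:=\min\{CoVaR_q^{\Delta v(\bm{y})|\mathscr{G}}:\bm{p}^{\top}\bm{y}=1,\ \bm{y}\geq\bm{0}\}$ is attained. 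Then $\mathcal{K}(\bar\rho)=\emptyset$ exactly when $\bar\rho<\bar\rho^{\ast}$, so the proposition reduces to showing $\bar\rho^{\ast}=\min\{A,B\}$, where $A=\min_{i\in\mathcal{I}}(\alpha_p\sqrt{\sigma_{ii}}-\mu_i)/p_i$ and $B=\max_{\|\bm{r}\|\leq1}\min_{j\in\mathcal{J}}(\alpha_q F_{\cdot j}^{\top}\bm{r}-c_j)/p_j$; this is equivalent to the stated criterion because $\bar\rho<\min\{A,B\}$ means exactly $\bar\rho<A$ and $\bar\rho<B$.

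The crux is a budget-splitting argument. Using $\bm{y}_{\mathcal{J}}^{\top}E\bm{y}_{\mathcal{J}}=\|F\bm{y}_{\mathcal{J}}\|^2$ (as $F^{\top}F=E$), write the objective as the sum of an $\mathcal{I}$-block $\bm{k}^{\top}\bm{y}_{\mathcal{I}}$ and a $\mathcal{J}$-block $\alpha_q\|F\bm{y}_{\mathcal{J}}\|-\bm{c}^{\top}\bm{y}_{\mathcal{J}}$. Introduce the scalar $t=\bm{p}_{\mathcal{I}}^{\top}\bm{y}_{\mathcal{I}}\in[0,1]$, the fraction of wealth placed in the distressed assets, so that $\bm{p}_{\mathcal{J}}^{\top}\bm{y}_{\mathcal{J}}=1-t$. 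For each fixed split $t$ the two blocks decouple and are minimized independently. Both blocks are positively homogeneous of degree one in their respective variables, so minimizing the $\mathcal{I}$-block over $\{\bm{y}_{\mathcal{I}}\geq\bm{0},\ \bm{p}_{\mathcal{I}}^{\top}\bm{y}_{\mathcal{I}}=t\}$ yields $tA$ and minimizing the $\mathcal{J}$-block over $\{\bm{y}_{\mathcal{J}}\geq\bm{0},\ \bm{p}_{\mathcal{J}}^{\top}\bm{y}_{\mathcal{J}}=1-t\}$ yields $(1-t)B$. Hence $\bar\rho^{\ast}=\min_{t\in[0,1]}\{tA+(1-t)B\}$, and being affine in $t$ this is minimized at an endpoint, giving $\bar\rho^{\ast}=\min\{A,B\}$. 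This reduction of a convex program over the simplex to a one-dimensional affine problem — possible only because each block is positively homogeneous, so no interior mixing across $\mathcal{I}$ and $\mathcal{J}$ can beat both pure strategies — is the heart of the argument and the step I expect to require the most care.

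It then remains to verify the two closed forms. For $A$, the linear functional $\bm{k}^{\top}\bm{y}_{\mathcal{I}}$ over the simplex $\{\bm{y}_{\mathcal{I}}\geq\bm{0},\ \bm{p}_{\mathcal{I}}^{\top}\bm{y}_{\mathcal{I}}=1\}$, whose vertices are $\bm{e}_i/p_i$, attains its minimum at a vertex, giving $\min_{i}k_i/p_i=\min_{i}(\alpha_p\sqrt{\sigma_{ii}}-\mu_i)/p_i=A$. For $B$, I would dualize the norm, $\|F\bm{y}_{\mathcal{J}}\|=\max_{\|\bm{r}\|\leq1}\bm{r}^{\top}F\bm{y}_{\mathcal{J}}$, turning the $\mathcal{J}$-block minimum over the unit simplex into a minimax of the bilinear form $\phi(\bm{y}_{\mathcal{J}},\bm{r})=\sum_{j}y_j(\alpha_q F_{\cdot j}^{\top}\bm{r}-c_j)$ over two convex compact sets; Sion's (or von Neumann's) minimax theorem justifies exchanging $\min$ and $\max$, the inner linear minimization over the simplex is again attained at a vertex $\bm{e}_j/p_j$, and maximizing over $\|\bm{r}\|\leq1$ produces exactly $B$. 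Combining these with $\bar\rho^{\ast}=\min\{A,B\}$ and the equivalence $\mathcal{K}(\bar\rho)=\emptyset\iff\bar\rho<\bar\rho^{\ast}$ finishes the proof. Beyond the homogeneity step, the only subtleties I anticipate are justifying the minimax interchange (secured by bilinearity and compactness) and confirming that $\bar\rho^{\ast}$ is attained so the emptiness inequality is strict.
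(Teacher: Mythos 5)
Your proof is correct, but it takes a genuinely different route from the paper's. The paper argues by infeasibility certificates: it rewrites the constraints defining $\mathcal{K}(\bar\rho)$ as a conic inequality system, invokes the Theorem of Alternatives (Chapter 5.8--5.9 of Boyd and Vandenberghe) to obtain a dual system in variables $(s,l_0,\bm{l})$, and then simplifies that alternative system by substitution until its feasibility is visibly equivalent to the two stated inequalities on $\bar\rho$. You instead work entirely on the primal side: you characterize emptiness via the attained minimum $\bar\rho^{\ast}$ of CoVaR over the budget simplex, decouple the $\mathcal{I}$- and $\mathcal{J}$-blocks through the budget split $t$, exploit positive homogeneity to reduce to $\min_{t\in[0,1]}\{tA+(1-t)B\}=\min\{A,B\}$, and compute $A$ by an LP vertex argument and $B$ by norm dualization plus Sion's minimax theorem. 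The two approaches are dual images of each other (your minimax interchange plays the role of the paper's alternative-system certificate), but yours buys something concrete: it exhibits the exact threshold $\bar\rho^{\ast}=\min\{A,B\}$, i.e., the explicit lower bound on achievable systemic risk, which the paper only discusses informally after the proposition; it also covers the boundary case $q=0.5$ (where $\alpha_q=0$) without modification, whereas the paper's reformulation divides by $\alpha_q$. The paper's route is shorter once the Theorem of Alternatives is granted, since it needs no attainment, decoupling, or minimax arguments. One presentational caveat: your claim that ``no interior mixing across $\mathcal{I}$ and $\mathcal{J}$ can beat both pure strategies'' should be stated simply as the observation that $t\mapsto tA+(1-t)B$ is affine on $[0,1]$, hence minimized at an endpoint; as written it sounds like an extra assertion needing proof when it is just that one line.
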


\begin{proof}
Recalling equation (\ref{CoVaR_normal}), the constraints that construct $\mathcal{K}(\bar\rho)$ can be reformulated as the following inequality system
\begin{eqnarray}\label{P1.1}
&&\bm{p}^{\top}\bm{y}=1,~\bm{y}\geq0~\mbox{ and}~||F\bm{y}_{\mathcal{J}}||\leq\frac{\bar\rho-\bm{k}^{\top}\bm{y}_{\mathcal{I}}+\bm{c}^{\top}\bm{y}_{\mathcal{J}}}{\alpha_q}
\end{eqnarray}
with respect to $\bm{y}$. According to the Theorem of Alternatives  of Chapter 5.8-5.9 of \cite{boyd2004}, the alternative of inequality system (\ref{P1.1}) is
\begin{eqnarray}\label{P1.2}
&&s-\frac{\bar\rho l_0}{\alpha_q}>0,~ s\bm{p}_{\mathcal{I}}-\frac{l_0}{\alpha_q}\bm{k}\leq0,~ F^{\top}\bm{l}+\frac{l_0}{\alpha_q}\bm{c}+s\bm{p}_{\mathcal{J}}\leq0~\mbox{and}~ ||\bm{l}||\leq l_0
\end{eqnarray}
with respect to $s$, $l_0$ and $\bm{l}$. The Theorem of Alternatives implies that the inequality system (\ref{P1.1}) is infeasible if and only if (\ref{P1.2}) is feasible. Now we consider the relationship between the value of $\bar\rho$ and the feasibility of (\ref{P1.2}). Noting that $\bm{p}>\bm{0}$, replacing the first inequality of (\ref{P1.2}) into the second and the third inequalities yields the following equivalent inequality system
\begin{eqnarray}\label{P1.3}
&&\frac{l_0}{\alpha_q}(\bar\rho\bm{p}_{\mathcal{I}}-\bm{k})<0,~ \frac{l_0}{\alpha_q}(\bar\rho\bm{p}_{\mathcal{J}}+\bm{c})+F^{\top}\bm{l}<0~\mbox{and}~||\bm{l}||\leq l_0
\end{eqnarray}
with respect to $l_0$ and $\bm{l}$. Let $\bm{r}=-\bm{l}/l_0$ and then we can reformulate (\ref{P1.3})  as
\begin{eqnarray}\label{P1.4}
&&\bar\rho\bm{p}_{\mathcal{I}}-\bm{k}<0,~ \bar\rho\bm{p}_{\mathcal{J}}+\bm{c}-\alpha_q F^{\top}\bm{r}<0 ~\mbox{and}~||\bm{r}||\leq 1
\end{eqnarray}
with respect to $\bm{r}$. Notice that inequality system (\ref{P1.4}) is feasible if and only if
\begin{eqnarray*}
&&\bar\rho<\min\limits_{i\in\mathcal{I}}\left\{\frac{\alpha_p\sqrt{\sigma_{ii}}-\mu_i}{p_i}\right\}~\mbox{and}~\bar\rho<\max\limits_{||\bm{r}||\leq 1}\min\limits_{j\in\mathcal{J}}\left\{\frac{\alpha_q F_{\cdot j}^{\top}\bm{r}-c_{j}}{p_j}\right\},
\end{eqnarray*}
which implies that inequality system (\ref{P1.1}) is infeasible and $\mathcal{K}(\bar\rho)=\emptyset$. The proof is completed.
\end{proof}

Proposition \ref{controlability} indicates that, in a general case, the systemic risk for any pure stock portfolio is bounded below and the bound depends on the individual VaRs of stocks indexed by $\mathcal{I}$ and the conditional means and conditional covariances of stocks indexed by $\mathcal{J}$. Specially, consider a case when there exists a systemically important stock $i$ such that $\rho_{ij}\approx 1$ for any $j\neq i$, $j\in\{1,\cdots,m\}$. Then we take $\bm{r}=F_{\cdot i}/||F_{\cdot i}||$. Notice that $F^{\top}_{\cdot j}F_{\cdot i}=e_{ij}$ and $||F_{\cdot i}||=\sqrt{e_{ii}}$. Thus
\begin{eqnarray*}
&&\max\limits_{||\bm{r}||\leq 1}\min\limits_{j\in\mathcal{J}}\left\{\frac{\alpha_q F_{\cdot j}^{\top}\bm{r}-c_{j}}{p_j}\right\}\geq\min\limits_{j\in\mathcal{J}}\left\{\frac{\alpha_q\rho_{ij}\sqrt{e_{jj}}-c_j}{p_j}\right\}\approx\min\limits_{j\in\mathcal{J}}\left\{\frac{\alpha_q \sqrt{e_{jj}}-c_j}{p_j}\right\}.
\end{eqnarray*}
It means that if $\bar\rho<\min\limits_{i\in\mathcal{I}}\{\frac{\alpha_p\sqrt{\sigma_{ii}}-\mu_i}{p_i}\}$ and $\bar\rho<\min\limits_{j\in\mathcal{J}}\{\frac{\alpha_q\sqrt{e_{jj}}-c_j}{p_j}\}$, then $\mathcal{K}(\bar\rho)=\emptyset$. In other words, to make the portfolio feasible, i.e., $\mathcal{K}(\bar\rho)\neq\emptyset$, $\bar\rho$ must be set relatively large, which further implies that the systemic risk cannot be controlled within a relatively low level for a stock portfolio, especially when stocks are highly correlated.

Besides the existence of a lower bound on the systemic risk, a more fundamental explanation for the uncontrollability of systemic risk is the contagion effect and the seesaw effect. To simplify the analysis, we use specific examples rather than general analysis to demonstrate the existence of  these two effects, since these examples are sufficient to illustrate the uncontrollability of systemic risk of pure stock portfolios. More general examples are shown later in the empirical section.

The contagion effect in the pure stock portfolios means that the systemic risk of the portfolio is amplified through risk contagion. We consider a portfolio consisting of two stocks. The initial wealth of the portfolio and the stock prices are both set at one. Conditioning on $\mathscr{G}_1=\{\Delta p_1=-k_1\}$,  by equation (\ref{CoVaR_normal}), we have
\begin{eqnarray}\label{sr}
&&CoVaR_q^{\Delta v(\bm{y})|\mathscr{G}_1}=\left(\psi\sqrt{\sigma_{22}}-\mu_2\right)y_2+\left(\alpha_p\sqrt{\sigma_{11}}-\mu_1\right)y_1,
\end{eqnarray}
where $\psi=\alpha_q\sqrt{1-\rho_{12}^2}+\alpha_p\rho_{12}$. We can see from equation (\ref{sr}) that the systemic risk depends on the correlation, the weights on stocks and the quantiles.

It is well known that the correlation between asset returns/losses is usually enhanced in extreme situations, which may lead to financial contagion. We use numerical examples to illustrate the relationship between systemic risk and correlation in the sequel. Set $\mu_1=\mu_2=0$, $y_1=y_2=0.5$ and $\sigma_{11}=\sigma_{22}=0.01$. In the left sub-figure of Figure \ref{contagion}, where ${\rm CoVaR_1}$ is the CoVaR conditioning on stock 1 being under distressed, we investigate how the systemic risk varies with the loss of stock 1 under different correlation settings. It is clear that the systemic risk is an increasing function of the initial loss and that  the larger the correlation, the larger the marginal effect of initial loss on systemic risk.

The right sub-figure of Figure 2 shows how the systemic risk changes with respect to the correlation given quantiles $p$ and $q$. It is clear that, in most of the interval $[0,1]$, the systemic risk is an increasing function of the correlation $\rho_{12}$, which again indicates that more correlation means more risk contagion. However, we also find that the systemic risk is not an increasing function of $\rho_{12}$ while $\rho_{12}$ is close to 1, except for the case $q=0.5$. Roughly speaking, the conditional variance of the portfolio, which represents the uncertainty caused by factors other than the systemic event $\mathscr{G}_1=\{\Delta p_1=-k_1\}$, is a decreasing function of correlation $\rho_{12}$. If $q=0.5$, the weight on conditional variance $\alpha_q=0$, and CoVaR turns to the conditional mean of losses of the portfolio and is an increasing function of correlation. If $q>0.5$ and $\alpha_q>0$, CoVaR is a weighted sum of conditional variance and conditional mean of losses. When the correlation increases, the conditional variance decreases and CoVaR does not necessarily increase.

\begin{figure}[H]
  \begin{minipage}{0.5\linewidth}
  \centerline{\includegraphics[width=8cm]{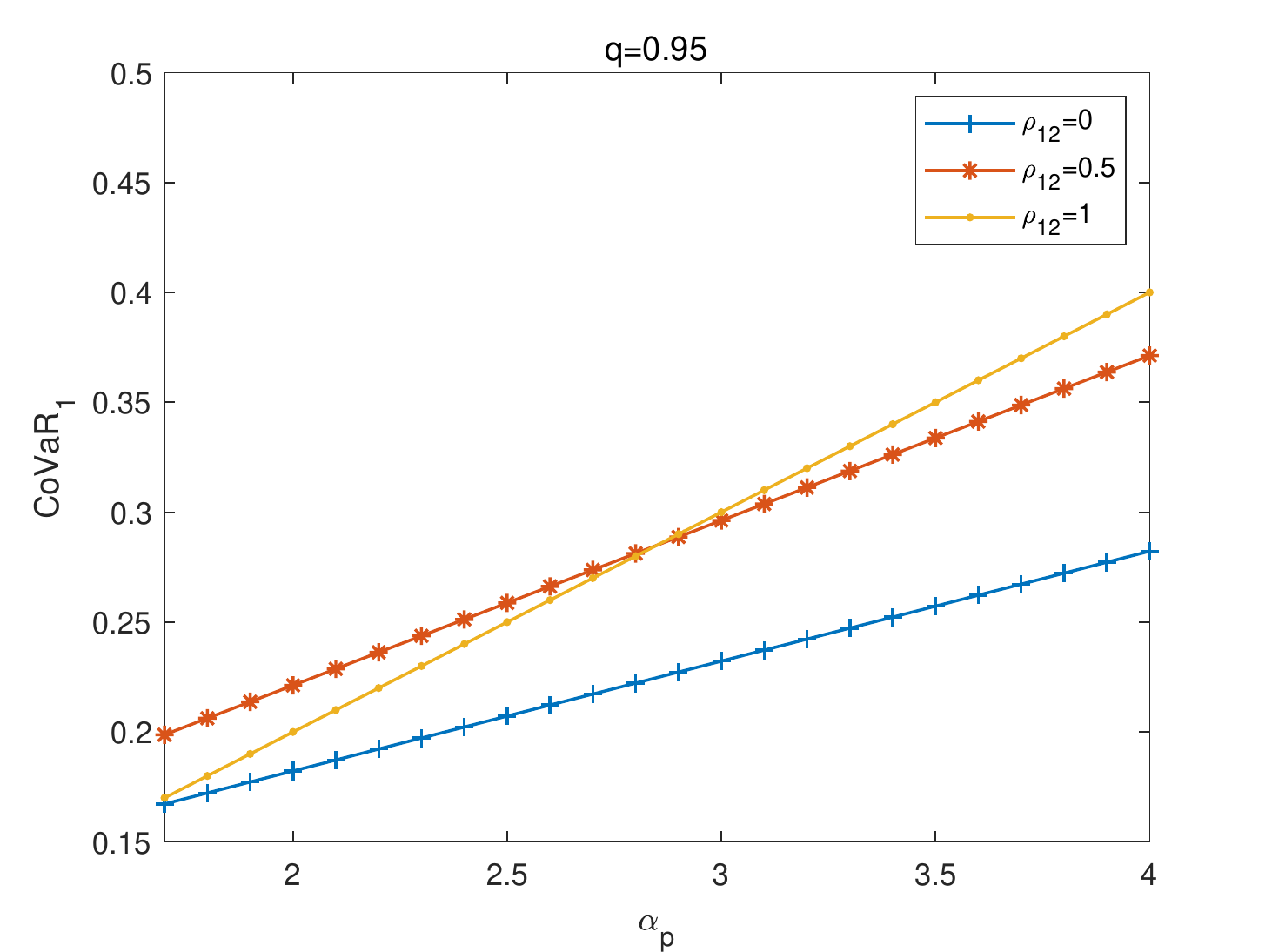}}
  \end{minipage}
  \hfill
  \begin{minipage}{0.5\linewidth}
  \centerline{\includegraphics[width=8cm]{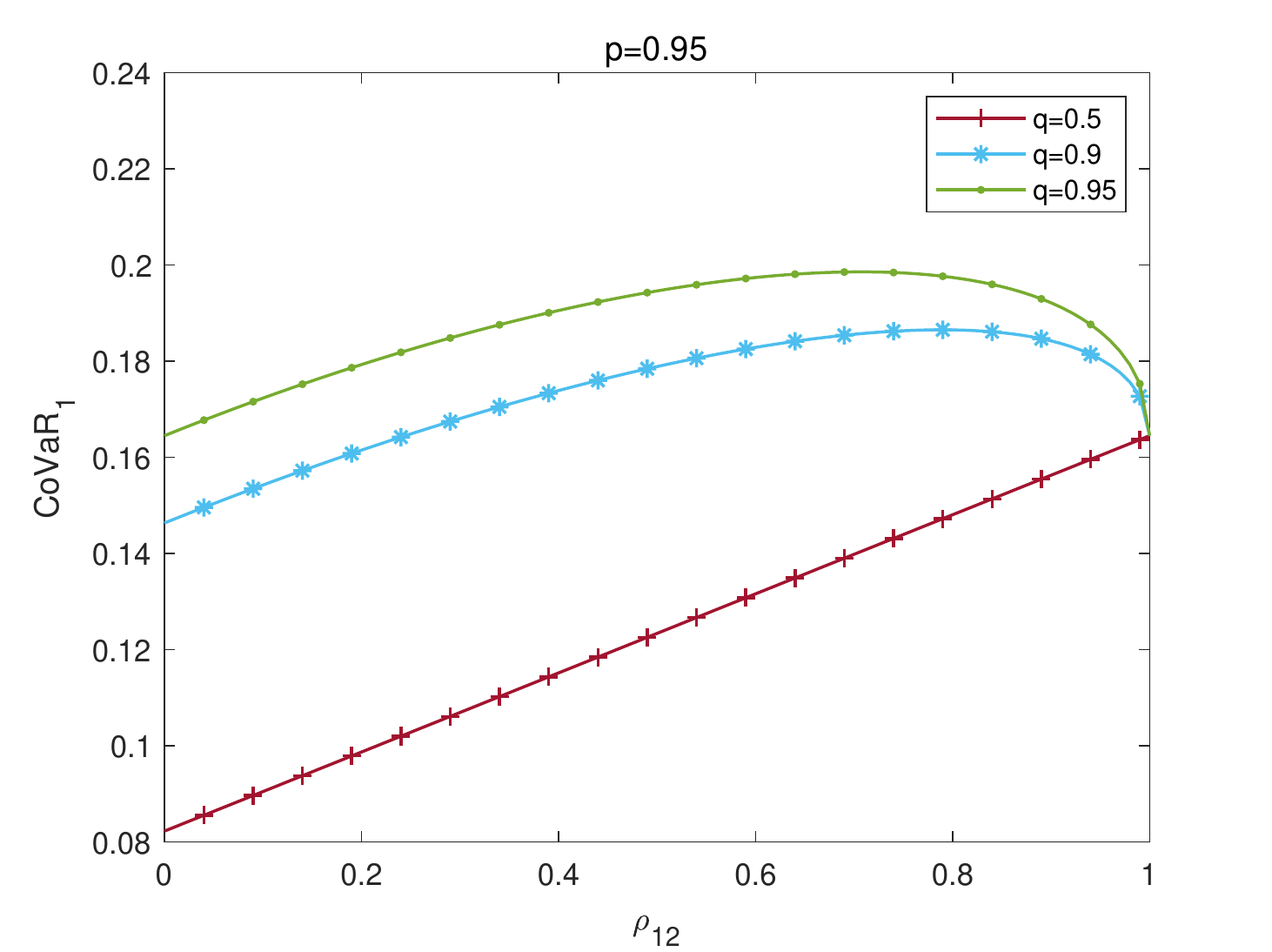}}
  \end{minipage}
  \caption{Contagion effect of systemic risk}
  \label{contagion}
\end{figure}

As we consider multiple systemic risk constraints, there exists the seesaw effect in the systemic risk control for pure stock portfolios, which means that a requirement of low systemic risk conditioning on some stocks being under distress usually forces systemic risk conditioning on other stocks being under distress to rise up.

We still consider the example with two stocks to illustrate the  seesaw effect. Similar to (\ref{sr}), the CoVaR of the portfolio conditioning on stock 2 being under distress is
\begin{eqnarray}
&&CoVaR_q^{\Delta v(\bm{y})|\mathscr{G}_2} = \left(\psi\sqrt{\sigma_{11}}-\mu_1\right)y_1+\left(\alpha_p\sqrt{\sigma_{22}}-\mu_2\right)y_2.\label{seesaw2}
\end{eqnarray}
Since we assume that the prices of stocks are one and the initial wealth is one, we have $y_2=1-y_1$. Replacing it into (\ref{sr}) and (\ref{seesaw2}) derives
\begin{eqnarray*}
CoVaR_q^{\Delta v(\bm{y})|\mathscr{G}_1} &=& \left(\alpha_p\sqrt{\sigma_{11}}-\mu_1-\psi\sqrt{\sigma_{22}}+\mu_2\right)y_1+\psi\sqrt{\sigma_{22}}-\mu_2,\\
CoVaR_q^{\Delta v(\bm{y})|\mathscr{G}_2} &=& \left(\psi\sqrt{\sigma_{11}}-\mu_1-\alpha_p\sqrt{\sigma_{22}}+\mu_2\right)y_1+\alpha_p\sqrt{\sigma_{22}}-\mu_2.
\end{eqnarray*}
Notice that both of the CoVaRs are linear functions of $y_1$. If the product of the coefficients of $y_1$ is negative, i.e.,
\begin{eqnarray*}
&&\left(\alpha_p\sqrt{\sigma_{11}}-\mu_1-\psi\sqrt{\sigma_{22}}+\mu_2\right)\left(\psi\sqrt{\sigma_{11}}-\mu_1-\alpha_p\sqrt{\sigma_{22}}+\mu_2\right)<0,
\end{eqnarray*}
then $CoVaR_q^{\Delta v(\bm{y})|\mathscr{G}_1}$ and $CoVaR_q^{\Delta v(\bm{y})|\mathscr{G}_2}$ move in the opposite direction no matter whether $y_1$ increases or decreases.

Figure \ref{seesaw} provides two concrete examples with $\mu_1=\mu_2=0$, $\rho_{12}=0.1$ and $\sigma_{11}=0.01$ but different $\sigma_{22}$. Here, VaR displayed in the figure is the VaR of the portfolio, $\rm CoVaR_1$ and $\rm CoVaR_2$ denote the CoVaR of the portfolio conditioning on stock 1 and stock 2 being at their VaR, respectively. Three important implications are exhibited by the examples. First, in the left figure, $\rm CoVaR_1$ and $\rm CoVaR_2$ decrease or increase in the same direction. However, in the right figure, as the weight on stock 1 increases (weight on stock 2 decreases), the systemic risk caused by stock 1 decreases while the systemic risk caused by stock 2 increases. In this case, the systemic risk is uncontrollable in the sense of seesaw effect. Second, the systemic risk could be larger than the tail risk (VaR) of the whole portfolio. Third, the diversification that can effectively lower the VaR of a portfolio no longer works well in reducing the systemic risk.
\begin{figure}
  \begin{minipage}{0.5\linewidth}
  \centerline{\includegraphics[width=8cm]{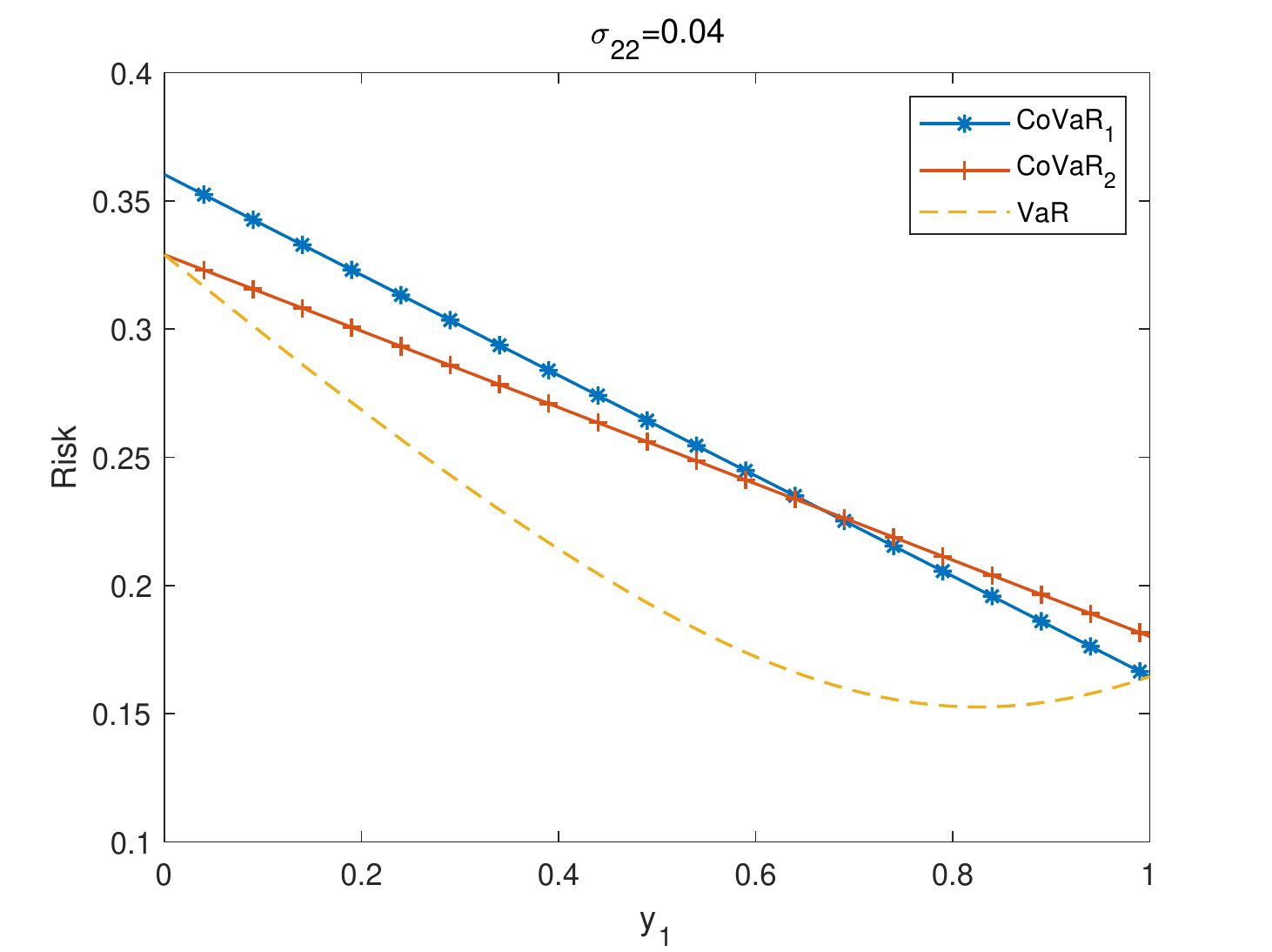}}
  \end{minipage}
  \hfill
  \begin{minipage}{0.5\linewidth}
  \centerline{\includegraphics[width=8cm]{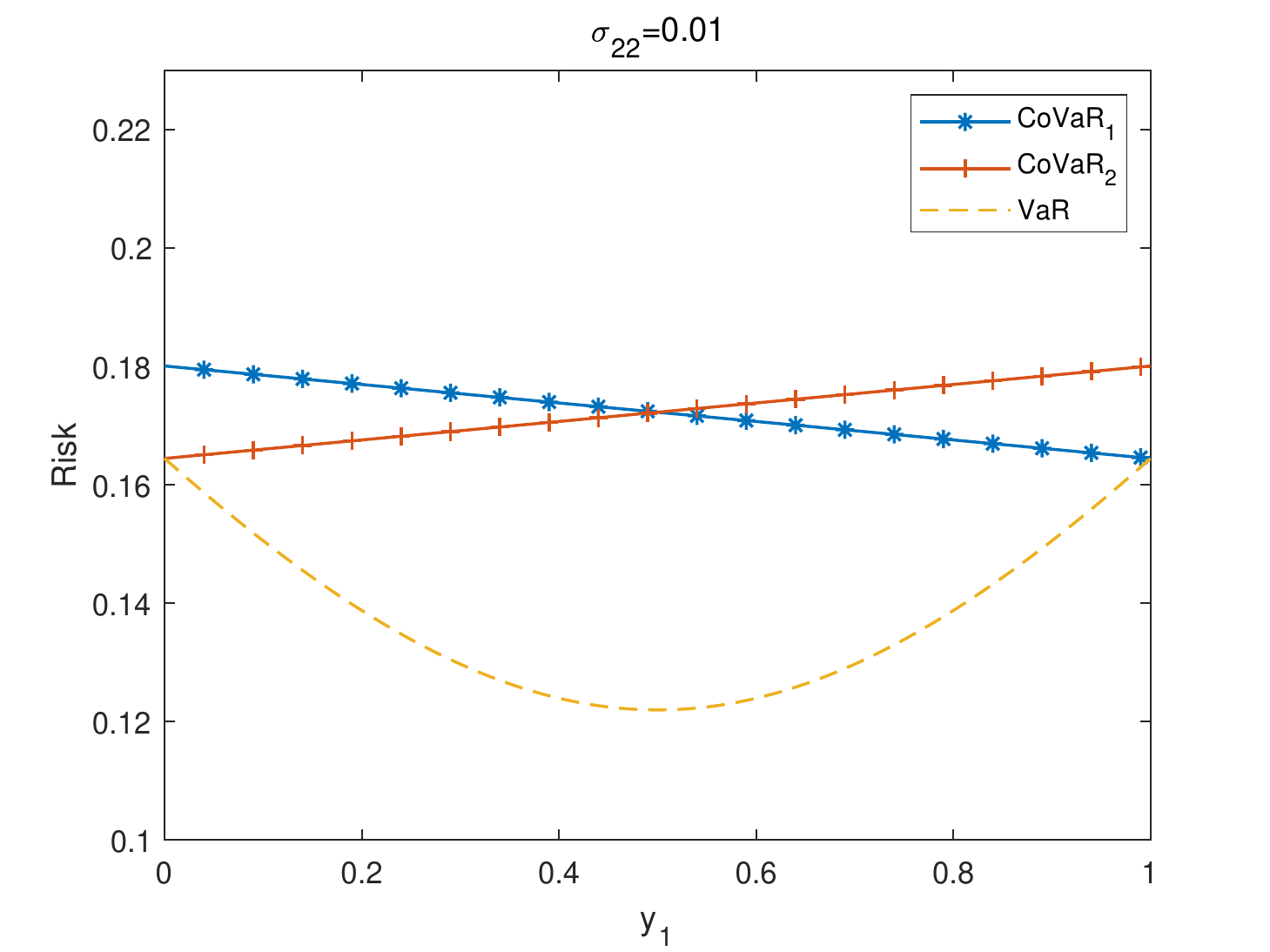}}
  \end{minipage}
  \caption{Seesaw effect in systemic risk control}
  \label{seesaw}
\end{figure}
The seesaw effect illustrated by the right sub-figure of Figure \ref{seesaw} is essentially caused by the limited investment universe. More specifically, given some stocks are under distressed, lowering the systemic risk might lead to more allocation of funds on other stocks that are less correlated with the distressed stocks. However, on the other hand, when the high weighting stocks are under distressed, the portfolio still suffers large losses due to high proportion of investment on them.

To alleviate the uncontrollability of systemic risk of pure stock portfolios, a natural alternative is enlarging the assets universe, thereby expanding the feasible set of the portfolio and allowing the minimal bound to emerge at a lower level. We suggest to use derivatives to hedge the systemic risk, since derivatives can hedge the risk contagion and also the extreme losses. We show these merits in the next section.

\section{Systemic risk control for optioned portfolio}
In this section, we discuss the systemic risk control of optioned portfolios. We first provide the systemic risk measure for optioned portfolios, and then compare optioned portfolios with pure stock portfolios from different aspects to illustrate the controllability of optioned portfolios.

\subsection{Systemic risk measure for optioned portfolio}
Suppose there are $n$ options written on the $m$ underlying stocks which can be selected in the portfolio. Let $\bm{d}=(d_i)_{n}$ be the vector of option prices, $\Delta\bm{d}=(\Delta d_i)_{n}$ be the random vector of option price changes, and $\bm{x}=(x_i)_{n}$ be the vector of holding amount of options. Partition the notations accordingly as $\bm{u}=(\bm{p};\bm{d})$, $\Delta\bm{u}=(\Delta\bm{p};\Delta\bm{d})$ and $\bm{w}=(\bm{x};\bm{y})$. The value change of the optioned portfolio $\Delta v(\bm{x},\bm{y})$ can be expressed as
\begin{eqnarray}
&&\Delta v(\bm{x},\bm{y})=\sum_{i=1}^nx_i\Delta d_i+\sum_{i=1}^my_i\Delta p_i=\Delta\bm{d}^\top\bm{x}+\Delta\bm{p}^\top\bm{y}.
\end{eqnarray}

Generally, the price of derivatives is a nonlinear function of factors related to underlying stocks. A common method to calculate the value change of derivatives is the Delta-Gamma approximation, which is the second-order Taylor expansion of the price of derivatives with respect to the prices of underlying stocks. According to the definition of ``Greeks''(see, e.g., \cite{glasserman2004}, \cite{hull2009}), denote
\begin{eqnarray*}
&&\bm{\delta}^i = \left(\begin{array}{ccc}\displaystyle{\frac{\partial d_i}{\partial{p_1}}}, & \cdots, & \displaystyle{\frac{\partial d_i}{\partial{p_m}}}\end{array}\right)^{\top},~\Gamma^i =\left(\begin{array}{ccc}
                \displaystyle{\frac{\partial^2 d_i }{\partial p_{1}^2}} &\cdots & \displaystyle{\frac{\partial^2 d_i }{\partial p_{1}\partial p_{m}}}\\ \vdots &\ddots &\vdots\\
                \displaystyle{\frac{\partial^2 d_i }{\partial p_{m}\partial p_{1}}} &\cdots & \displaystyle{\frac{\partial^2 d_i }{\partial p_{m}^2}}\end{array}
\right) ~\mbox{and}~ \theta^i = \frac{\partial d_i}{\partial t}.
\end{eqnarray*}
Then, for a given portfolio $(\bm{x},\bm{y})$, the value change of the portfolio can be approximated as
\begin{eqnarray}
\Delta v(\bm{x},\bm{y})&=&\left(\sum_{i=1}^nx_i\bm{\delta}^i+\bm{y}\right)^\top\Delta
\bm{p}+\frac{1}{2}\Delta \bm{p}^\top\left(\sum_{i=1}^n
x_i\Gamma^i\right)\Delta \bm{p}+\left(\sum_{i=1}^nx_i\theta^i\right)\Delta
t\nonumber\\
&=&\bm{\delta}^{\top}\Delta
\bm{p}+\frac{1}{2}\Delta \bm{p}^\top\Gamma\Delta \bm{p}
+\theta\Delta t,
\label{value-change-appr}
\end{eqnarray}
where $\bm\delta = \sum\limits_{i=1}^nx_i\bm{\delta}^i+\bm y,~ \Gamma = \sum\limits_{i=1}^n x_i\Gamma^i \mbox{~ and~ }
\theta = \sum\limits_{i=1}^n x_i\theta^i$.

By Assumption 1, the conditional mean and variance of the value change of the portfolio are provided as follows.
\begin{proposition}\label{moment}
Given the condition $\mathscr{G}=\{\Delta p_i=-k_i, i\in\mathcal{I}\}$, the conditional mean and variance of $\Delta v(\bm{x},\bm{y})$ are given by
\begin{eqnarray*}
\mathbb{E}\left(\Delta v(\bm{x},\bm{y})|\mathscr{G}\right)&=&\bm{g}^{\top}\bm x+\bm{h}^{\top}\bm{y},
\\
\mathbb{V}(\Delta v(\bm{x},\bm{y})|\mathscr{G})&=&\left(\bm{x}^{\top},\bm{y}_{\mathcal{J}}^{\top}\right)R\left(\begin{array}{c}\bm{x}\\\bm{y}_{\mathcal{J}}\end{array}\right)+\frac{1}{2}{\bm x}^{\top}S\bm{x},
\end{eqnarray*}
where
\begin{eqnarray*}
\bm{g}&=&(g_i)_{n}=\left(\frac{1}{2}\bm{h}^{\top}\Gamma^i\bm{h}+\frac{1}{2}{\rm tr}(\Gamma^{i}_{\mathcal{JJ}}E)+(\bm{\delta}^i)^{\top}\bm{h}+\theta^i\Delta t\right)_{n},\\
\bm{h}&=&\left\{\begin{array}{lll}
\bm{h}_{\mathcal{I}}&=&-\bm{k},\\
\bm{h}_{\mathcal{J}}&=&\bm{c},\end{array}\right.\\
R&=&\left(\Gamma^1_{\mathcal{J}\cdot}\bm{h}+\bm{\delta}^1_{\mathcal{J}},\cdots,\Gamma^n_{\mathcal{J}\cdot}\bm{h}+\bm{\delta}^n_{\mathcal{J}},{\rm I}\right)^{\top}
E\left(\Gamma^1_{\mathcal{J}\cdot}\bm{h}+\bm{\delta}^1_{\mathcal{J}},\cdots,\Gamma^n_{\mathcal{J}\cdot}\bm{h}+\bm{\delta}^n_{\mathcal{J}},{\rm I}\right),\\
S&=&(s_{ij})_{n\times n}=\left({\rm tr}\left(E\Gamma^i_{\mathcal{J}\mathcal{J}}E\Gamma^j_{\mathcal{J}\mathcal{J}}\right)\right)_{n\times n}.
\end{eqnarray*}
Here, ${\rm tr}(\cdot)$ denotes the trace of a matrix.
\end{proposition}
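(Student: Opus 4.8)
The plan is to exploit the fact that, under Assumption 1, conditioning on $\mathscr{G}$ leaves $\Delta\bm p$ Gaussian, so that $\Delta v(\bm x,\bm y)$ in (\ref{value-change-appr}) becomes a quadratic form in a (degenerate) Gaussian vector whose first two moments can be read off from standard formulas. Concretely, since $\Delta\bm p\sim\mathcal N(\bm\mu,\Sigma)$, Gaussian conditioning gives that, given $\mathscr G=\{\Delta p_i=-k_i,\,i\in\mathcal I\}$, the block $\Delta\bm p_{\mathcal I}$ is frozen at $-\bm k$ while $\Delta\bm p_{\mathcal J}\sim\mathcal N(\bm c,E)$ with $\bm c$ and $E$ as already defined. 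I would encode this by saying that, conditionally, $\Delta\bm p\sim\mathcal N(\bm h,\widetilde E)$, where $\bm h$ is exactly the mean vector in the statement ($\bm h_{\mathcal I}=-\bm k$, $\bm h_{\mathcal J}=\bm c$) and $\widetilde E$ is the $m\times m$ matrix with $\widetilde E_{\mathcal J\mathcal J}=E$ and all other entries zero. The degeneracy of $\widetilde E$ is precisely the statement that all conditional randomness lives in the $\mathcal J$ coordinates.

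The core step is two moment identities for a Gaussian quadratic form. For $\bm X\sim\mathcal N(\bm m,V)$, symmetric $A$ (here $\Gamma$ and the $\Gamma^i$ are Hessians, hence symmetric), and arbitrary $\bm a$, writing $\bm X=\bm m+\bm Y$ with $\bm Y\sim\mathcal N(\bm 0,V)$ and using that odd central moments vanish together with Isserlis' formula (which yields $\mathbb V(\bm Y^\top A\bm Y)=2\,{\rm tr}(AVAV)$ and kills the mean--quadratic cross covariance), I would establish
\begin{align*}
\mathbb E\!\left(\bm a^\top\bm X+\tfrac12\bm X^\top A\bm X\right)&=\bm a^\top\bm m+\tfrac12\bm m^\top A\bm m+\tfrac12{\rm tr}(AV),\\
\mathbb V\!\left(\bm a^\top\bm X+\tfrac12\bm X^\top A\bm X\right)&=(\bm a+A\bm m)^\top V(\bm a+A\bm m)+\tfrac12{\rm tr}(AVAV).
\end{align*}
I would then apply these with $\bm X=\Delta\bm p$, $\bm a=\bm\delta$, $A=\Gamma$, $\bm m=\bm h$, $V=\widetilde E$, treating $\theta\Delta t$ as an additive constant that shifts the mean but leaves the variance unchanged.

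The remaining work is bookkeeping, organized around two reductions. First, the degeneracy of $\widetilde E$ collapses every object onto the $\mathcal J$ block: ${\rm tr}(\Gamma\widetilde E)={\rm tr}(\Gamma_{\mathcal J\mathcal J}E)$, $\bm v^\top\widetilde E\bm v=\bm v_{\mathcal J}^\top E\bm v_{\mathcal J}$ for any $\bm v$, and ${\rm tr}(\Gamma\widetilde E\Gamma\widetilde E)={\rm tr}(\Gamma_{\mathcal J\mathcal J}E\Gamma_{\mathcal J\mathcal J}E)$. Second, I substitute $\bm\delta=\sum_i x_i\bm\delta^i+\bm y$, $\Gamma=\sum_i x_i\Gamma^i$ and $\theta=\sum_i x_i\theta^i$ and collect powers of $\bm x$. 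For the mean, the coefficient of $x_i$ assembles into exactly $g_i=\tfrac12\bm h^\top\Gamma^i\bm h+\tfrac12{\rm tr}(\Gamma^i_{\mathcal J\mathcal J}E)+(\bm\delta^i)^\top\bm h+\theta^i\Delta t$, while the leftover $\bm y^\top\bm h$ gives $\bm h^\top\bm y$. For the variance, note that $(\bm\delta+\Gamma\bm h)_{\mathcal J}=\sum_i x_i(\bm\delta^i_{\mathcal J}+\Gamma^i_{\mathcal J\cdot}\bm h)+\bm y_{\mathcal J}=M\left(\begin{array}{c}\bm x\\\bm y_{\mathcal J}\end{array}\right)$ with $M=(\Gamma^1_{\mathcal J\cdot}\bm h+\bm\delta^1_{\mathcal J},\dots,\Gamma^n_{\mathcal J\cdot}\bm h+\bm\delta^n_{\mathcal J},{\rm I})$, so the first variance term becomes $\left(\bm x^\top,\bm y_{\mathcal J}^\top\right)M^\top E M\left(\begin{array}{c}\bm x\\\bm y_{\mathcal J}\end{array}\right)$ with $R=M^\top E M$, and the trace term expands as $\tfrac12\sum_{ij}x_ix_j\,{\rm tr}(E\Gamma^i_{\mathcal J\mathcal J}E\Gamma^j_{\mathcal J\mathcal J})=\tfrac12\bm x^\top S\bm x$. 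I expect the main obstacle to be getting the variance identity right — in particular pinning down the fourth-moment coefficient $\tfrac12{\rm tr}(AVAV)$ and the cross term absorbed into $(\bm a+A\bm m)^\top V(\bm a+A\bm m)$ — and then carefully tracking the $\mathcal J$-block restriction so that the part quadratic in $(\bm x,\bm y_{\mathcal J})$ factors cleanly through $M$ and reproduces $R$ and $S$ exactly as stated.
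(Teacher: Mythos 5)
Your proposal is correct and follows essentially the same route as the paper's Appendix A: both rest on the standard mean/variance identities for a Gaussian quadratic form (the paper cites them as Proposition 1 of \cite{zhu2020}, which you rederive via Isserlis' formula) followed by the identical collection of coefficients of $\bm{x}$ and $\bm{y}_{\mathcal{J}}$ to produce $\bm{g}$, $\bm{h}$, $R=M^{\top}EM$ and $S$. The only difference is organizational: the paper first substitutes $\Delta\bm{p}_{\mathcal{I}}=-\bm{k}$ and applies the lemma to the quadratic form in $\Delta\bm{p}_{\mathcal{J}}\sim\mathcal{N}(\bm{c},E)$, whereas you keep the full $m$-dimensional vector with the degenerate covariance $\widetilde{E}$ and collapse onto the $\mathcal{J}$ block afterwards; the two computations agree term by term.
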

\begin{proof}
See Appendix A.
\end{proof}

Even though the conditional mean and variance associated with the distress event are known, it remains difficult to calculate the systemic risk without further information about the distribution of the portfolio. In particular, to facilitate the computation in portfolio optimization, we need to derive an analytical form of the systemic risk measure. According to (\ref{value-change-appr}), given distressed event $\mathscr{G}$, the conditional value change of the portfolio can be expressed as
\begin{eqnarray}
&&\Delta v^{\mathscr{G}}(\bm{x},\bm{y})=
(\bm{\delta}_{\mathcal{J}}-\Gamma_{\mathcal{J}\mathcal{I}}\bm{k})^\top\Delta\bm{p}_{\mathcal{J}}+\frac{1}{2}\Delta\bm{p}_{\mathcal{J}}^\top\Gamma_{\mathcal{J}\mathcal{J}}\Delta \bm{p}_{\mathcal{J}}+c_0,\label{con-value-change-appr}
\end{eqnarray}
where $c_0=\theta\Delta t-\bm{\delta}_{\mathcal{I}}^{\top}\bm{k}+\frac{1}{2}\bm{k}^\top\Gamma_{\mathcal{I}\mathcal{I}}\bm{k}$ is a constant and the conditional distribution of $\Delta p_{\mathcal{J}}$ is still a normal distribution. By (\ref{con-value-change-appr}) we can see that the conditional value change of the portfolio is equal to the sum of a linear term and a quadratic term of a normally distributed vector and a constant.

Notice that the conditional covariance matrix $E$ is a semidefinite matrix and can be decomposed as $E=E^{\frac{1}{2}}E^{\frac{1}{2}}$, where $E^{\frac{1}{2}}$ is a symmetric matrix. Thus $E^{\frac{1}{2}}\Gamma_{\mathcal{J}\mathcal{J}}E^{\frac{1}{2}}$ is also a symmetric matrix and it can be further decomposed as $E^{\frac{1}{2}}\Gamma_{\mathcal{J}\mathcal{J}}E^{\frac{1}{2}}=D\Lambda D^{\top}$, where $D$ is an orthogonal matrix and $\Lambda$ is a diagonal matrix consisting of the eigenvalues $(\lambda_1,\cdots,\lambda_{m'})$ of $E^{\frac{1}{2}}\Gamma_{\mathcal{J}\mathcal{J}}E^{\frac{1}{2}}$. Assume $\lambda_1,\cdots,\lambda_h$ $(h\leq m')$ are the nonzero eigenvalues without loss of generality. Then the conditional value change of the portfolio can be reformulated as
\begin{eqnarray}\label{reform_quad}
&&\Delta v^{\mathscr{G}}(\bm{x},\bm{y})=\frac{1}{2}\sum_{i=1}^h\lambda_i\left (q_i+\frac{\iota_i}{\lambda_i}\right)^2+\sum_{i=h+1}^{m'}\iota_iq_i+\tau,\label{chi}
\end{eqnarray}
where $q_i$, $i=1,\cdots,m'$, are independent normal random variables with unit variance, $\iota_i$ and $\tau$ are constants. The details of derivation of (\ref{reform_quad}) are provided in Appendix D. The following lemma from \cite{zhu2020} states the asymptotic normality of $\Delta v^{\mathscr{G}}(\bm{x},\bm{y})$.

\begin{lemma}\label{asy_normal}
Suppose $\zeta_i$, $i=1,\cdots,h$, are independent normal random variables with mean $\nu_i$ and unit variance, i.e., $\zeta_i\sim\mathcal{N}(\nu_i,1)$. Denote $\xi_h=\sum_{i=1}^h\omega_i\zeta^2_i$. If $\lim\limits_{h\rightarrow+\infty}\frac{\left(\sum_{i=1}^{h}\omega_i^k\right)^{\frac{1}{k}}}{\sqrt{\sum_{i=1}^{h}\omega_i^2\left(\frac{1}{2}+\nu_i^2\right)}}=0$ for each integer $k\geq3$, then $\frac{\xi_h-\mathbb{E}(\xi_h)}{\sqrt{\mathbb{V}(\xi_h)}} \rightsquigarrow \mathcal{N}(0,1)$  as $h\rightarrow+\infty$, where $\rightsquigarrow$ means convergence in distribution.
\end{lemma}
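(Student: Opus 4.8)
The plan is to prove the lemma by the method of cumulants: I would show that every standardized cumulant of order at least three vanishes in the limit, and then invoke the fact that the standard normal law is determined by its moments to upgrade convergence of cumulants to convergence in distribution. This route is preferable here to the more familiar Lyapunov or Lindeberg conditions because the building blocks $\zeta_i^2$ are noncentral chi-square variables whose cumulants are \emph{linear} in the noncentrality parameter $\nu_i^2$, and it is exactly this linearity that lets the single hypothesis on the weights $\omega_i$ govern all the higher-order terms; the raw absolute moments required by Lyapunov are polynomials of higher degree in $\nu_i^2$ and do not match the stated condition nearly as cleanly.

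First I would record the cumulants. Since $\zeta_i\sim\mathcal{N}(\nu_i,1)$, the variable $\zeta_i^2$ is noncentral chi-square with one degree of freedom and noncentrality $\nu_i^2$, whose $r$-th cumulant is $2^{r-1}(r-1)!\,(1+r\nu_i^2)$; hence $\omega_i\zeta_i^2$ has $r$-th cumulant $\omega_i^r 2^{r-1}(r-1)!\,(1+r\nu_i^2)$, and by independence
\begin{eqnarray*}
\kappa_r(\xi_h)=2^{r-1}(r-1)!\sum_{i=1}^h\omega_i^r\left(1+r\nu_i^2\right).
\end{eqnarray*}
In particular $\kappa_2(\xi_h)=4\sum_{i=1}^h\omega_i^2(\tfrac12+\nu_i^2)=\mathbb{V}(\xi_h)$, so the denominator in the hypothesis equals, up to a constant, $\sqrt{\kappa_2(\xi_h)}$. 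The standardized variable $T_h=(\xi_h-\mathbb{E}\xi_h)/\sqrt{\mathbb{V}(\xi_h)}$ then satisfies $\kappa_1(T_h)=0$, $\kappa_2(T_h)=1$, and $\kappa_r(T_h)=\kappa_r(\xi_h)/\kappa_2(\xi_h)^{r/2}$ for $r\ge 3$, so the whole problem reduces to showing $\kappa_r(\xi_h)/\kappa_2(\xi_h)^{r/2}\to 0$ for each fixed $r\ge 3$.

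To control this ratio I would split the numerator using the linearity noted above, $\sum_i\omega_i^r(1+r\nu_i^2)=\sum_i\omega_i^r+r\sum_i\omega_i^r\nu_i^2$, bounding by absolute values throughout so that the argument is insensitive to the signs of the eigenvalue-weights $\omega_i$ (the hypothesis being read with $|\omega_i|$, as is standard for such Lyapunov-type conditions). The first piece is immediate: taking $k=r$ in the hypothesis gives $(\sum_i|\omega_i|^r)^{1/r}=o(\sqrt{\kappa_2(\xi_h)})$, hence $\sum_i|\omega_i|^r=o(\kappa_2(\xi_h)^{r/2})$. For the second piece I would write $|\omega_i|^r\nu_i^2=|\omega_i|^{r-2}(\omega_i^2\nu_i^2)$, use the elementary bound $\omega_i^2\nu_i^2\le\omega_i^2(\tfrac12+\nu_i^2)$ together with $\sum_i\omega_i^2(\tfrac12+\nu_i^2)=\kappa_2(\xi_h)/4$, and pull out a factor $(\max_i|\omega_i|)^{r-2}\le(\sum_i|\omega_i|^r)^{(r-2)/r}$; this yields a bound of the form $\bigl(\max_i|\omega_i|/\sqrt{\kappa_2(\xi_h)}\bigr)^{r-2}$, which tends to zero by the same hypothesis since $r-2\ge 1$. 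Combining the two pieces gives $\kappa_r(\xi_h)/\kappa_2(\xi_h)^{r/2}\to 0$ for every $r\ge 3$, which is precisely where the assumption ``for each integer $k\ge 3$'' is consumed.

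Finally, having shown that all cumulants of $T_h$ converge to those of $\mathcal{N}(0,1)$, namely $0,1,0,0,\ldots$, I would conclude by the method of moments: convergence of cumulants forces convergence of moments, and since the standard normal distribution is uniquely determined by its moments, $T_h\rightsquigarrow\mathcal{N}(0,1)$. The main obstacle is the presence of the noncentralities $\nu_i$, which a naive moment computation cannot absorb into the stated condition; the crux of the argument is to exploit the linearity of the chi-square cumulants in $\nu_i^2$ so that the single hypothesis on the weighted $\ell_k$ norms of $\omega$ simultaneously controls both the ``central'' term $\sum_i|\omega_i|^r$ and the ``noncentral'' term $\sum_i|\omega_i|^r\nu_i^2$.
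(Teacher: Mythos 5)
Your proposal is correct, but note that the paper itself never proves this lemma: it is imported verbatim from \cite{zhu2020} (``The following lemma from \cite{zhu2020} states the asymptotic normality\ldots''), so there is no in-paper argument to compare against, and your write-up supplies a self-contained proof of what the paper delegates to a reference. The cumulant route you take is sound. The formula $\kappa_r(\zeta_i^2)=2^{r-1}(r-1)!\,(1+r\nu_i^2)$ is the standard noncentral chi-square cumulant (with one degree of freedom and noncentrality $\nu_i^2$), scaling and additivity under independence give $\kappa_r(\xi_h)=2^{r-1}(r-1)!\sum_i\omega_i^r(1+r\nu_i^2)$, and your identity $\kappa_2(\xi_h)=4\sum_i\omega_i^2(\tfrac12+\nu_i^2)$ correctly matches the denominator of the stated hypothesis. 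The two-piece estimate is also right: the hypothesis at $k=r$ kills $\sum_i|\omega_i|^r/\kappa_2^{r/2}$, while the bound $\sum_i|\omega_i|^r\nu_i^2\le(\max_i|\omega_i|)^{r-2}\cdot\kappa_2/4$ combined with $\max_i|\omega_i|\le\left(\sum_i|\omega_i|^r\right)^{1/r}=o\!\left(\sqrt{\kappa_2}\right)$ kills the noncentral piece, and Fr\'echet--Shohat (normal law being moment-determinate) upgrades cumulant convergence to convergence in distribution. Two points deserve explicit mention in a polished version: first, the weights $\omega_i$ in the application are eigenvalues of $E^{\frac{1}{2}}\Gamma_{\mathcal{J}\mathcal{J}}E^{\frac{1}{2}}$ and can be negative, so the hypothesis as literally written (with $\omega_i^k$ for odd $k$) only makes sense under the $|\omega_i|$ reading you adopt --- you should state that interpretation as part of the lemma rather than in passing; second, you should record that $|\kappa_r(\xi_h)|\le 2^{r-1}(r-1)!\sum_i|\omega_i|^r(1+r\nu_i^2)$ is the inequality actually being driven to zero, since the signed cumulant itself is what enters the moment polynomials. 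With those clarifications, your argument stands as a complete proof of the cited lemma.
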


In plain language, Lemma \ref{asy_normal} says that a linear combination of a sufficient number of independent noncentral
Chi-square random variables is nearly normally distributed if there is no dominant term within these random variables. Notice that $\Delta v^{\mathscr{G}}(\bm{x},\bm{y})$ is  the sum of a linear combination of independent noncentral Chi-square random variables, a linear combination of independent normal random variables and a constant, where the first term is also independent to the second term. According to Lemma \ref{asy_normal} and the fact that the sum of two independent normal variables remains a normal variable, we have that $\Delta v^{\mathscr{G}}(\bm{x},\bm{y})$ is approximately normally distributed.

As remarked in \cite{zhu2020}, in general, a diversified optioned portfolio satisfies the conditions of asymptotic normal distribution. Thus we can reasonably approximate the conditional distribution of the value change of a relatively diversified optioned portfolio with a normal distribution and calculate its CoVaR as
\begin{eqnarray*}
&&CoVaR_q^{\Delta v(\bm{x},\bm{y})|\mathscr{G}}=\alpha_q \mathbb{V}^{\frac{1}{2}}\left(\Delta v(\bm{x},\bm{y})|\mathscr{G}\right)-\mathbb{E}(\Delta v(\bm{x},\bm{y})|\mathscr{G}).
\end{eqnarray*}
In the sequel, most of the analyses are based on the above formulation of systemic risk measure.

\subsection{Controllability of systemic risk for optioned portfolio}

In Section 2, we demonstrate that the systemic risk of pure stock portfolios is usually uncontrollable due to the contagion effect and the seesaw effect. Actually, the contagion effect is caused by the strong correlation among assets, while the seesaw effect is caused by the large loss of the distressed event. Now we show that the introduction of options into the portfolio can reduce both of these two effects, thus achieves the goal of systemic risk control.  To facilitate the comparison between portfolios with and without options, we first introduce the concept of optioned assets.

Suppose there are $n_i$ options written on stock $i$ ($n = \sum_{i=1}^m n_i$), and reorder these options from 1 to $n_i$. Denote the price of the $j$th option written on stock $i$ by $d_{ij}$ and the corresponding ``Greeks'' as $\delta_{ij}$, $\gamma_{ij}$ and $\theta_{ij}$, respectively.

\begin{definition}\label{option}
A portfolio of a given stock and the options written on it is called an optioned asset if its value is equal to the price of the stock. More specifically, the price of optioned asset $i\in\{1,\cdots,m\}$ is defined as
\begin{eqnarray*}
&&\varphi_i\triangleq b_ip_i+\sum_{j=1}^{n_i}a_{ij}d_{ij},
\end{eqnarray*}
where $b_i,a_{i1},\cdots,a_{in_i}$
are the amounts of stock and options such that
\begin{eqnarray}\label{price_equal}
&&b_ip_i+\sum_{j=1}^{n_i}a_{ij}d_{ij}=p_i.
\end{eqnarray}
\end{definition}

Restriction (\ref{price_equal}) ensures that the comparison of the stock portfolio and the optioned portfolio can be translated to the comparison of the stock portfolio and the portfolio of optioned assets.
Let $\bm{z}=(z_{i})_{m}$ be a vector of holding amounts of optioned assets in the portfolio. Notice that by condition (\ref{price_equal}), $\bm{z}$ can also denote a feasible stock portfolio with the same initial wealth, mathematically $\sum_{i=1}^{m}z_{i}\varphi_i=\sum_{i=1}^{m}z_{i}p_i$.  For given $b_i$ and $a_{ij}$, the portfolio of optioned assets $\Delta v(\bm{z})=\sum_{i=1}^mz_i\Delta\varphi_i$ is also an optioned portfolio with $x_{ij}=z_ia_{ij}$ and $y_i=z_ib_i$.

By Definition \ref{option}, the price change of optioned asset $i\in\{1,\cdots,m\}$ is given by
\begin{eqnarray*}
&&\Delta\varphi_i=\delta_i\Delta p_i+\frac{1}{2}\gamma_{i}\Delta p_{i}^2+\theta_{i}\Delta t,
\end{eqnarray*}
where
\begin{eqnarray}\label{greeks-1}
&&\delta_i=b_i+\sum_{j=1}^{n_i}a_{ij}\delta_{ij},~\gamma_i=\sum_{j=1}^{n_i}a_{ij}\gamma_{ij}~\mbox{ and }~\theta_i=\sum_{j=1}^{n_i}a_{ij}\theta_{ij}.
\end{eqnarray}

In the sequel, we illustrate the controllability of systemic risk for optioned portfolios from two aspects, i.e., the risk contagion among optioned assets and the extreme loss of each optioned asset. To this end, denote $\rho_{ij}$ and $\rho_{ij}^\mathscr{G}$ as the correlation coefficient and conditional correlation coefficient between stock $i$ and $j$, $\varrho_{ij}$ and $\varrho_{ij}^\mathscr{G}$ as the correlation coefficient and conditional correlation coefficient between optioned asset $i$ and $j$, respectively.

\begin{proposition}\label{corr}
For any given $b_i,  a_{i1},\cdots,a_{in_i}$ and $b_j,  a_{j1},\cdots,a_{jn_j} \in \mathbb{R}$, the following inequalities
\begin{eqnarray*}
&& |\varrho_{ij}|\leq|\rho_{ij}|
~\mbox{and}~
|\varrho_{ij}^\mathscr{G}|\leq|\rho_{ij}^\mathscr{G}|
\end{eqnarray*}
always hold. Furthermore, the strict inequalities hold under very mild conditions.
\end{proposition}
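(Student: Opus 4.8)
The plan is to exploit the structural fact that, by Definition~\ref{option} and the expression for $\Delta\varphi_i$, the price change $\Delta\varphi_i=\delta_i\Delta p_i+\tfrac12\gamma_i\Delta p_i^2+\theta_i\Delta t$ is a deterministic quadratic function of the \emph{single} scalar $\Delta p_i$ alone (the term $\theta_i\Delta t$ is a constant and drops out of every covariance). Consequently $\varrho_{ij}$ is the correlation between a quadratic function of $\Delta p_i$ and a quadratic function of $\Delta p_j$, where $(\Delta p_i,\Delta p_j)$ is a jointly normal pair with correlation $\rho_{ij}$ under Assumption~1. The whole proposition thus reduces to the principle that passing a jointly Gaussian pair through (quadratic) functions cannot increase the absolute value of the correlation.

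To make this quantitative I would standardize, writing $\Delta p_i=\mu_i+\sqrt{\sigma_{ii}}\,U$ and $\Delta p_j=\mu_j+\sqrt{\sigma_{jj}}\,V$ with $U,V\sim\mathcal N(0,1)$, $\mathrm{Corr}(U,V)=\rho_{ij}$, and expand each $\Delta\varphi$ in Hermite polynomials $H_0,H_1,H_2$ (using $U^2=H_2(U)+1$). Collecting terms gives the centered coefficients $a_1=\sqrt{\sigma_{ii}}(\delta_i+\gamma_i\mu_i)$, $a_2=\tfrac12\gamma_i\sigma_{ii}$ for asset $i$, and analogous $b_1,b_2$ for asset $j$. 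Invoking the orthogonality relation $\mathbb E[H_m(U)H_n(V)]=\delta_{mn}\,n!\,\rho_{ij}^{\,n}$ (a Mehler-type identity that separates the degrees), the moments collapse to $\mathrm{Cov}=a_1b_1\rho_{ij}+2a_2b_2\rho_{ij}^2$ and $\mathbb V(\Delta\varphi_i)=a_1^2+2a_2^2$ (likewise $b_1^2+2b_2^2$), so that $\varrho_{ij}=(a_1b_1\rho_{ij}+2a_2b_2\rho_{ij}^2)/\sqrt{(a_1^2+2a_2^2)(b_1^2+2b_2^2)}$. The same coefficients follow from a direct Isserlis/Wick computation of $\mathrm{Cov}(\Delta p_i,\Delta p_j^2)$, $\mathrm{Cov}(\Delta p_i^2,\Delta p_j^2)$, etc., which is the route I would take if I preferred not to quote the Hermite identity.

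The bound then comes from a short chain. Since $|\rho_{ij}|\le1$ we have $\rho_{ij}^2\le|\rho_{ij}|$, whence $|a_1b_1\rho_{ij}+2a_2b_2\rho_{ij}^2|\le|\rho_{ij}|\,(|a_1b_1|+2|a_2b_2|)$; applying Cauchy--Schwarz to the vectors $(|a_1|,\sqrt2\,|a_2|)$ and $(|b_1|,\sqrt2\,|b_2|)$ gives $|a_1b_1|+2|a_2b_2|\le\sqrt{(a_1^2+2a_2^2)(b_1^2+2b_2^2)}$, and multiplying through yields $|\varrho_{ij}|\le|\rho_{ij}|$. For the strict inequality I would track where the chain is tight: the step $\rho_{ij}^2\le|\rho_{ij}|$ is strict exactly when $0<|\rho_{ij}|<1$, and it genuinely bites as long as the quadratic coefficients do not vanish, i.e. $\gamma_i\neq0$ and $\gamma_j\neq0$. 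Hence $|\varrho_{ij}|<|\rho_{ij}|$ holds whenever both optioned assets carry nonzero gamma and $0<|\rho_{ij}|<1$, which are the advertised ``very mild conditions.''

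Finally, the conditional statement is identical in form. For $i,j\in\mathcal J$, conditioning on $\mathscr G=\{\Delta p_i=-k_i,\,i\in\mathcal I\}$ leaves $(\Delta p_i,\Delta p_j)$ jointly normal with conditional variances $e_{ii},e_{jj}$ and conditional correlation $\rho_{ij}^{\mathscr G}=e_{ij}/\sqrt{e_{ii}e_{jj}}$ read off from the matrix $E$ of Section~2; repeating the argument verbatim with these conditional moments yields $|\varrho_{ij}^{\mathscr G}|\le|\rho_{ij}^{\mathscr G}|$ together with its strict version. I expect the only real obstacle to be the bookkeeping that produces the clean covariance formula $a_1b_1\rho_{ij}+2a_2b_2\rho_{ij}^2$ --- that is, either invoking the degree-separating Hermite/Mehler identity correctly or carrying out the four Gaussian cross-moment computations --- after which the inequality itself is essentially a one-line Cauchy--Schwarz argument.
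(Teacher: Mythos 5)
Your proof is correct and follows essentially the same route as the paper: the paper computes ${\rm Cov}(\Delta\varphi_i,\Delta\varphi_j)=(\delta_i+\gamma_i\mu_i)(\delta_j+\gamma_j\mu_j)\sigma_{ij}+\tfrac{1}{2}\gamma_i\gamma_j\sigma_{ij}^2$ directly from the Gaussian moment identities ${\rm Cov}(\Delta\widetilde{p}_i,\Delta\widetilde{p}_j^2)=0$ and ${\rm Cov}(\Delta\widetilde{p}_i^2,\Delta\widetilde{p}_j^2)=2\sigma_{ij}^2$ --- exactly the Isserlis/Wick computation you name as your fallback, of which your Hermite/Mehler expansion is a repackaging --- and then concludes with the same three ingredients you use (triangle inequality, $\rho_{ij}^2\le|\rho_{ij}|$, and Cauchy--Schwarz applied to the pairs of linear and quadratic coefficients). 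Your sufficient condition for strictness ($\gamma_i\gamma_j\neq 0$ and $0<|\rho_{ij}|<1$) is consistent with, though slightly coarser than, the paper's step-by-step equality characterization, and both treatments dispatch the conditional case by repeating the argument with the conditional moments $(\bm{c},E)$.
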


\begin{proof}
Since the proof of the unconditional and conditional case are similar, we just provide the proof of the unconditional case. To make the proof simpler, we consider the correlation between assets indexed by 1 and 2 without loss of generality. Let $\Delta\widetilde{p}_1=\Delta p_1-\mu_1$ and $\Delta\widetilde{p}_2=\Delta p_2-\mu_2$. By the properties of multivariate normal distribution, we have ${\rm Cov}(\Delta\widetilde{p}_1,\Delta\widetilde{p}^2_2)={\rm Cov}(\Delta\widetilde{p}^2_1,\Delta\widetilde{p}_2)=0$ and ${\rm Cov}(\Delta\widetilde{p}^2_1,\Delta\widetilde{p}^2_2)=2\sigma_{12}^2$. Then the covariance of optioned assets 1 and 2 can be reformulated as
\begin{eqnarray*}
\text{Cov}(\Delta\varphi_1,\Delta\varphi_2)
&=&\delta_1\delta_2\text{Cov}\left(\Delta p_1,\Delta p_2\right)+\frac{1}{2}\delta_1\gamma_2\text{Cov}(\Delta p_1,\Delta p_2^2)+\frac{1}{2}\delta_2\gamma_1\text{Cov}(\Delta p_1^2,\Delta p_2)\\
&&+\frac{1}{4}\gamma_1\gamma_2\text{Cov}(\Delta p_1^2,\Delta p_2^2)\\
&=&\delta_1\delta_2\sigma_{12}+\delta_1\gamma_2\mu_2\sigma_{12}+\delta_2\gamma_1\mu_1\sigma_{12}+\frac{1}{4}(4\gamma_1\gamma_2\mu_1\mu_2\sigma_{12}+2\gamma_1\gamma_2\sigma_{12}^2)\\
&=&(\delta_1+\gamma_1\mu_1)(\delta_2+\gamma_2\mu_2)\sigma_{12}+\frac{1}{2}\gamma_1\gamma_2\sigma_{12}^2,
\end{eqnarray*}
where the second equality holds since
\begin{eqnarray*}
~{\rm Cov}(\Delta p_1,\Delta p^2_2)={\rm Cov}(\Delta\widetilde{p}_1,\Delta\widetilde{p}^2_2)+2\mu_2\sigma_{12} ~\mbox{and}~ {\rm Cov}(\Delta{p}^2_1,\Delta{p}^2_2)={\rm Cov}(\Delta\widetilde{p}^2_1,\Delta\widetilde{p}^2_2)+4\mu_1\mu_2\sigma_{12}.
\end{eqnarray*}
Notice that replacing the subscript 2 or 1 with 1 or 2 in the above equalities yields the variance of $\Delta\varphi_1$ or $\Delta\varphi_2$.

Let $\tau_i = \delta_i + \gamma_i\mu_i$, $i=1,2$. Then
\begin{eqnarray*}
\varrho_{12}&=&\frac{\text{Cov}(\Delta\varphi_1,\Delta\varphi_2)}{\sqrt{\mathbb{V}(\Delta\varphi_1)\mathbb{V}(\Delta\varphi_2)}}\\
&=&\frac{\frac{1}{2}\gamma_1\gamma_2\sigma_{12}+\tau_1\tau_2}
{\sqrt{(\frac{1}{2}\gamma_1^2\sigma_{11}+\tau_1^2)(\frac{1}{2}\gamma_2^2\sigma_{22}+\tau_2^2)}}\frac{\sigma_{12}}{\sqrt{\sigma_{11}\sigma_{22}}}\\
&=&\frac{\frac{1}{2}\gamma_1\gamma_2\sigma_{12}+\tau_1\tau_2}
{\sqrt{(\frac{1}{2}\gamma_1^2\sigma_{11}+\tau_1^2)(\frac{1}{2}\gamma_2^2\sigma_{22}+\tau_2^2)}}\rho_{12}.
\end{eqnarray*}
According to Cauchy inequality,
\begin{eqnarray*}
\sqrt{(\frac{1}{2}\gamma_1^2\sigma_{11}+\tau_1^2)(\frac{1}{2}\gamma_2^2\sigma_{22}+\tau_2^2)} &\geq& |\frac{1}{2}\gamma_1\gamma_2\sqrt{\sigma_{11}\sigma_{22}}|+|\tau_1\tau_2|\\
&\geq&|\frac{1}{2}\gamma_1\gamma_2\sqrt{\sigma_{11}\sigma_{22}}\rho_{12}|+|\tau_1\tau_2|\\
&\geq&|\frac{1}{2}\gamma_1\gamma_2\sigma_{12}+\tau_1\tau_2|.
\end{eqnarray*}
Thus we have $|\varrho_{12}|\leq|\rho_{12}|$. The first, second and third equalities hold if and only if $\left|\gamma_1\tau_2\sqrt{\sigma_{11}}\right|=\left|\gamma_2\tau_1\sqrt{\sigma_{22}}\right|$, $|\rho_{12}|=1$ and $\rho_{12}\tau_1\tau_2\gamma_1\gamma_2\geq0$, respectively. The proof is completed.
\end{proof}

\textbf{Remark 1.} By introducing options, we derive an optioned portfolio whose constituent assets (optioned assets) are less correlated. Therefore, the effect of risk contagion in the optioned portfolio can be lower than that in the stock portfolio.  We call this effect as {\it correlation hedging}.  Furthermore, based on Proposition \ref{corr}, we can pinpoint the extent to which correlations can be hedged under some conditions.

\begin{corollary}\label{perfect_hedge} If $\delta_i=0$ and $\gamma_i=0$, then ${\rm Cov}(\Delta\varphi_i,\Delta\varphi_j)=0$ for any $j\in\{1,\cdots,m\}$.
\end{corollary}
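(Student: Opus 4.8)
The plan is to read the conclusion off directly from the covariance identity established inside the proof of Proposition \ref{corr}. There, writing $\tau_i = \delta_i + \gamma_i\mu_i$, the two-asset computation (carried out for indices $1$ and $2$, but valid for any pair by relabeling) yields
\[
\text{Cov}(\Delta\varphi_i, \Delta\varphi_j) = \tau_i\tau_j\,\sigma_{ij} + \tfrac{1}{2}\gamma_i\gamma_j\,\sigma_{ij}^2 .
\]
So first I would recall this formula as the general-index version of what was shown in Proposition \ref{corr}. Then I would substitute the hypotheses $\delta_i = 0$ and $\gamma_i = 0$, which immediately give $\tau_i = \delta_i + \gamma_i\mu_i = 0$ together with $\gamma_i = 0$. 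Both summands on the right-hand side then vanish, irrespective of the values of $\tau_j$, $\gamma_j$, $\mu_j$, and $\sigma_{ij}$, so $\text{Cov}(\Delta\varphi_i, \Delta\varphi_j) = 0$ for every $j\in\{1,\dots,m\}$.

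A more transparent route, which I would include as the underlying intuition, is to go back to the expression for the price change of an optioned asset. Under $\delta_i = \gamma_i = 0$ it degenerates to
\[
\Delta\varphi_i = \delta_i\Delta p_i + \tfrac{1}{2}\gamma_i\Delta p_i^2 + \theta_i\Delta t = \theta_i\Delta t,
\]
a deterministic constant over the investment period. Since the covariance of a constant with any random variable is zero, the conclusion is immediate; this also explains why the result needs no assumption whatsoever on the parameters of asset $j$.

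There is essentially no obstacle here: the statement is a direct specialization of the covariance computation in Proposition \ref{corr}, and the degeneracy observation makes it transparent. The only point worth flagging is that the conclusion holds for \emph{every} $j$, including $j = i$ (so the optioned asset carries zero variance as well, consistent with $\Delta\varphi_i$ being constant). This is precisely what earns the name \emph{perfect hedge}: annihilating both the first- and second-order Greek exposures of optioned asset $i$ decouples it completely from all other assets in the portfolio.
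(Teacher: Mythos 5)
Your primary argument is precisely the paper's proof: it substitutes $\delta_i=\gamma_i=0$ into the covariance identity $\mathrm{Cov}(\Delta\varphi_i,\Delta\varphi_j)=(\delta_i+\gamma_i\mu_i)(\delta_j+\gamma_j\mu_j)\sigma_{ij}+\tfrac{1}{2}\gamma_i\gamma_j\sigma_{ij}^2$ established in the proof of Proposition~\ref{corr}, and observes that both terms vanish. Your supplementary observation --- that Delta-Gamma neutrality collapses the price change to the deterministic constant $\Delta\varphi_i=\theta_i\Delta t$, so its covariance with any random variable is zero --- is a valid and more transparent argument that the paper does not state explicitly; it has the added merit of explaining why no hypothesis on asset $j$ is required and why the case $j=i$ (zero variance) holds as well.
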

\begin{proof}
If $\delta_i=0$ and $\gamma_i=0$, by Proposition \ref{corr}, the covariance of optioned assets $i$ and $j\in\{1,\cdots,m\}$ turns to
\begin{eqnarray*}
&&\text{Cov}(\Delta\varphi_i,\Delta\varphi_j)=\frac{1}{2}\gamma_i\gamma_j\sigma_{ij}^2+(\gamma_i\mu_i+\delta_i)(\gamma_j\mu_j+\delta_j)\sigma_{ij}=0.
\end{eqnarray*}
The proof is completed.
\end{proof}
Corollary \ref{perfect_hedge} implies that the correlations of optioned assets can reach zeros
under some conditions, which are determined by the vector ($b_i,a_{i1},\cdots,a_{in_i}$) that defines the optioned asset.
By (\ref{price_equal}) and (\ref{greeks-1}), the sufficient condition for $\delta_i=0$ and $\gamma_i=0$ is that there exits a solution to linear equations $A{\bm x}=\bm{a}$ with respect to $\bm{x}$, where
\begin{eqnarray*}
&&A = \left(\begin{array}{cccc}p_i&d_{i1}&\cdots&d_{in_i}\\
1&\delta_{i1}&\cdots&\delta_{in_i}\\
0&\gamma_{i1}&\cdots&\gamma_{in_i}
\end{array}\right)
~\mbox{and}~ \bm{a}=
\left(\begin{array}{c}
p_i\\
0\\
0
\end{array}\right).
\end{eqnarray*}
Obviously, ${\rm Rank}(A)={\rm Rank}(A,\bm{a})$, or more strictly, ${\rm Rank}(A)=3$ is a sufficient condition to guarantee  $\delta_i=0$ and $\gamma_i=0$. In other words, if there exists three options written on stock $i$ with $(d_{ij},\delta_{ij},\gamma_{ij})$ $(j=1,2,3)$ being linearly independent, there exists $(b_i,a_{i1},\cdots,a_{in_i})$ such that $\delta_i=0$ and $\gamma_i=0$.

We call the condition in Corollary \ref{perfect_hedge}, $\delta_i=0$ and $\gamma_i=0$, as {\it Delta-Gamma neutrality}, which is a sufficient but not necessary condition for the zero correlation. However, as we show in the following, when all of the correlations are hedged to zeros, we can also derive Delta-Gamma neutrality for at least $m-2$ assets with some additional conditions.

\begin{corollary}\label{delta-gamma-hedge}
Suppose $m\geq3$ and $\rho_{ij}>0$ for any $i,j\in\{1,\cdots,m\}$. If ${\rm Cov}(\Delta\varphi_i,\Delta\varphi_j)=0$ for any $i,j\in\{1,\cdots,m\}$, then there exists a subset $\mathcal{D}\subseteq\{1,\cdots,m\}$ and $|\mathcal{D}|\geq m-2$ such that $\delta_i=0$ and $\gamma_i=0$ for any $i\in\mathcal{D}$.
\end{corollary}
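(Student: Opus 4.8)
The plan is to turn the zero-correlation hypothesis into a small algebraic system via the covariance formula from Proposition \ref{corr}, and then argue by contradiction that at most two indices can fail Delta-Gamma neutrality. Writing $\tau_i=\delta_i+\gamma_i\mu_i$ as in the proof of Proposition \ref{corr}, that proof yields
\[
\text{Cov}(\Delta\varphi_i,\Delta\varphi_j)=\tau_i\tau_j\sigma_{ij}+\frac{1}{2}\gamma_i\gamma_j\sigma_{ij}^2 .
\]
Since $\rho_{ij}>0$ and the variances are positive (Assumption 1 makes $\Sigma$ nonsingular), each off-diagonal entry satisfies $\sigma_{ij}>0$, so for $i\neq j$ the hypothesis ${\rm Cov}(\Delta\varphi_i,\Delta\varphi_j)=0$ is equivalent, after dividing by $\sigma_{ij}$, to $\tau_i\tau_j+\frac{1}{2}\gamma_i\gamma_j\sigma_{ij}=0$. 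I would also record the elementary equivalence that $\delta_i=0,\gamma_i=0$ holds if and only if $\tau_i=0,\gamma_i=0$, which is immediate because $\tau_i=\delta_i$ whenever $\gamma_i=0$. Hence it suffices to show that at most two indices are \emph{active}, in the sense $(\tau_i,\gamma_i)\neq(0,0)$; the remaining $m-2$ or more indices then constitute the set $\mathcal{D}$.

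Next I would suppose, for contradiction, that three indices are active, say $1,2,3$, and split into two cases according to the $\gamma$'s. If all three satisfy $\gamma_i\neq 0$, then each relation $\tau_i\tau_j=-\frac{1}{2}\gamma_i\gamma_j\sigma_{ij}$ has nonzero right-hand side, so all $\tau_i\neq0$; multiplying the three relations for the pairs $(1,2),(1,3),(2,3)$ gives
\[
(\tau_1\tau_2\tau_3)^2=-\frac{1}{8}(\gamma_1\gamma_2\gamma_3)^2\sigma_{12}\sigma_{13}\sigma_{23},
\]
whose left side is nonnegative while the right side is strictly negative (all $\gamma_i\neq0$ and all $\sigma_{ij}>0$), a contradiction. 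In the complementary case some active index has $\gamma_i=0$, say $\gamma_1=0$ with $\tau_1\neq0$; the relations for $(1,2)$ and $(1,3)$ then force $\tau_2=\tau_3=0$, after which the relation for $(2,3)$ reduces to $\frac{1}{2}\gamma_2\gamma_3\sigma_{23}=0$ and hence $\gamma_2\gamma_3=0$, making one of indices $2,3$ inactive---again a contradiction. Therefore the active set has cardinality at most two, and taking $\mathcal{D}$ to be its complement completes the argument.

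The routine parts are the reduction and the multiplication trick; the only point requiring care is the mixed case in which an active index carries $\gamma_i=0$, since there the product identity degenerates and one must instead propagate $\tau_1\neq0$ through the pairwise relations to collapse the other two indices. The strict positivity $\sigma_{ij}>0$ supplied by the hypothesis $\rho_{ij}>0$ is exactly what drives both cases, so I would make sure to invoke it explicitly; without it (e.g.\ some $\sigma_{ij}=0$) the conclusion can fail.
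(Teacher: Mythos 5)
Your proposal is correct and follows essentially the same route as the paper's proof: the same covariance identity $\mathrm{Cov}(\Delta\varphi_i,\Delta\varphi_j)=\tau_i\tau_j\sigma_{ij}+\tfrac{1}{2}\gamma_i\gamma_j\sigma_{ij}^2$, the same multiplication of the three pairwise relations to exploit $\sigma_{12}\sigma_{23}\sigma_{31}>0$, and the same propagation through the relations once some $\gamma_i$ vanishes. The only difference is organizational---the paper first concludes $\gamma_1\gamma_2\gamma_3=0$ from the product identity and then runs one chain of implications, whereas you split explicitly into the ``all $\gamma_i\neq 0$'' and ``some $\gamma_i=0$'' cases---which is purely cosmetic.
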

\begin{proof}
See Appendix B.
\end{proof}
\textbf{Remark 2.} Corollaries \ref{perfect_hedge} and \ref{delta-gamma-hedge} illustrate the possibility to make all correlations of optioned assets to zeros, which can sufficiently reduce the risk contagion among assets. Notice that the portfolio of the optioned assets achieving Delta-Gamma neutrality can only earns a risk-free return. Therefore, compared with the lower bound of systemic risk of stock portfolios in Proposition \ref{controlability}, the systemic risk of the optioned portfolio can be controlled to a much lower level, even to zero, if all correlations of optioned assets are hedged to zeros.  In addition, Corollaries \ref{perfect_hedge} and \ref{delta-gamma-hedge} just depict an extreme case, which also indicates that, with options, we can hedge the correlation to any degree. Notice that the assumption $\rho_{ij}>0$ is the true for most case in practice, i.e., most of the stocks are positively correlated.

It is worth noting that the correlation hedging has no other restrictions on the holding amount of options. However, this does not mean that systemic risk can be reduced by adding any amount of options to the portfolio. The leverage effect of options on the other hand increases the volatility of the portfolio, and finally might increase the systemic risk. Therefore, we also need to verify the return and risk profile of optioned assets compared with stock assets.

\begin{proposition}\label{dominate}
Under some mild conditions, there exist $(\delta_i,\gamma_i,\theta_i)\in\mathbb{R}^{3}$ for $i\in\{1,\cdots,m\}$ such that
\begin{enumerate}[(i)]
    \item
$\mathbb{E}(\Delta\varphi_i)>\mathbb{E}(\Delta p_i)$  and
$\mathbb{V}(\Delta\varphi_i)<\mathbb{V}(\Delta p_i)$;
    \item
$\mathbb{E}(\Delta\varphi_i|\mathscr{G})>\mathbb{E}(\Delta p_i|\mathscr{G})$ and
$\mathbb{V}(\Delta\varphi_i|\mathscr{G})<\mathbb{V}(\Delta p_i|\mathscr{G})$ for $i\in\mathcal{J}$, and $\mathbb{E}(\Delta\varphi_i|\mathscr{G})>\mathbb{E}(\Delta p_i|\mathscr{G})$ for $i\in\mathcal{I}$, given the condition $\mathscr{G}=\{\Delta p_i=-k_i,i\in\mathcal{I}\}$.
\end{enumerate}
\end{proposition}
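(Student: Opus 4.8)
The plan is to reduce both parts to elementary Gaussian moment computations and then to exhibit an explicit feasible triple $(\delta_i,\gamma_i,\theta_i)$. First I would compute the relevant moments of $\Delta\varphi_i=\delta_i\Delta p_i+\frac{1}{2}\gamma_i\Delta p_i^2+\theta_i\Delta t$. Since $\Delta p_i\sim\mathcal N(\mu_i,\sigma_{ii})$, the identity $\mathbb E(\Delta p_i^2)=\mu_i^2+\sigma_{ii}$ together with the covariance computation already carried out in the proof of Proposition~\ref{corr} (specialized to a single index) gives
\[
\mathbb E(\Delta\varphi_i)=\delta_i\mu_i+\frac{1}{2}\gamma_i(\mu_i^2+\sigma_{ii})+\theta_i\Delta t,\qquad \mathbb V(\Delta\varphi_i)=\tau_i^2\sigma_{ii}+\frac{1}{2}\gamma_i^2\sigma_{ii}^2,
\]
where $\tau_i=\delta_i+\gamma_i\mu_i$. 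For part (ii) I would use that, conditionally on $\mathscr G$, the marginal law of $\Delta p_i$ for $i\in\mathcal J$ is $\mathcal N(c_i,e_{ii})$, so the two formulas hold verbatim with $(\mu_i,\sigma_{ii})$ replaced by $(c_i,e_{ii})$; for $i\in\mathcal I$ the variable $\Delta p_i\equiv-k_i$ is degenerate, its conditional variance vanishes identically, and only the conditional-mean inequality $\mathbb E(\Delta\varphi_i|\mathscr G)=-\delta_ik_i+\frac{1}{2}\gamma_ik_i^2+\theta_i\Delta t>-k_i$ remains to be arranged, matching the asymmetric statement of (ii).

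Next I would dispose of the variance requirements. Because $\Sigma$ and $E$ are nonsingular, $\sigma_{ii}>0$ and $e_{ii}>0$, so $\mathbb V(\Delta\varphi_i)<\mathbb V(\Delta p_i)=\sigma_{ii}$ is equivalent to $\tau_i^2+\frac{1}{2}\gamma_i^2\sigma_{ii}<1$, and its conditional analogue for $i\in\mathcal J$ to $(\delta_i+\gamma_ic_i)^2+\frac{1}{2}\gamma_i^2e_{ii}<1$. Both describe nonempty open neighborhoods of the origin; in particular every pair with $\gamma_i=0$ and $|\delta_i|<1$ satisfies the two variance inequalities simultaneously, so the variance halves of (i) and (ii) are never the binding constraints. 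With such a feasible $(\delta_i,\gamma_i)$ fixed, I would turn to the mean inequalities: each of them — the unconditional one, the conditional one for $i\in\mathcal J$, and the conditional one for $i\in\mathcal I$ — is affine in $\theta_i$ with the common positive coefficient $\Delta t$, hence of the form $\theta_i\Delta t>(\text{constant in }\delta_i,\gamma_i\text{ and the market data})$. Since there are only finitely many such constraints per asset, one sufficiently large $\theta_i$ settles them all, which completes a proof in the version where $(\delta_i,\gamma_i,\theta_i)$ range freely over $\mathbb R^3$.

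The hard part, and the reason I expect the qualifier ``under some mild conditions'' to be needed, is that $\theta_i$ should not be treated as genuinely free: the Greeks $(\delta_i,\gamma_i,\theta_i)$ must be realizable by an arbitrage-free combination of the traded stock and options, which ties the time-value term to the other two through a relation of Black--Scholes type, roughly $\theta_i\Delta t=rp_i(1-\delta_i)\Delta t-\frac{1}{2}\gamma_i\sigma_{ii}$. Substituting this, the Jensen term $\frac{1}{2}\gamma_i\sigma_{ii}$ cancels in the mean, and the unconditional inequality collapses to $\frac{1}{2}\gamma_i\mu_i^2>(1-\delta_i)(\mu_i-rp_i\Delta t)$, so the convexity $\gamma_i$ must be large enough to outweigh the fraction of the stock's risk premium surrendered by deleveraging. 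The crux is then to show that this inequality and its conditional counterparts can hold while still respecting $\tau_i^2+\frac{1}{2}\gamma_i^2\sigma_{ii}<1$ and $(\delta_i+\gamma_ic_i)^2+\frac{1}{2}\gamma_i^2e_{ii}<1$. I expect the ``mild conditions'' to be precisely what guarantees a nonempty intersection of these regions — a strictly positive risk premium $\mu_i-rp_i\Delta t>0$, enough available option convexity, and a short enough horizon $\Delta t$ — and carefully delineating those conditions, rather than the moment algebra itself, is where the real work lies.
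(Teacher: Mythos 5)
Your first two paragraphs are, up to notation, exactly the paper's proof: the same moment formulas for $\Delta\varphi_i$, the same observation that the unconditional and conditional variance requirements are strict quadratic inequalities cutting out a nonempty open neighborhood of the origin in the $(\delta_i,\gamma_i)$-plane (the paper takes a point sufficiently close to $\bm{0}$; your choice $\gamma_i=0$, $|\delta_i|<1$ is such a point), and the same final step of choosing a single $\theta_i$ large enough to satisfy the finitely many mean inequalities, each of which is affine in $\theta_i$ with slope $\Delta t>0$. This is a complete proof of the proposition as stated, since the statement quantifies over free triples $(\delta_i,\gamma_i,\theta_i)\in\mathbb{R}^3$, and it is precisely the argument the paper gives (including the split into $i\in\mathcal{J}$ and the degenerate case $i\in\mathcal{I}$, where only the conditional-mean inequality survives).

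Your third paragraph, however, misplaces the difficulty: the paper's proof never imposes a no-arbitrage tie between $\theta_i$ and $(\delta_i,\gamma_i)$, so the ``version where the Greeks range freely over $\mathbb{R}^3$'' is not a weakened version of the result --- it \emph{is} the result. Realizability of the target Greeks by traded instruments is deferred to the discussion following Proposition~\ref{dominate}, where the sufficient condition is solvability of the linear system $A\bm{x}=\bm{a}$ whose rows are the prices, deltas, gammas and thetas of the stock and its options, e.g.\ ${\rm Rank}(A)={\rm Rank}(A,\bm{a})$ or, more strongly, ${\rm Rank}(A)=4$. Your Black--Scholes observation is best read as a critique of that remark rather than of the proposition: under exact Black--Scholes (or Bachelier) pricing, every instrument's theta is the same affine combination of its price, delta and gamma, so the theta row of $A$ is linearly dependent on the other three, ${\rm Rank}(A)=4$ can never hold, and solvability forces the target triple to satisfy the same affine relation --- which is exactly the substitution you perform. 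That is a worthwhile point about when the ``mild conditions'' are attainable with consistently priced options (and about why rank~$4$ is too strong a sufficient condition to hope for), but it is not a gap in the proof you gave, nor in the paper's.
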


\begin{proof}
Notice that $\text{Cov}(\Delta p_i,\Delta p_i^2)=2\sigma_{ii}\mu_i$ and $\mathbb{V}(\Delta p_i^2)=2\sigma_{ii}^2+4\sigma_{ii}\mu_i^2$. For any $i\in\{1,\cdots,m\}$, the unconditional mean and variance of $\Delta\varphi_i$ are given by
\begin{eqnarray*}
&&\mathbb{E}(\Delta\varphi_i)=\mu_i\delta_i+\frac{\sigma_{ii}+\mu_i^2}{2}\gamma_i+\theta_i\Delta t~\mbox{and}~
\mathbb{V}(\Delta\varphi_i)=\frac{\sigma_{ii}^2}{2}\gamma_i^2+\sigma_{ii}\left(\mu_i\gamma_i+\delta_i\right)^2.
\end{eqnarray*}
Thus $\mathbb{E}(\Delta\varphi_i)>\mathbb{E}(\Delta p_i)$  and
$\mathbb{V}(\Delta\varphi_i)<\mathbb{V}(\Delta p_i)$ can be reformulated as
\begin{eqnarray}\label{P4.1}
&&\mu_i\delta_i+\frac{\sigma_{ii}+\mu_i^2}{2}\gamma_i+\theta_i\Delta t-\mu_i>0~\mbox{and}~\frac{\sigma_{ii}}{2}\gamma_i^2+\left(\mu_i\gamma_i+\delta_i\right)^2-1<0.
\end{eqnarray}

First, we prove that both $(i)$ and $(ii)$ hold for $i\in\mathcal{J}$. Conditioning on $\mathscr{G}$, the conditional mean and conditional variance of $\Delta\varphi_i$ are given by
\begin{eqnarray*}
&&\mathbb{E}(\Delta\varphi_i|\mathscr{G})= c_i\delta_i+\frac{e_{ii}+c_i^2}{2}\gamma_i+\theta_i\Delta t~\mbox{and}~
\mathbb{V}(\Delta\varphi_i|\mathscr{G})=\frac{e_{ii}^2}{2}\gamma_i^2+e_{ii}\left(c_i\gamma_i+\delta_i\right)^2.
\end{eqnarray*}
Then $\mathbb{E}(\Delta\varphi_i|\mathscr{G})>\mathbb{E}(\Delta p_i|\mathscr{G})$ and
$\mathbb{V}(\Delta\varphi_i|\mathscr{G})<\mathbb{V}(\Delta p_i|\mathscr{G})$ can be reformulated as
\begin{eqnarray}\label{P4.2}
&&c_i\delta_i+\frac{e_{ii}+c_i^2}{2}\gamma_i+\theta_i\Delta t-c_i>0~\mbox{and}~\frac{e_{ii}}{2}\gamma_i^2+\left(c_i\gamma_i+\delta_i\right)^2-1<0.
\end{eqnarray}
We prove that there exists $(\delta_i,\gamma_i,\theta_i)$ such that both inequalities (\ref{P4.1}) and (\ref{P4.2}) hold. To facilitate the proof, we reorder them as the following inequality system
\begin{eqnarray}
\mu_i\delta_i+\frac{\sigma_{ii}+\mu_i^2}{2}\gamma_i+\theta_i\Delta t-\mu_i>0,&&c_i\delta_i+\frac{e_{ii}+c_i^2}{2}\gamma_i+\theta_i\Delta t-c_i>0,\label{P4.3}\\
\frac{\sigma_{ii}}{2}\gamma_i^2+\left(\mu_i\gamma_i+\delta_i\right)^2-1<0,&&\frac{e_{ii}}{2}\gamma_i^2+\left(c_i\gamma_i+\delta_i\right)^2-1<0.\label{P4.4}
\end{eqnarray}
We firstly prove that inequalities in (\ref{P4.4}) are true for some $\delta_i$ and $\gamma_i$. Based on it, then we prove that inequalities in (\ref{P4.3}) are true.

Let $\bm{x}=(\delta_i,\gamma_i)^{\top}$. Then inequalities (\ref{P4.4}) can be reformulated as
\begin{eqnarray}\label{P4.5}
&&\bm{x}^{\top}A_1\bm{x}-1<0~\mbox{and}~\bm{x}^{\top}A_2\bm{x}-1<0,
\end{eqnarray}
where $A_1 = \left(\begin{array}{cc}1&\mu_i\\\mu_i&\frac{\sigma_{ii}+2\mu_i^2}{2}\end{array}\right)$ and $A_2 = \left(\begin{array}{cc}1&c_i\\c_i&\frac{e_{ii}+2c_i^2}{2}\end{array}\right)$ are both definite matrices. Obviously, there exists at least one feasible solution to inequality system (\ref{P4.5}). For instance, $\bm{x}=\bm{0}$ or other vectors that are sufficiently close to it.

Suppose $(\overline{\delta}_i,\overline{\gamma}_i)$ is a feasible solution to (\ref{P4.5}). Replacing it into to (\ref{P4.3}) derives
\begin{eqnarray*}
&&\mu_i\overline{\delta}_i+\frac{\sigma_{ii}+\mu_i^2}{2}\overline{\gamma}_i+\theta_i\Delta t-\mu_i>0~\mbox{and}~
c_i\overline{\delta}_i+\frac{e_{ii}+c_i^2}{2}\overline{\gamma}_i+\theta_i\Delta t-c_i>0
\end{eqnarray*}
with respect to $\theta_i$, which holds  for $\theta_i=\overline{\theta}_i$, where
\begin{eqnarray*}
&&\overline{\theta}_i>\max\left\{(1-\overline{\delta}_i)\mu_i-\frac{\sigma_{ii}+\mu_i^2}{2}\overline{\gamma}_i,(1-\overline{\delta}_i)c_i-\frac{e_{ii}+c_i^2}{2}\overline{\gamma}_i\right\}/\Delta t.
\end{eqnarray*}
Thus the conclusion for $i\in\mathcal{J}$ holds.

Next, we prove that both $(i)$ and $(ii)$  hold for $i\in\mathcal{I}$. Given $\mathscr{G}$, $\Delta\varphi_i$ and $\Delta p_i$ are both constants and $\mathbb{E}(\Delta\varphi_i|\mathscr{G})>\mathbb{E}(\Delta p_i|\mathscr{G})$ can be reformulated as
\begin{eqnarray}\label{P4.7}
&&-k_i\delta_i+\frac{k_i^2}{2}\gamma_i+\theta_i\Delta t+k_i>0.
\end{eqnarray}
Combining with $\mathbb{E}(\Delta\varphi_i)>\mathbb{E}(\Delta p_i)$ and $\mathbb{V}(\Delta\varphi_i)<\mathbb{V}(\Delta p_i)$, we need to find a feasible solution $(\delta_i,\gamma_i,\theta_i)$ to make both of (\ref{P4.7}) and (\ref{P4.1}) hold, which are equivalent to
\begin{eqnarray}\label{P4.8}
&&-k_i\delta_i+\frac{k_i^2}{2}\gamma_i+\theta_i\Delta t+k_i>0,~\mu_i\delta_i+\frac{\sigma_{ii}+\mu_i^2}{2}\gamma_i+\theta_i\Delta t-\mu_i>0
\end{eqnarray}
and
\begin{eqnarray}\label{P4.9}
&&\frac{\sigma_{ii}}{2}\gamma_i^2+\left(\mu_i\gamma_i+\delta_i\right)^2-1<0.
\end{eqnarray}
Similar to (\ref{P4.4}), there must exist a solution $(\overline{\delta}_i,\overline{\gamma}_i)$ to make (\ref{P4.9}) hold. Given $(\overline{\delta}_i,\overline{\gamma}_i)$, inequalities (\ref{P4.8}) hold with $\theta_i=\overline{\theta}_i$, where
\begin{eqnarray*}
&&\overline{\theta}_i>\max\left\{(\overline{\delta}_i-1)k_i-\frac{k_i^2}{2}\overline{\gamma}_i,(1-\overline{\delta}_i)\mu_i-\frac{\sigma_{ii}+\mu_i^2}{2}\overline{\gamma}_i\right\}/\Delta t.
\end{eqnarray*}
The proof is completed.
\end{proof}

\textbf{Remark 3.} Proposition \ref{dominate} shows that the introduction of options can alleviate the impact of systemic risk on each single stock. For the stocks suffered initial losses, i.e., stocks indexed by $\mathcal{I}$, options can hedge the extreme losses. For stocks affected by the initial losses, or stocks indexed by $\mathcal{J}$, options can alleviate the impact of systemic risk by increasing their conditional mean. In addition, options can reduce the uncertainty by decreasing the conditional variance, and finally hedge the extreme losses. We call this effect as {\it extreme loss hedging}. Besides the merits in terms of systemic risk control, the introduction of options also enhances the return-risk performance.

The sufficient condition for the existence of ($\delta_i,\gamma_i,\theta_i$) in Proposition \ref{dominate} is that there exists a solution $(b_i,a_{i1},\cdots,a_{in_i})$ to the linear system $A\bm{x}=\bm{a}$, where $A$ and $\bm{a}$ are given by
\begin{eqnarray*}
&&A = \left(\begin{array}{cccc}p_i&d_{i1}&\cdots&d_{in_i}\\
1&\delta_{i1}&\cdots&\delta_{in_i}\\
0&\gamma_{i1}&\cdots&\gamma_{in_i}\\
0&\theta_{i1}&\cdots&\theta_{in_i}
\end{array}\right)
~\mbox{and}~ \bm{a}=
\left(\begin{array}{c}
p_i\\
{\delta}_i\\
{\gamma}_i\\
{\theta}_i
\end{array}\right).
\end{eqnarray*}
Similar to the discussion of Corollary \ref{perfect_hedge}, a sufficient condition of the feasibility of $({\delta}_i,{\gamma}_i,{\theta}_i)$ is that ${\rm Rank}(A)={\rm Rank}(A,\bm{a})$ or more strictly ${\rm Rank}(A)=4$.

In Propositions \ref{corr} and \ref{dominate}, we derive the correlation hedging effect and extreme loss hedging effect. Based on these two effects, now we can turn the comparison of optioned assets and stock assets to the comparison of optioned portfolios and stock portfolios.

\begin{corollary}\label{frontier} Suppose the conditions for Proposition \ref{dominate} are satisfied and further
assume that $\bm{y}>\bm{0}$, $\rho_{ij}>0$ and $\rho^{\mathscr{G}}_{ij}>0$ for any $i,j\in\{1,\cdots,m\}$. Then, for any given stock portfolio $\bm{y}$, there exists a corresponding optioned portfolio $\bm{z}$ such that
\begin{eqnarray*}
&&\mathbb{E}(\Delta v(\bm{y})) < \mathbb{E}(\Delta v(\bm{z})),~
\mathbb{V}(\Delta v(\bm{y})) > \mathbb{V}(\Delta v(\bm{z}))~\mbox{and}~CoVaR_q^{\Delta v(\bm{y})|\mathscr{G}} > CoVaR_q^{\Delta v(\bm{z})|\mathscr{G}}.
\end{eqnarray*}
\end{corollary}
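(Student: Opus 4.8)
The plan is to take the optioned portfolio with the same holding vector as the given stock portfolio, namely $\bm{z}=\bm{y}$, where each stock $i$ is replaced by the optioned asset $i$ built from the Greeks $(\delta_i,\gamma_i,\theta_i)$ furnished by Proposition \ref{dominate}. Because the price-equality constraint (\ref{price_equal}) gives $\varphi_i=p_i$, the initial wealth is preserved, so $\bm{z}=\bm{y}$ is an admissible optioned portfolio with $\Delta v(\bm{z})=\sum_{i}y_i\Delta\varphi_i$. The whole argument then reduces the three portfolio-level comparisons to the per-asset comparisons already established in Propositions \ref{corr} and \ref{dominate}, aggregated with the strictly positive weights $y_i>0$.

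For the mean I would write $\mathbb{E}(\Delta v(\bm{z}))=\sum_i y_i\mathbb{E}(\Delta\varphi_i)$ and $\mathbb{E}(\Delta v(\bm{y}))=\sum_i y_i\mathbb{E}(\Delta p_i)$; since Proposition \ref{dominate}(i) gives $\mathbb{E}(\Delta\varphi_i)>\mathbb{E}(\Delta p_i)$ for every $i$ and $y_i>0$, the first inequality follows term by term. For the unconditional variance I would split $\mathbb{V}(\Delta v(\cdot))=\sum_i y_i^2\mathbb{V}(\cdot)+\sum_{i\ne j}y_iy_j{\rm Cov}(\cdot,\cdot)$ into diagonal and off-diagonal parts. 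The diagonal terms strictly decrease by Proposition \ref{dominate}(i). For the off-diagonal terms I would write ${\rm Cov}(\Delta\varphi_i,\Delta\varphi_j)=\varrho_{ij}\sqrt{\mathbb{V}(\Delta\varphi_i)\mathbb{V}(\Delta\varphi_j)}$ and chain two facts: Proposition \ref{corr} with $\rho_{ij}>0$ gives $\varrho_{ij}\le|\varrho_{ij}|\le|\rho_{ij}|=\rho_{ij}$, while Proposition \ref{dominate}(i) gives $\sqrt{\mathbb{V}(\Delta\varphi_i)\mathbb{V}(\Delta\varphi_j)}<\sqrt{\sigma_{ii}\sigma_{jj}}$, so ${\rm Cov}(\Delta\varphi_i,\Delta\varphi_j)<\rho_{ij}\sqrt{\sigma_{ii}\sigma_{jj}}={\rm Cov}(\Delta p_i,\Delta p_j)$. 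Since all weights $y_iy_j>0$, summing yields $\mathbb{V}(\Delta v(\bm{z}))<\mathbb{V}(\Delta v(\bm{y}))$.

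The CoVaR comparison then follows from the analytic form $CoVaR_q^{\Delta v(\bm{w})|\mathscr{G}}=\alpha_q\mathbb{V}^{\frac{1}{2}}(\Delta v(\bm{w})|\mathscr{G})-\mathbb{E}(\Delta v(\bm{w})|\mathscr{G})$, so it suffices to show the conditional mean strictly increases and the conditional variance strictly decreases. For the conditional mean, aggregating the two mean statements of Proposition \ref{dominate}(ii) (covering both $i\in\mathcal{J}$ and $i\in\mathcal{I}$) with positive weights gives $\mathbb{E}(\Delta v(\bm{z})|\mathscr{G})>\mathbb{E}(\Delta v(\bm{y})|\mathscr{G})$. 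For the conditional variance I would note that under $\mathscr{G}$ the assets in $\mathcal{I}$ are frozen at $-k_i$ and contribute nothing, so only $i,j\in\mathcal{J}$ enter; I would then repeat the diagonal/off-diagonal split using the conditional variance reduction of Proposition \ref{dominate}(ii) together with the conditional correlation bound $|\varrho_{ij}^{\mathscr{G}}|\le|\rho_{ij}^{\mathscr{G}}|$ and $\rho_{ij}^{\mathscr{G}}>0$, obtaining $\mathbb{V}(\Delta v(\bm{z})|\mathscr{G})<\mathbb{V}(\Delta v(\bm{y})|\mathscr{G})$. Since $\alpha_q\ge0$, the variance term does not raise CoVaR and the strict increase of the conditional mean already forces $CoVaR_q^{\Delta v(\bm{z})|\mathscr{G}}<CoVaR_q^{\Delta v(\bm{y})|\mathscr{G}}$.

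The hard part will be the off-diagonal covariance step, and specifically recognizing why the positivity hypotheses $\rho_{ij}>0$ and $\rho_{ij}^{\mathscr{G}}>0$ are indispensable. Proposition \ref{corr} only bounds $|\varrho_{ij}|$ by $|\rho_{ij}|$, so without $\rho_{ij}>0$ one could have $\varrho_{ij}>\rho_{ij}$ (a less negative, hence larger, correlation); combined with the positive weights $y_iy_j$, a negative baseline covariance could then grow, working against the desired variance reduction. The positivity assumptions exactly guarantee $\varrho_{ij}\le\rho_{ij}$, so that shrinking both the correlation and the individual variances pushes every covariance term strictly downward, and likewise in the conditional case. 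The remaining steps are routine term-by-term aggregations enabled by $\bm{y}>\bm{0}$.
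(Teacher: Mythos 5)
Your proposal is correct and follows essentially the same route as the paper's own proof: take $\bm{z}=\bm{y}$, aggregate the per-asset mean/variance dominance of Proposition \ref{dominate} with the strictly positive weights, control the cross terms via Proposition \ref{corr} together with $\rho_{ij}>0$ and $\rho^{\mathscr{G}}_{ij}>0$ (so that $\varrho_{ij}\leq\rho_{ij}$), and conclude through the analytic CoVaR formula $\alpha_q\mathbb{V}^{\frac{1}{2}}(\cdot|\mathscr{G})-\mathbb{E}(\cdot|\mathscr{G})$. Your explicit diagonal/off-diagonal split and the remark on why the positivity hypotheses are indispensable are only presentational refinements of the paper's single double-sum comparison, not a different argument.
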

\begin{proof}
See Appendix C.
\end{proof}
Corollary \ref{frontier} indicates that the introduction of options not only prevents the portfolio from extreme losses, but also ensures a better return-risk performance. By expanding the feasibility of the portfolio, the optioned portfolio has more investment opportunity to alleviate the contagion effect and seesaw effect.

In practice, there might be some additional restrictions on ($b_i,a_{i1},\cdots,a_{in_i}$). For instance, when short selling is prohibited, which is a very common trading rule in practice, we need to impose the non-negative constraint on the position of stock, i.e., $b_i\geq 0$ to ensure that the optioned assets are tradable.

\subsection{Optimization of optioned portfolio}
The previous analysis illustrates that options can improve the performance of portfolios with systemic risk control from several aspects. In this section, we derive the tractable reformulation of optimization of optioned portfolios with systemic risk control. Again we mention that, without loss of generality, we discuss the case when there is only one systemic risk constraint.

For practical application, we specify $\Omega$, the feasible set of portfolios, as
\begin{eqnarray*}
&&\Omega=\left\{(\bm{x},\bm{y}):(\bm{d}^{ask})^\top(\bm{x}-\bm{x}^0)^++(\bm{d}^{bid})^\top(\bm{x}-\bm{x}^0)^-+\bm{p}^\top(\bm{y}-\bm{y}^0)\leq k^0,\right.\\
&&~~~~~~~~~~~~~~~~~\left.d^{bid}_{i}x_i\geq l_{d_i}, d^{ask}_{i}x_i\leq u_{d_i},l_{p_i}\leq p_{i}y_i\leq u_{p_i}\right\}.
\end{eqnarray*}
where $\bm{x}^0$ and $\bm{y}^0$ are the holding amount of options and stocks at the beginning of investment period, $k^0$ is the cash on hand at the beginning of investment period, $l_{d_i}, l_{p_i}$ and $u_{d_i}, u_{p_i}$ are the lower bound and upper bound of the investment amount of options and underlying stocks, respectively, $\bm{d}^{ask}$, $\bm{d}^{bid}$ and $\bm{p}$ are the ask price, bid price vector of options and price vector of stock, and $\bm{x}^+$ and $\bm{x}^-$ are the vectors with $\max\{x_i,0\}$ and $\min\{x_i,0\}$ being the $i$th element. It is worth mentioning that the above $\Omega$ is set in a very general case for portfolio rebalance. To construct a totally new portfolio without an existing portfolio, we just need to set $\bm{x}^0=\bm 0$ and $\bm{y}^0=\bm 0$ in $\Omega$.

We show that the first and the second constraints in problem ($P$) can be both reformulated to second-order cone constraints. According to Proposition 1 of \cite{zhu2020}, the mean and variance of $\Delta v(\bm{x},\bm{y})$ are
\begin{eqnarray*}
\mathbb{E}(\Delta v(\bm{x},\bm{y}))&=&\bm{\eta}^{\top}\bm x+\bm{\mu}^{\top}\bm{y},~~\label{fir-moment}
\\
\mathbb{V}(\Delta v(\bm{x},\bm{y}))&=&(\bm{x}^{\top},\bm{y}^{\top})\Psi\left(\begin{array}{c}\bm{x}\\\bm{y}\end{array}\right)+\frac{1}{2}{\bm x}^{\top}\Phi
\bm{x},
\end{eqnarray*}
respectively, where
\begin{eqnarray*}
&&\bm\eta= (\eta_i)_{n}= \left(\frac{1}{2}\bm{\mu}^\top \Gamma^i\bm{\mu}
+\frac{1}{2}{\rm tr}(\Gamma^i\Sigma)+\left(\bm{\delta}^i\right)^\top
\bm{\mu}+\theta^i\Delta
t\right)_{n},\\
&&\Psi=\left(\Gamma^1\bm\mu+\bm\delta^1,\cdots,\Gamma^n\bm\mu+\bm\delta^n,\mathrm{I}\right)^{\top}\Sigma\left(\Gamma^1\bm\mu+\bm \delta^1,\cdots,\Gamma^n\bm\mu+\bm\delta^n, {\mathrm{I}}\right),\\
&&\Phi=\left(\phi_{ij}\right)_{n\times n}=\left({\rm tr}\left(\Gamma^i\Sigma\Gamma^j\Sigma\right)\right)_{n\times n}.
\end{eqnarray*}

We decompose $\Phi$ and $\Sigma$ as
\begin{eqnarray*}
&&\Phi=M^{\top}M ~\mbox{and}~ \Sigma=L^{\top}L.
\end{eqnarray*}
Furthermore, denote
\begin{eqnarray*}
&&H=L(\Gamma^1\bm{\mu}+\bm{\delta}^1,\cdots,\Gamma^n\bm{\mu}+\bm{\delta}^n,{\rm I}).
\end{eqnarray*}
The variance constraint $\sigma(\bm{x},\bm{y})\leq\bar\sigma$ can be reformulated as
\begin{eqnarray*}
&&\left(\begin{array}{c}\bar\sigma\\
                      H\left(\begin{array}{c}\bm{x}\\
                                             \bm{y}\end{array}\right)\\
                      \frac{1}{\sqrt{2}}M\bm{x}\end{array}\right)\succeq_{SOCP}\bm{0},
\end{eqnarray*}
which means $\left|\left|\left(H(\bm{x};\bm{y});\frac{1}{\sqrt{2}}M\bm{x}\right)\right|\right|\leq\bar\sigma$. Decompose $S$ and $E$ as
\begin{eqnarray*}
&&S=N^{\top}N~\mbox{and}~E = F^{\top}F
\end{eqnarray*}
and denote
\begin{eqnarray*}
&&Q=F\left(\Gamma^1_{\mathcal{J}\cdot}\bm{h}+\bm{\delta}^1_{\mathcal{J}},\cdots,\Gamma^n_{\mathcal{J}\cdot}\bm{h}+\bm{\delta}^n_{\mathcal{J}},{\rm I}\right).
\end{eqnarray*}
Then the constraint on CoVaR can be reformulated as
\begin{eqnarray*}
&&\left(\begin{array}{c}(\bar\rho+\bm{g}^{\top}\bm{x}+\bm{h}^{\top}\bm{y})/\alpha_q\\
                      Q\left(\begin{array}{c}\bm{x}\\
                                             \bm{y}_{\mathcal{J}}\end{array}\right)\\
                      \frac{1}{\sqrt{2}}N\bm{x}
                      \end{array}\right)\succeq_{SOCP}\bm{0}.
\end{eqnarray*}

Following \cite{zhu2020}, by introducing some additional variables to reformulate $\Omega$ as a polyhedron, the problem ($P$) can be equivalently transformed to the following second-order cone program
\begin{eqnarray*}
(P_1)&\max\limits_{\bm{x},\bm{y},\bm{\vartheta}}&\bm{\eta}^{\top}\bm{x}+\bm{\mu}^{\top}\bm{y}\\
&{\rm s.t.}&\left(\begin{array}{c}\bar\sigma\\
                      H\left(\begin{array}{c}\bm{x}\\
                                             \bm{y}\end{array}\right)\\
                      \frac{1}{\sqrt{2}}M\bm{x}\end{array}\right)\succeq_{SOCP}\bm{0},\\
&&\left(\begin{array}{c}(\bar\rho+\bm{g}^{\top}\bm{x}+\bm{h}^{\top}\bm{y})/{\alpha_q}\\
                      Q\left(\begin{array}{c}\bm{x}\\
                                             \bm{y}_{\mathcal{J}}\end{array}\right)\\
                      \frac{1}{\sqrt{2}}N\bm{x}
                      \end{array}\right)\succeq_{SOCP}\bm{0},\\
&&\sum_{i=1}^{n}\vartheta_i+\bm{p}^{\top}\bm{y}\leq k^0,\\
&&d^{ask}_i(x_i-x_i^0)\leq\vartheta_i,~~d^{bid}_i(x_i-x_i^0)\leq\vartheta_i,~~~~i=1,\cdots,n,\\
&&l_{d_i}\leq d^{bid}_ix_i,~~d^{ask}_ix_i\leq u_{d_i},~~l_{p_i}\leq p_iy_i\leq u_{p_i},~~~~i=1,\cdots,n.
\end{eqnarray*}
If $l_{d_i}=u_{d_i}=0$ and $x_i^0=0$ for $i=1,\cdots,n$, the model turns to the systemic risk constrained model of stock portfolios and is still an SOCP. Readers who are interested in the details of SOCP can refer to \cite{Alizadeh2003}.

Notice that the above reformulation is suitable for relatively diversified optioned portfolios due to the property of asymptotic normality depicted by Lemma 1. However, for a small-scale optioned portfolio, the normal distribution may not be a good approximation for the distribution of the value change. Considering the severity of systemic risk, a reasonable choice is to use the worst-case risk to measure the systemic risk. According to the definition of worst-case VaR and the Theorem 1 of \cite{El2003}, given that the conditional mean and variance of the portfolio are the only information available, the worst-case CoVaR has the closed form as
\begin{eqnarray*}
&&WCoVaR_q^{\Delta v(\bm{x},\bm{y})|\mathscr{G}}=\kappa_q \mathbb{V}^{\frac{1}{2}}(\Delta v(\bm{x},\bm{y})|\mathscr{G})-\mathbb{E}(\Delta v(\bm{x},\bm{y})|\mathscr{G}),
\end{eqnarray*}
where $\kappa_q=\sqrt{\frac{q}{1-q}}$. Notice that the difference between CoVaR and worst-case CoVaR is $\alpha_q$ and $\kappa_q$. Thus we just need to replace $\alpha_q$ with $\kappa_q$ in problem ($P_1$) to construct a model with worst-case CoVaR as the systemic risk measure.

\section{Simulation analysis and empirical study}
In this section, we carry out simulations and empirical tests to examine the performance of the proposed approach. First, in-sample test is adopted to illustrate the uncontrollability/controllability of systemic risk of stock portfolios/optioned portfolios. And then the comparisons of the optioned portfolio strategy with systemic risk control with other portfolio strategies are provided in the out-of-sample test.

We use the constituent stocks of index of Dow Jones Industrial Average (DJIA) and the corresponding options in simulation and empirical study. Two data sets deriving from the Bloomberg database are used in this section. Data set 1 is from January 2000 to December 2018 for stocks and from January 2018 to December 2018 for options. Two stocks from the index are removed due to lacking data, and thus it consists of 28 stocks. We select four types of American options, i.e., at-the-money (ATM) call option, ATM put option, 20\% out-of-the-money (OTM) call option and 20\% OTM put option. The expiration of options is selected as one year. There are 53 options left after removing those options with missing data and the bid price smaller than 0.3. Following the same data collection and processing rules, data set 2 consists of 27 constituent stocks from January 2000 to January 2021 and 80 options from January 2020 to January 2021, which consist of six types of American option, ATM call option, ATM put option, 20\% OTM call option, 20\% OTM put option, 20\% in-the-money (ITM) call option and 20\% ITM put option.

The time horizon for portfolio selection is one week, i.e., weekly data is used for the analysis. In the following study, the period of the option data set is defined as the out-of-sample period, and the period of the stock data set removing the out-of-sample period is the in-sample period.

In this section, four basic strategies/portfolios are used:
\begin{enumerate}[(1)]
  \item {\it Stock}: generated by model ($P$) with only stocks and without systemic risk constraints.
  \item {\it Stock control}: generated by model ($P$) with only stocks.
  \item {\it Optioned}: generated by model ($P$) with stocks and options and without systemic risk constraints.
  \item {\it Optioned control}: generated by model ($P$) with stocks and options.
\end{enumerate}

Throughout the following analysis, the confidence levels of VaR and CoVaR are set at $p=q=0.95$. The risk-free rate is set at 0.05. The total amount of options is restricted within thirty percent of the total wealth.  The stocks index by $\mathcal{I}$ are called {\it systemically important assets}.

All the computations are completed with a personal computer using Matlab R2020b, the SOCP is solved via CVX, a Matlab-based modeling package for convex optimization problems.

\subsection{In-sample analysis}
In this subsection, we compare the efficiency of systemic risk control of optioned portfolios with stock portfolios. The comparison is made up of three parts. We firstly show the uncontrollability of stock portfolios and the controllability of optioned portfolios. Then we compare the efficient frontiers and finally the strategy performances under different scenarios. To simplify the analysis in this subsection, we assume the initial wealth is one. The feasible set $\Omega$ is specified as
\begin{eqnarray*}
&&\Omega=\left\{(\bm{x},\bm{y}):\bm{d}^\top\bm{x}+\bm{p}^\top\bm{y}\leq 1, l_{d_i}\leq d_ix_i\leq u_{d_i}, l_{p_i}\leq p_iy_i\leq u_{p_i}\right\}
\end{eqnarray*}
without considering the bid-ask spread. For the {\it stock} strategy and {\it stock control} strategy, we set $l_{d_i}=u_{d_i}=0$, and thus $\bm{x}=\bm{0}$.

\subsubsection{Uncontrollability and controllability of systemic risk}
In this subsection, the parameters of stocks and options are generated by simulation (e.g., mean, variance, ``Greeks'' and price). The process of parameter generation is provided in Appendix E.

We simulate a 10-stock portfolio to show the seesaw effect. We first execute a {\it stock} strategy with $l_{p_i}=0$, $u_{p_i}=0.15$ and $\bar\sigma=0.03$ and specify the stock with the highest CoVaR as the systemically important asset. Then, with the same parameter settings as {\it stock} strategy, we execute the {\it stock control} strategy that restricts $\bar\rho$ to its minimal level. Finally, we add 25 options into the portfolio and execute the {\it optioned control} strategy with $l_{p_i}=l_{d_i}=0$, $u_{p_i}=u_{d_i}=0.15$, $\bar\sigma=0.03$ and $\bar\rho$ being its minimal level. Furthermore, we impose a constraint that the expected return of the {\it stock control} strategy and the {\it optioned control} strategy is no less than that of the {\it stock} strategy. The systemically important asset is stock 1 in this example. As can be seen in Figure \ref{fig3}, for the {\it stock control} strategy, the CoVaR of stock 1 decreases slightly, while CoVaRs of some other stocks increase. For the {\it optioned control} strategy, the CoVaRs of all stocks decrease dramatically. This implies that the systemic risk of the pure stock portfolio is uncontrollable while that of the optioned portfolio is controllable.

For the optioned portfolio consisting of 10 stocks and 25 options, we fix parameters associated with weights and adjust $\bar\sigma$ and $\bar\rho$ to construct an optioned portfolio that has a higher return, lower variance and lower systemic risk than the stock portfolio, as shown in Figure \ref{fig4}, which directly verifies the result of Corollary \ref{frontier}.

\begin{figure}[htp]
  \centering
  \includegraphics[width=8cm]{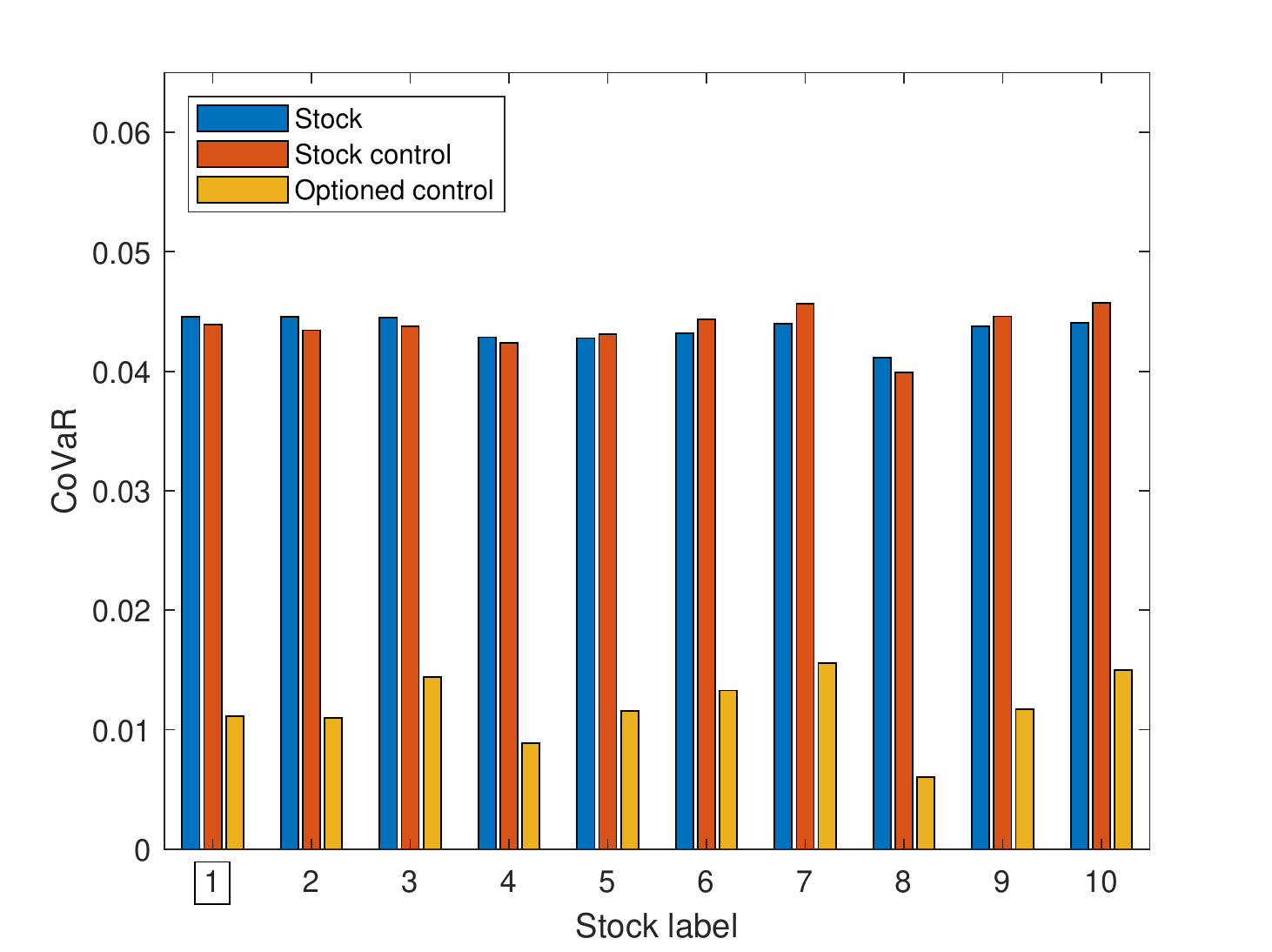}
  \caption{CoVaR of different strategies}
  \label{fig3}
\end{figure}

\begin{figure}[H]
  \centering
  \includegraphics[width=8cm]{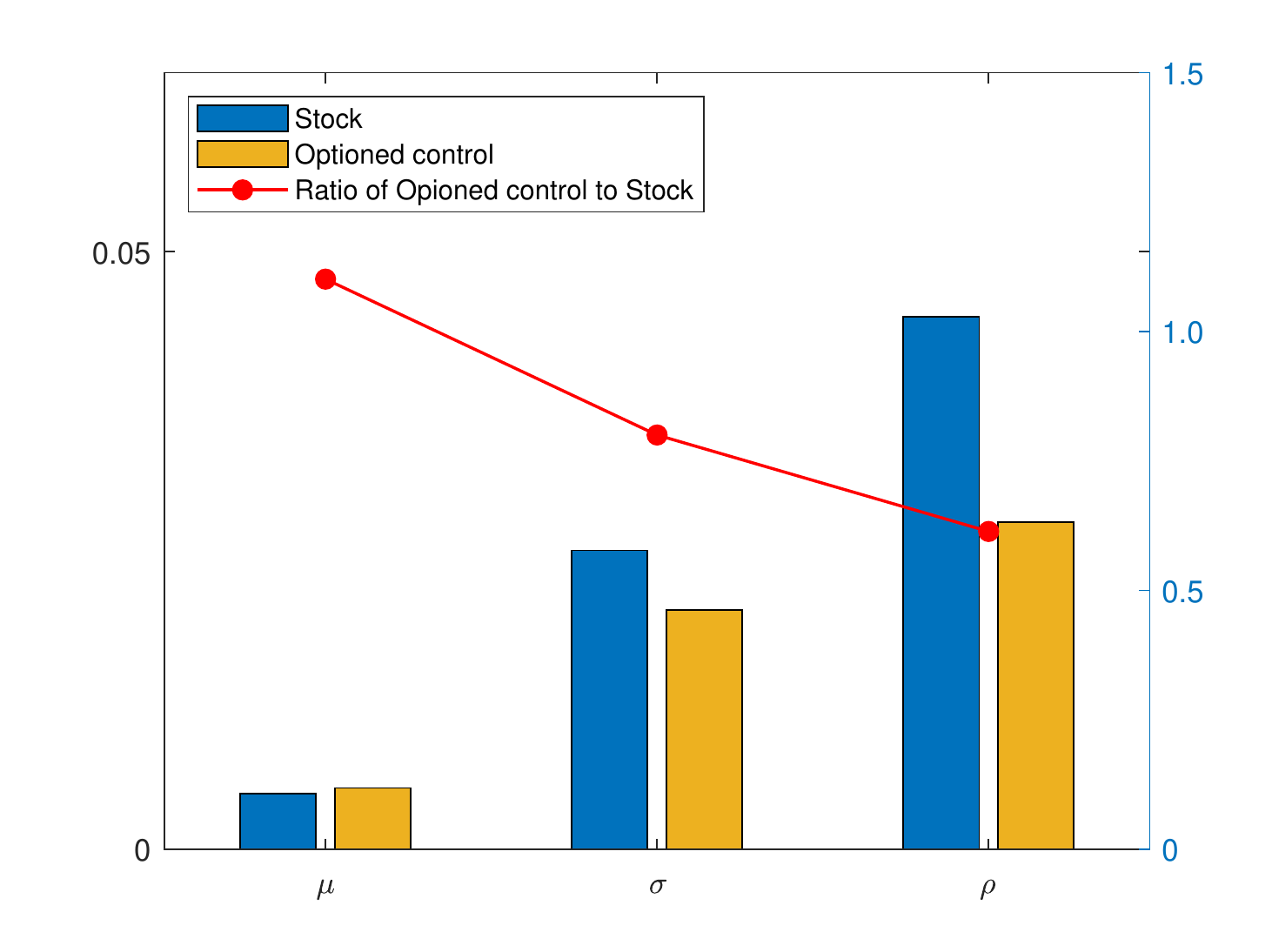}
  \caption{Performances of stock portfolio and optioned portfolio}
  \label{fig4}
\end{figure}

We further explore the effect of systemic risk control of optioned portfolios. We cumulatively add options into the portfolio and in each time execute the {\it optioned control} strategy with $\bar\rho$ being its minimal level. Other parameters are the same as the above. Once the optimal portfolio $(\bm{x},\bm{y})$ is derived, the optioned assets are naturally constructed as $b_i=\alpha_i y_i$ and $a_{ij}=\alpha_ix_{ij}$, where $\alpha_i=p_i/(y_ip_i+\sum_{j=1}^{n_i}x_{ij}d_{ij})$. Figures \ref{fig5} and \ref{fig6} show that as the number of options in the portfolio increases, the CoVaR of the optioned portfolio and the correlation of optioned assets decrease simultaneously, where in Figure \ref{fig6}, ``Max correlation", ``Ave correlation" and ``Min correlation" denote the maximal correlation, average correlation (the average of nondiagonal elements of the correlation matrix) and minimal correlation of optioned assets, respectively. This implies that options can reduce the systemic risk by the effect of correlation hedging.

\begin{figure}[H]
  \centering
  \includegraphics[width=8cm]{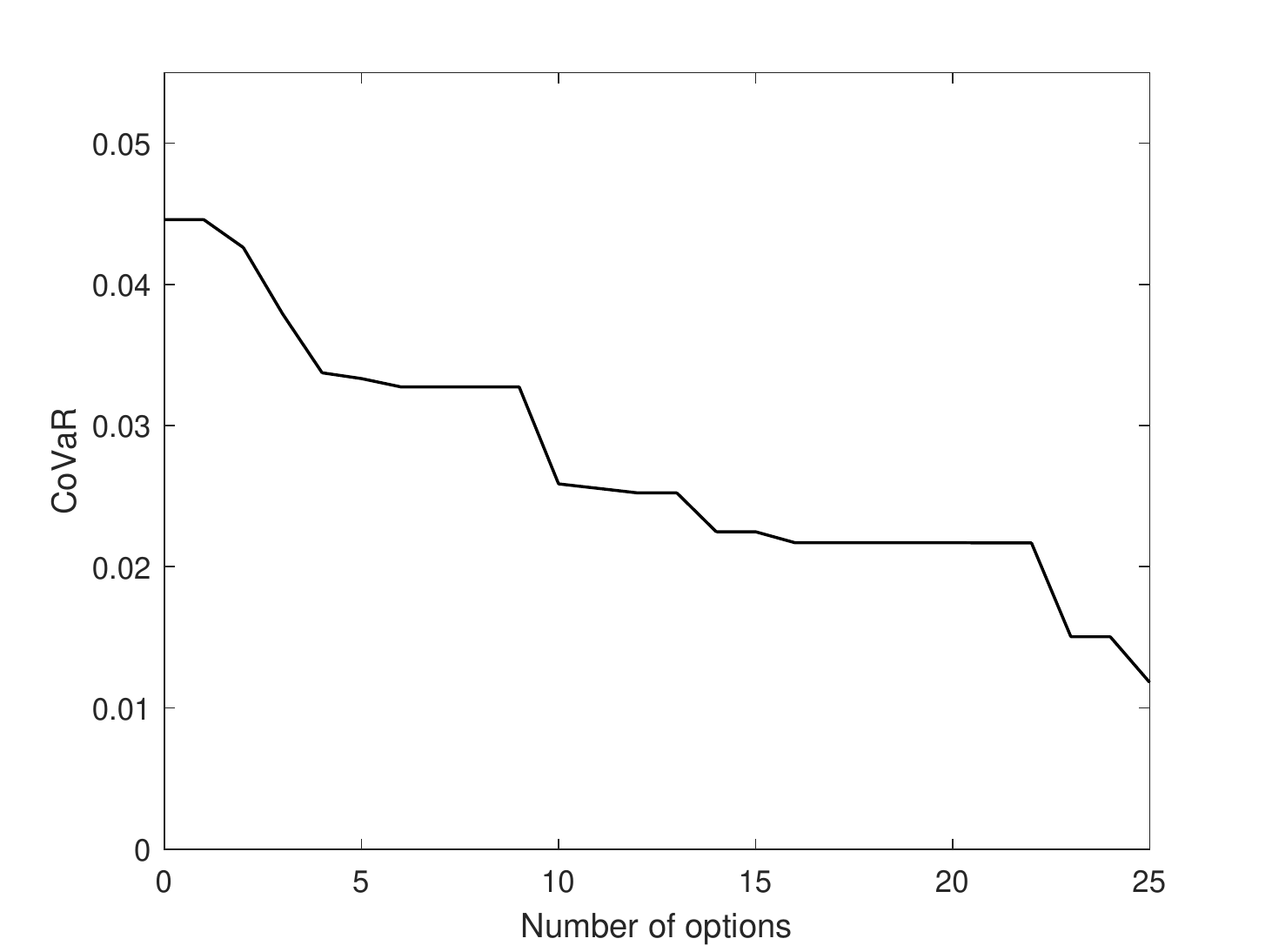}
  \caption{CoVaR of optioned portfolios with different number of options}
  \label{fig5}
\end{figure}

\begin{figure}[H]
  \centering
  \includegraphics[width=8cm]{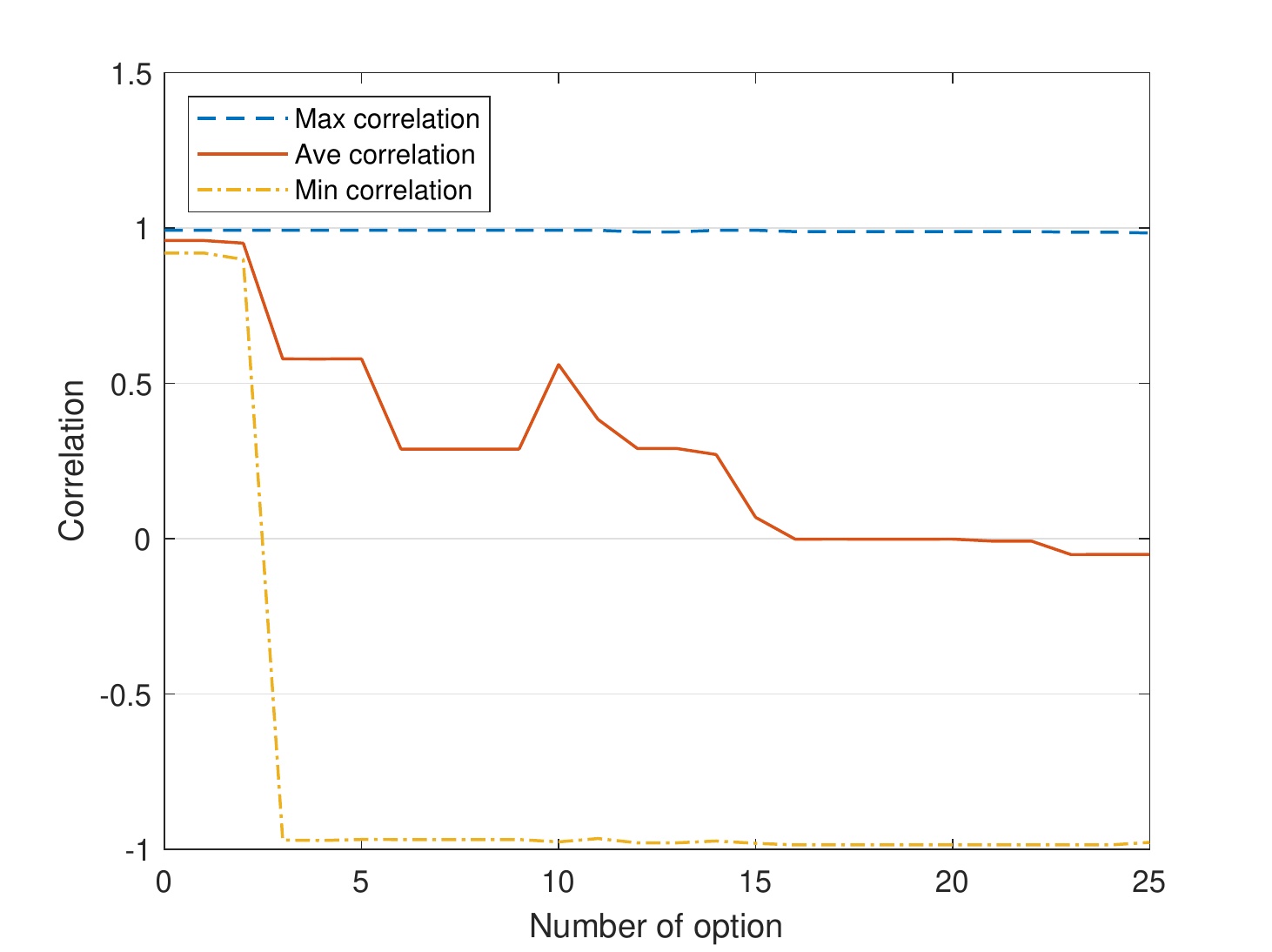}
  \caption{Correlations of optioned assets with different number of options}
  \label{fig6}
\end{figure}

As shown in Proposition \ref{corr}, there exists a correlation hedging effect in optioned portfolios. In practice, we are also interested in how many options are exactly needed to hedge the correlations to a specific level. To investigate the relationship between the number of options in the portfolio and correlations between optioned assets, we use simulation to generate portfolios with different numbers of stocks and options. We cumulatively add options into the portfolio and in each step use the Matlab function ``fsolve" to calculate the weight of optioned assets $(b_i,a_{i1},\cdots,a_{in_i})$, $i=1,\cdots,m,$ with the objective to make all of the correlations between optioned assets to zeros. Figure \ref{fig7} illustrates that using the same number of options as that of stocks has a satisfactory correlation hedging effect. When the number of options is approximately twice as much as that of stocks, the average correlation between optioned assets approaches zero.

\begin{figure}[H]
  \begin{minipage}{0.5\linewidth}
  \centerline{\includegraphics[width=6.5cm]{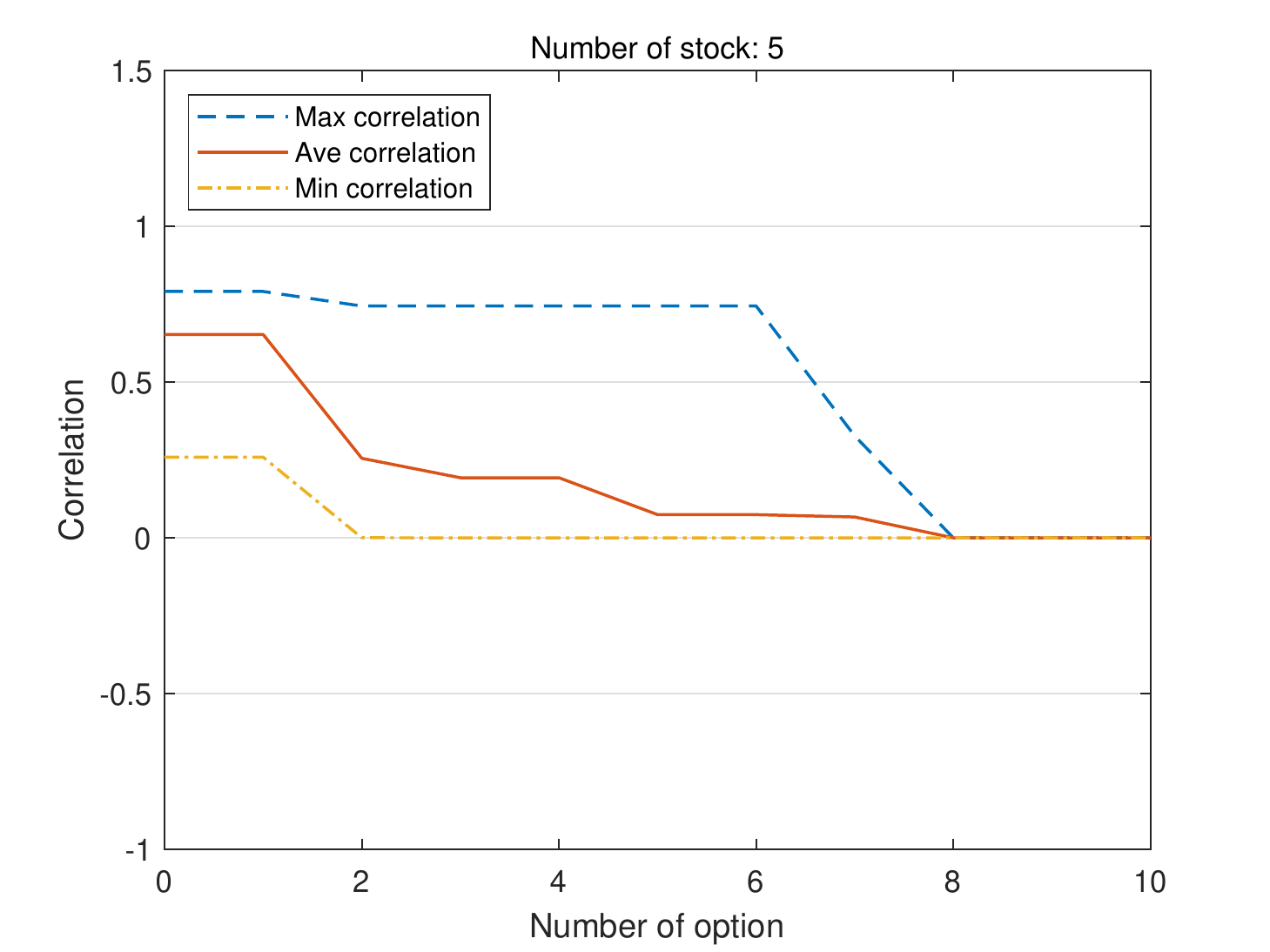}}
  \end{minipage}
  \hfill
  \begin{minipage}{0.5\linewidth}
  \centerline{\includegraphics[width=6.5cm]{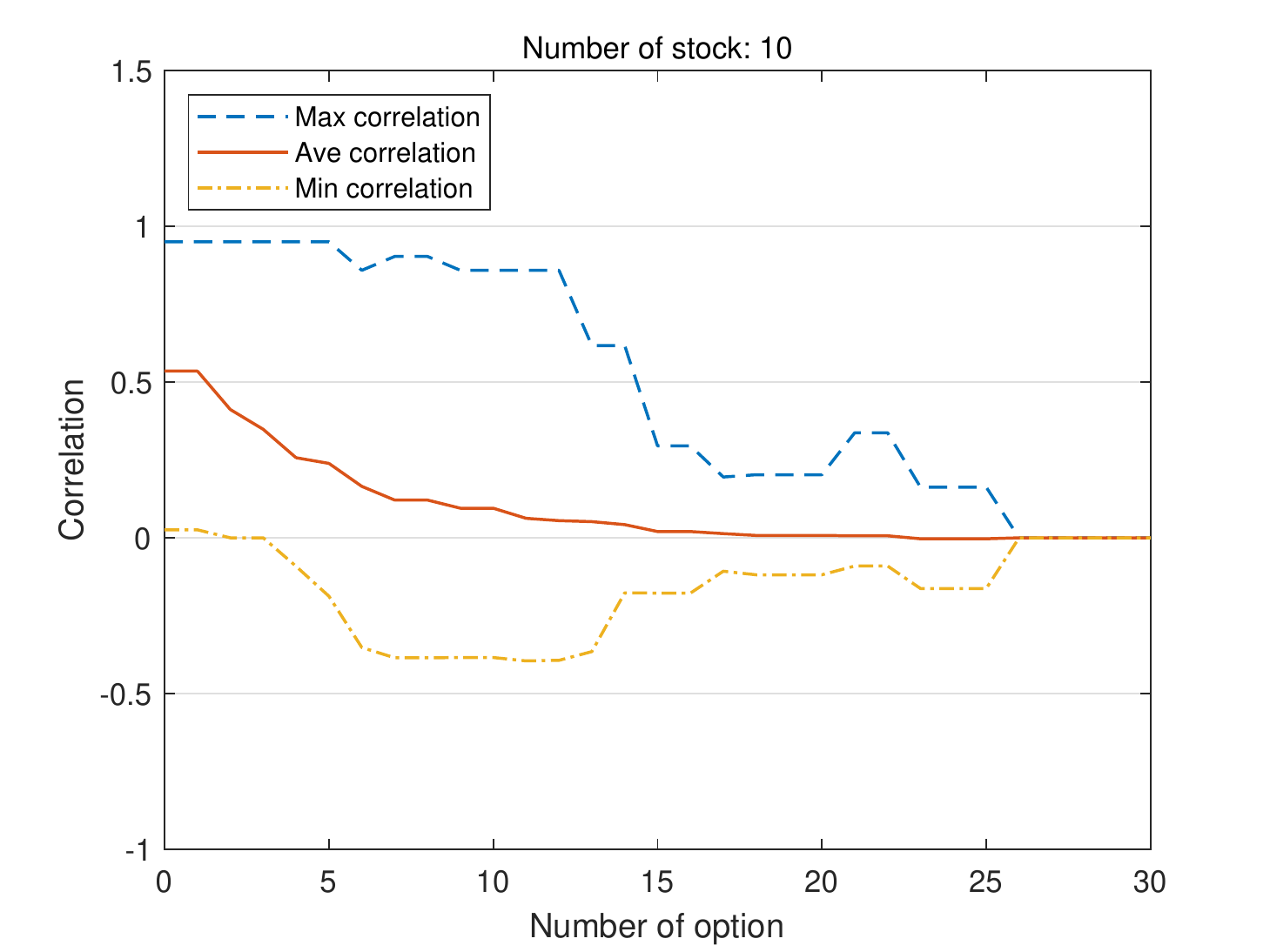}}
  \end{minipage}
  \vfill
  \begin{minipage}{0.5\linewidth}
  \centerline{\includegraphics[width=6.5cm]{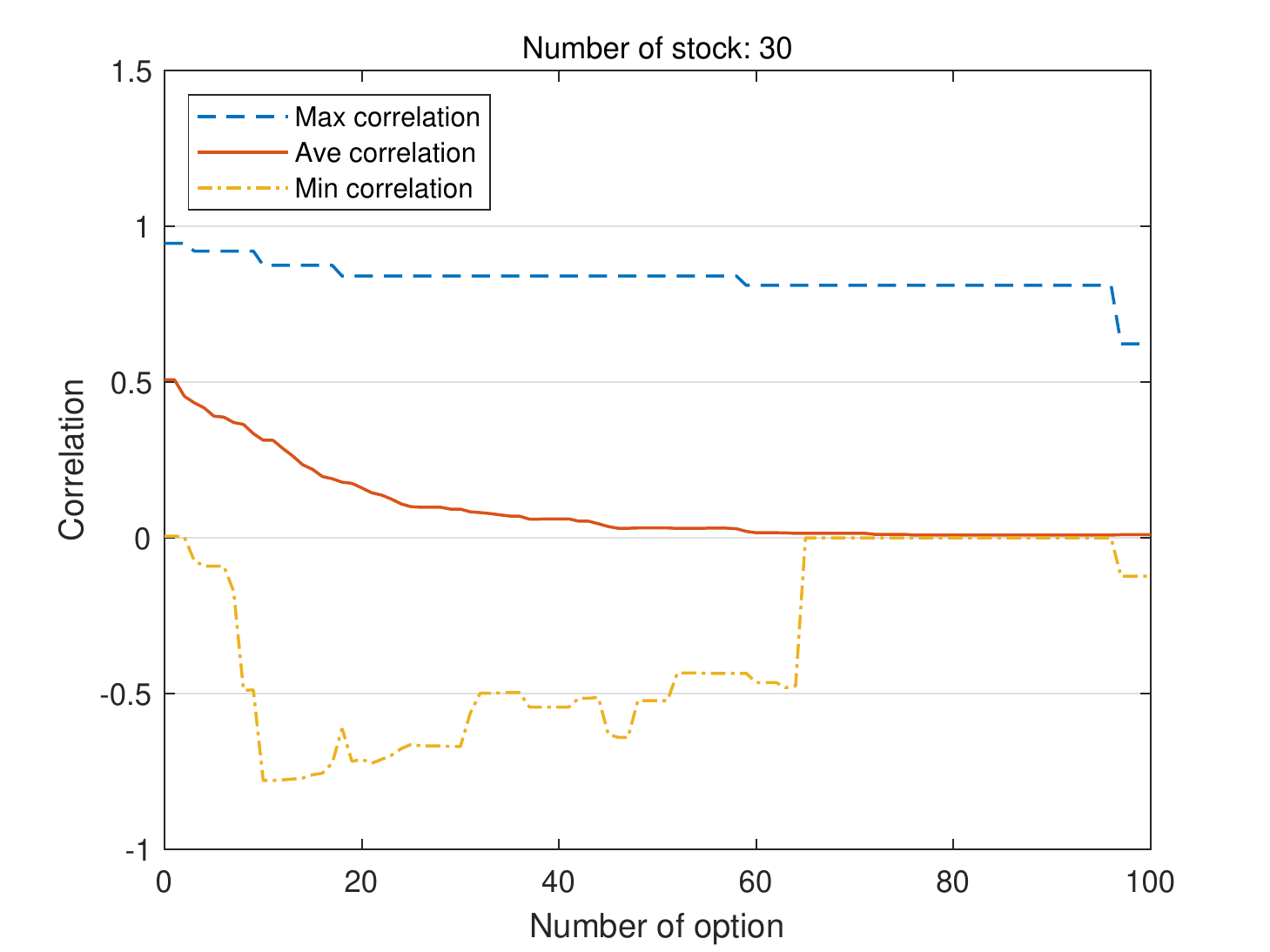}}
  \end{minipage}
  \hfill
  \begin{minipage}{0.5\linewidth}
  \centerline{\includegraphics[width=6.5cm]{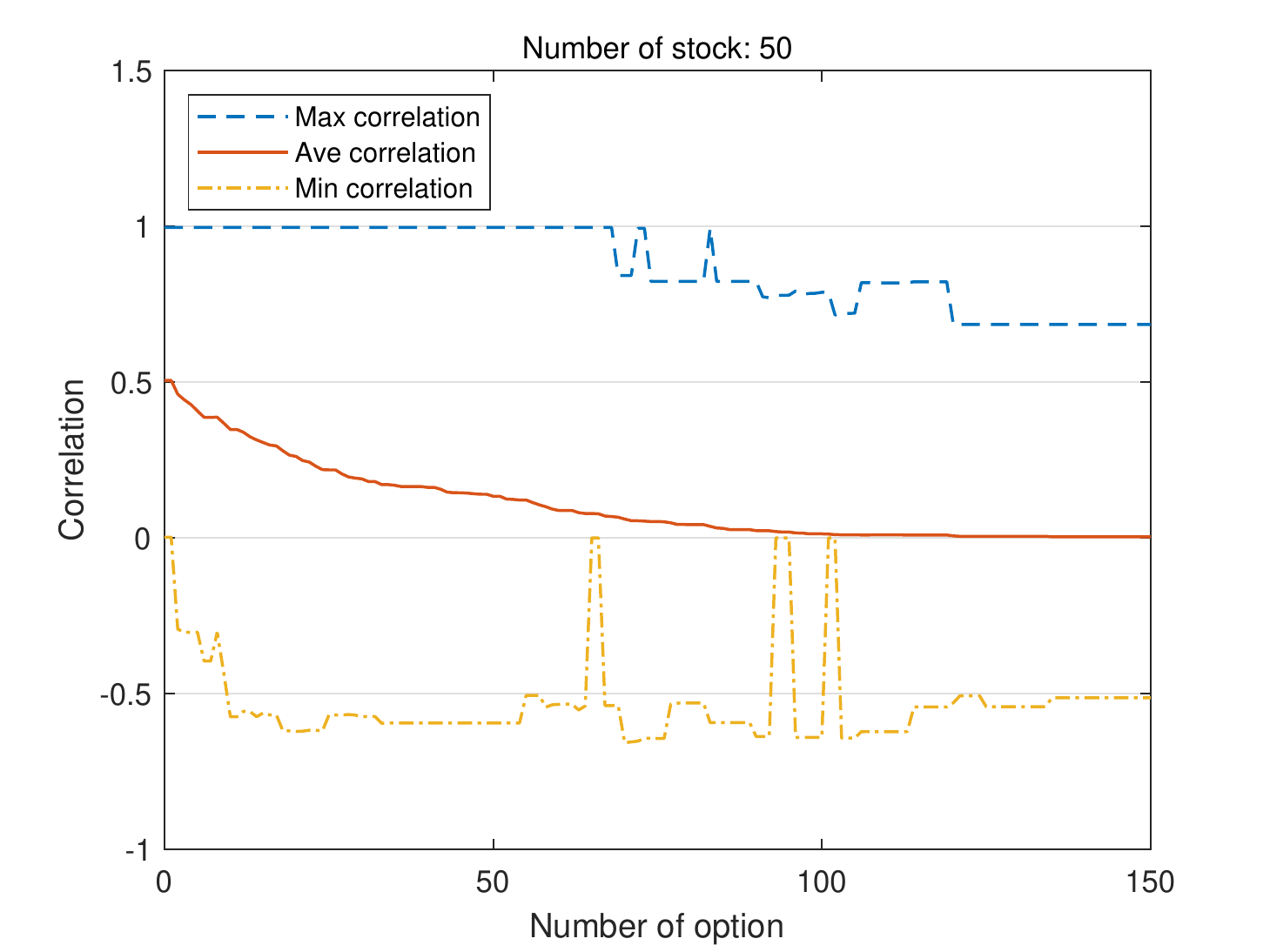}}
  \end{minipage}
  \caption{Correlation hedging effect of different portfolio sizes}
  \label{fig7}
\end{figure}

\subsubsection{Frontiers of stock portfolios and optioned portfolios}
We compare the efficient frontiers of stock portfolios and optioned portfolios to examine the effectiveness of systemic risk control with real data. Cluster analysis is used to identify systemically important assets. Specifically, we regard correlation of returns as the distance between stocks and use cluster analysis to group stocks with the highest correlation into the same cohort. In each cohort, we calculate for each stock the sum of correlations with other stocks, which measures the importance to systemic risk and select the most important stock in each cohort as one of the systemically important asset. Finally, we identify 5 stocks in subset $\mathcal{I}$ as the systemically important assets.

Since the options used in the data set are American options, we use the Least Square Monte Carlo (LSMC) method  proposed by \cite{longstaff2001} and the finite difference to estimate the ``Greeks" of the options. Similar to \cite{longstaff2001}, in the LSMC, the number of simulation paths is set to 20000 and the periods of each path is set to 5.

Throughout the process of drawing the frontier, we set $l_{p_i}=l_{d_i}=0$ and $u_{p_i}=u_{d_i}=0.1$ for the optioned portfolio and $l_{p_i}=0$ and $u_{p_i}=0.1$ for the stock portfolio. We first calculate the minimal $\bar\rho$ by solving a convex optimization problem that minimizes the CoVaR subject to the feasible set $\Omega$, where  and maximal $\bar\rho$ of the stock portfolio by maximizing the expected return subject to the feasible set $\Omega$. Then we select 20 points with equal intervals within the range between the minimal $\bar\rho$ and maximal $\bar\rho$. For each point, we calculate the minimal $\bar\sigma$ by minimizing the variance of the portfolio subject to the systemic risk constraint and feasible set and maximal $\bar\sigma$ by maximizing the expected return subject to the same constraints. Similarly, we select 20 points of $\bar\sigma$ with equal intervals. Then a $20 \times 20$ grid is obtained. For each point in the grid, we solve problem ($P$) for the stock portfolio and the optioned portfolio, respectively, to generate the efficient frontiers.

Figure \ref{fig8} shows the frontiers of the optioned portfolios and the stock portfolios with parameters estimated by different data sets. Overall, the results of the two data sets are similar. We can see that the efficient frontier of optioned portfolios dominates that of stock portfolios, which is consistent with Corollary \ref{frontier}. Furthermore, the the efficient frontier of optioned portfolios is larger than that of  stock portfolios. In addition, the systemic risk of the optioned portfolios can reach zero, while the stock portfolio cannot. For a given parameter $\bar\rho$, the frontier turns to the mean-variance frontier, which is characterized by high returns with high variance. As the parameter $\bar\rho$ decreases, the mean-variance frontier moves down, which implies that the systemic risk control sacrifices the conventional risk-return performance.
\begin{figure}[H]
  \begin{minipage}{0.5\linewidth}
  \centerline{\includegraphics[width=8cm]{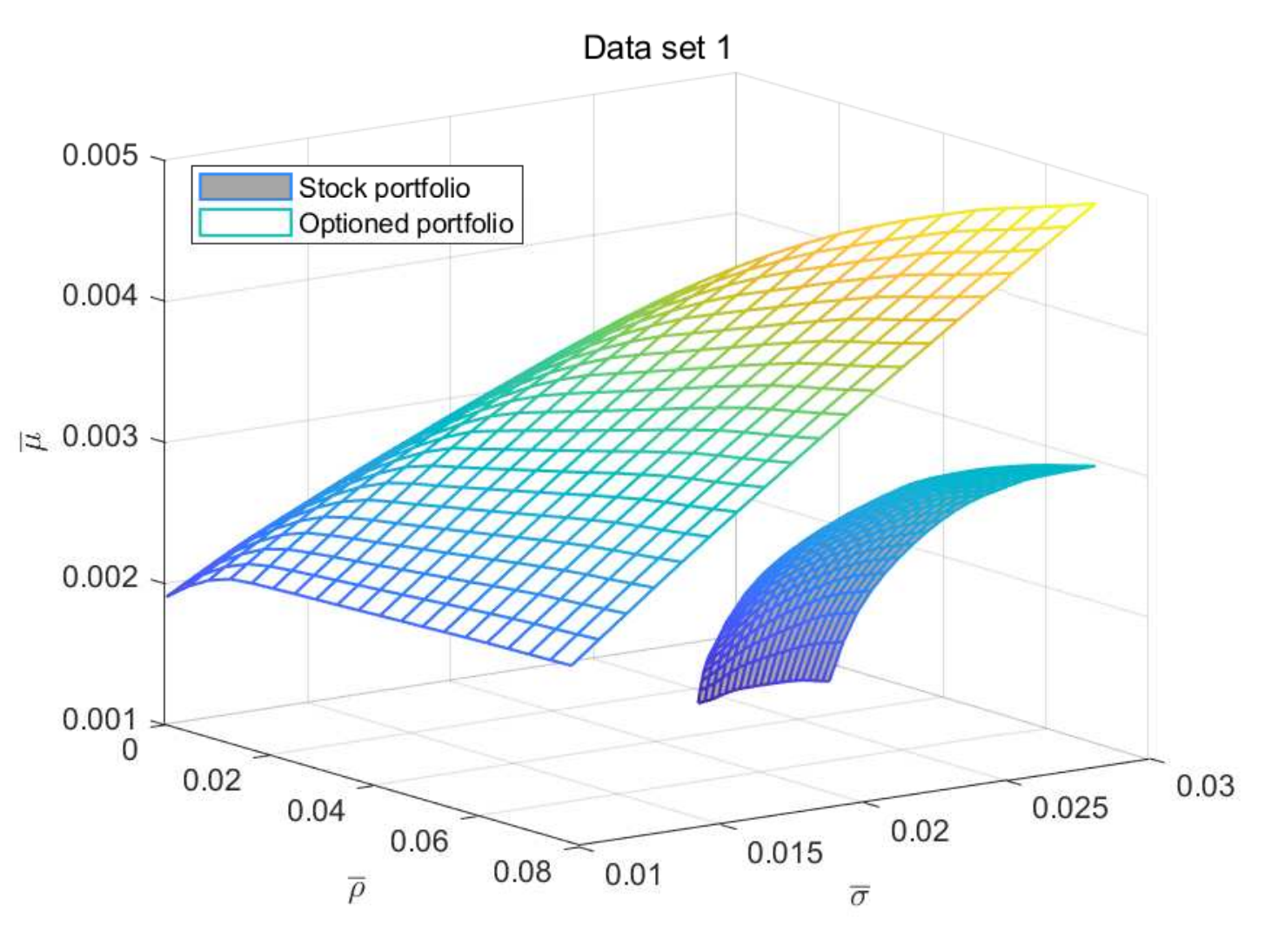}}
  \end{minipage}
  \hfill
  \begin{minipage}{0.5\linewidth}
  \centerline{\includegraphics[width=8cm]{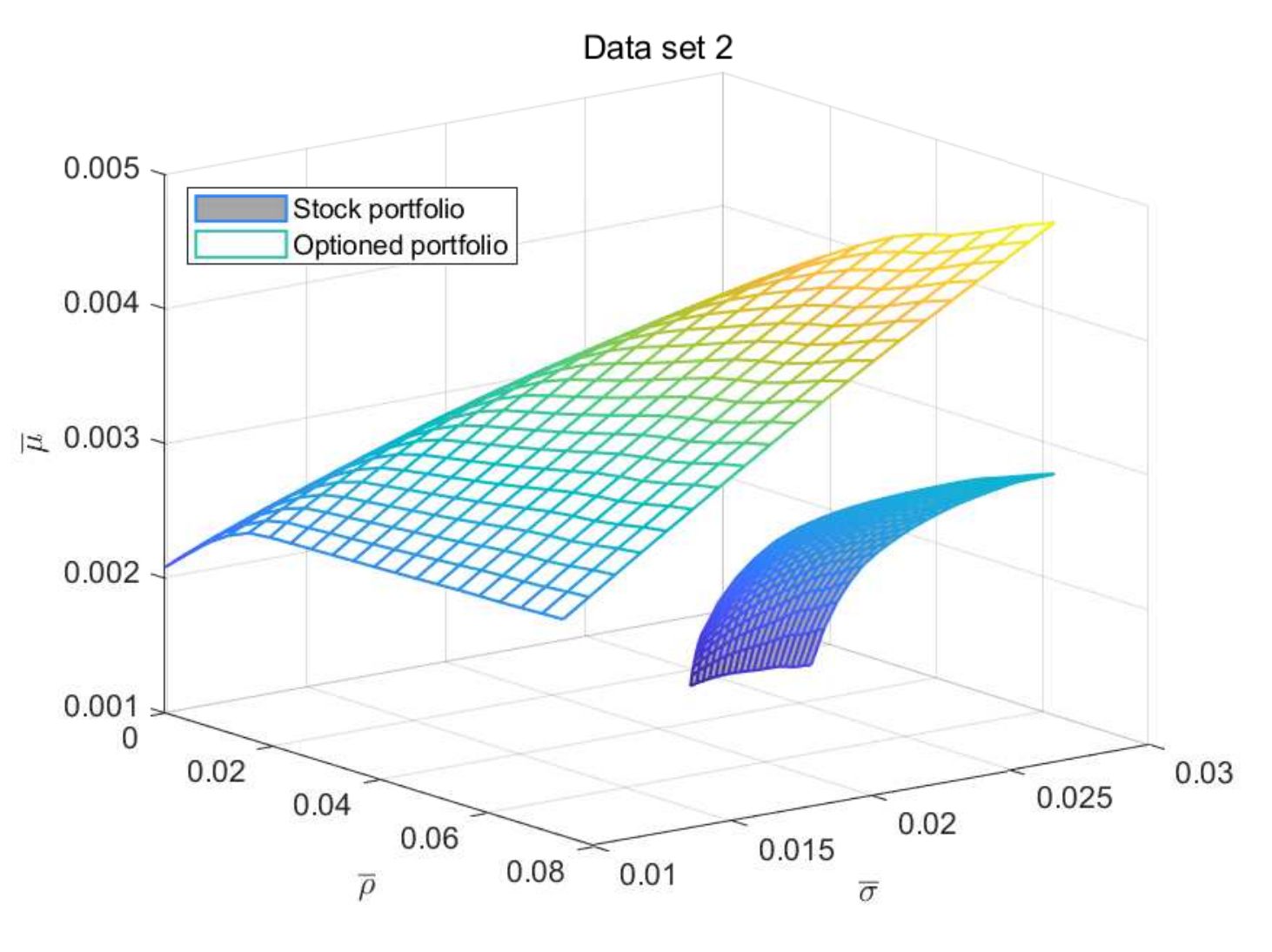}}
  \end{minipage}
  \caption{Frontiers of stock portfolios and optioned portfolios}
  \label{fig8}
\end{figure}

\subsubsection{Portfolio performance under different scenarios}

We simulate different scenarios, including depressions, normalities and booms, to compare the performance of different strategies. The scenarios are generated as follows. Let $\Delta \bm{p}_{\mathcal{I}}=(2\lambda-1)\bm{k}$, where $\lambda$ increases from 0 to 1 with increment of 0.02. The value changes of other stocks are equal to their conditional mean, i.e., $\Delta\bm{ p}_{\mathcal{J}}=\bm{\mu}_{\mathcal{J}}+\Sigma_{\mathcal{J}\mathcal{I}}\Sigma_{\mathcal{I}\mathcal{I}}^{-1}(\Delta\bm{ p}_{\mathcal{I}}-\bm{\mu}_{\mathcal{I}})$. Thus, we generate 51 scenarios and as parameter $\lambda$ increases, these scenarios change from depressions to booms. Here, the parameters of stocks and options that are used for generating scenarios are estimated with the two data sets. Furthermore, we set $\bar\sigma=0.1$ and $u_{d_i}=u_{p_i}=0.1$. When short selling is allowed, set $l_{d_i}=l_{p_i}=-0.1$. Otherwise set $l_{d_i}=l_{p_i}=0$. To ensure the systemic risk constraint is active, we set $\bar\rho$ to the minimum that the {\it stock control} portfolio is feasible. For the {\it optioned control}  portfolio, we set $\bar\rho$ to 0.1 times of the CoVaR of the {\it stock} strategy. We consider stocks that are heavily weighted and highly correlated with others as systemically important stocks. Specifically, we use cluster analysis to identify stocks with high correlation. In addition,  we execute {\it stock}  strategy and then select stocks with the highest weight as the objective of  systemic risk control. We start with an initial wealth of one and calculate the portfolio value under each scenario.

As we can see in Figure \ref{fig9}, in the depressive scenarios, the {\it optioned control} strategy outperforms the {\it optioned strategy} and the {\it stock control} strategy slightly outperforms the {\it stock strategy} but performs poorer than the {\it optioned control} strategy. This result reflects the uncontrollability of systemic risk of the stock portfolio. The {\it optioned strategy} performs worst in the depressions, since the leverage effect of options reinforces the risk without management. The systemic risk control strategies ({\it stock control} and {\it optioned control}) have a worse performance in the booming scenarios. Therefore, the systemic risk control strategies prevent portfolios from large losses in depressions, but loss opportunities in booms.
\begin{figure}[H]
  \begin{minipage}{0.5\linewidth}
  \centerline{\includegraphics[width=8cm]{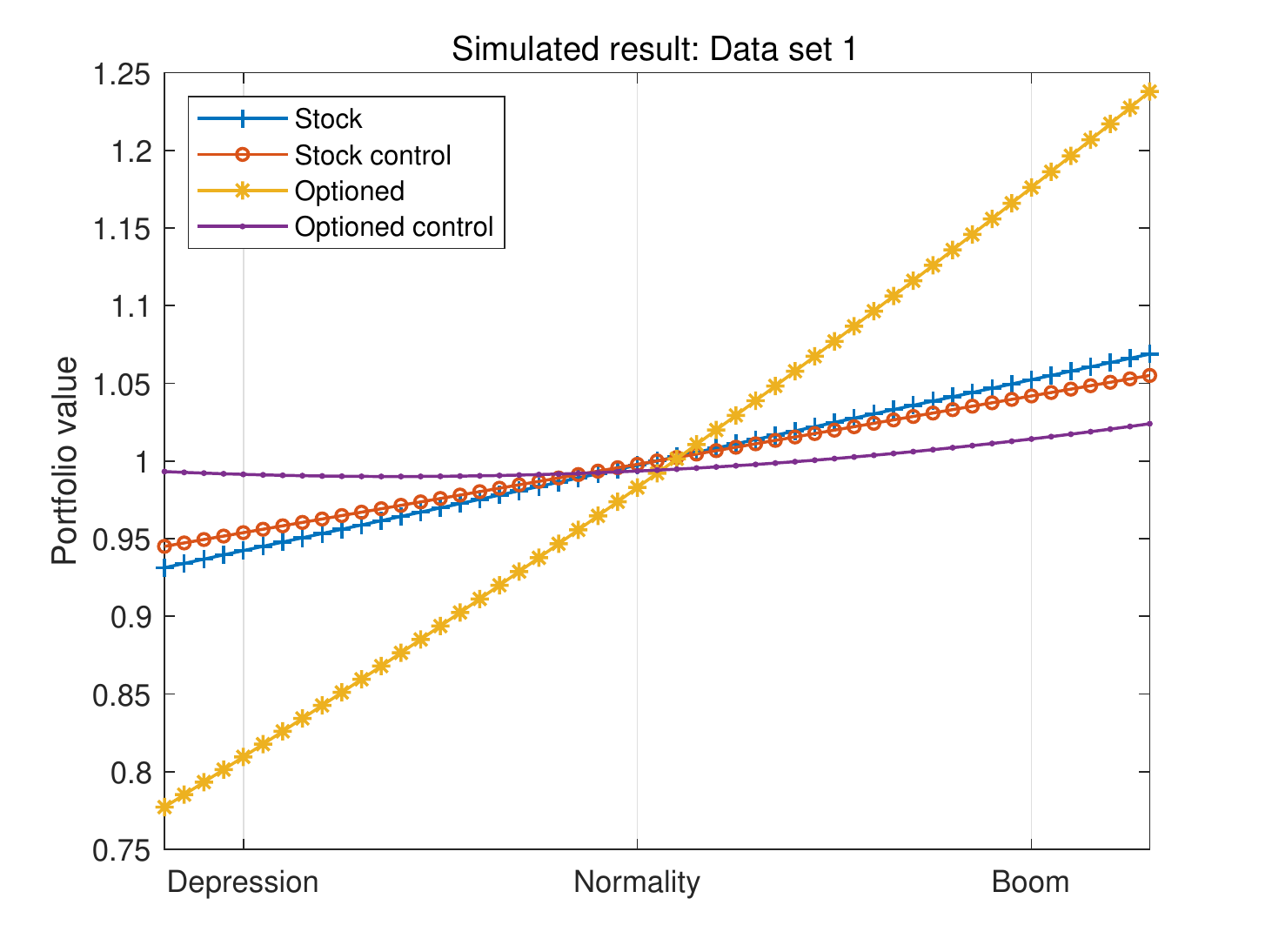}}
  \end{minipage}
  \hfill
  \begin{minipage}{0.5\linewidth}
  \centerline{\includegraphics[width=8cm]{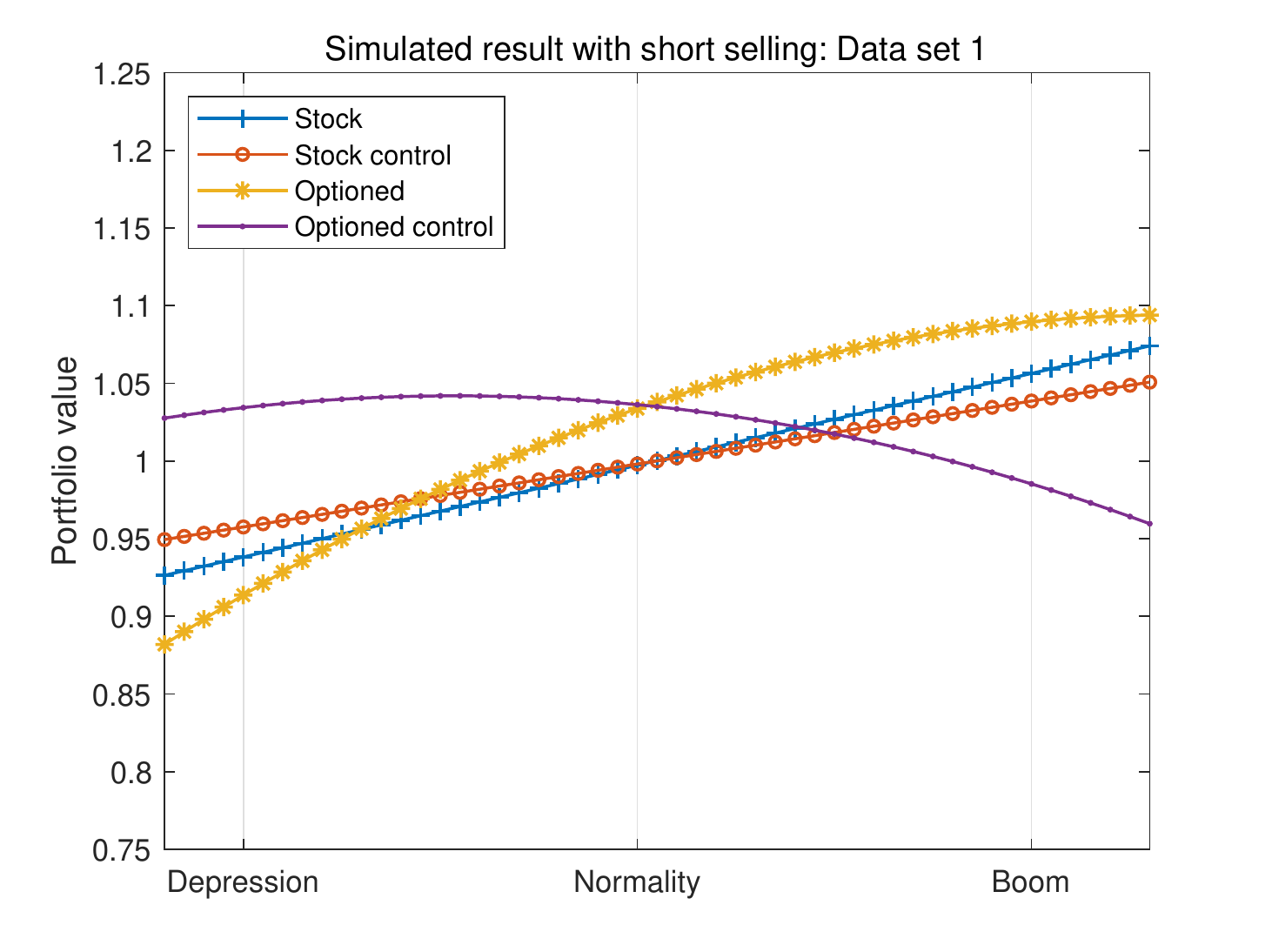}}
  \end{minipage}
  \vfill
  \begin{minipage}{0.5\linewidth}
  \centerline{\includegraphics[width=8cm]{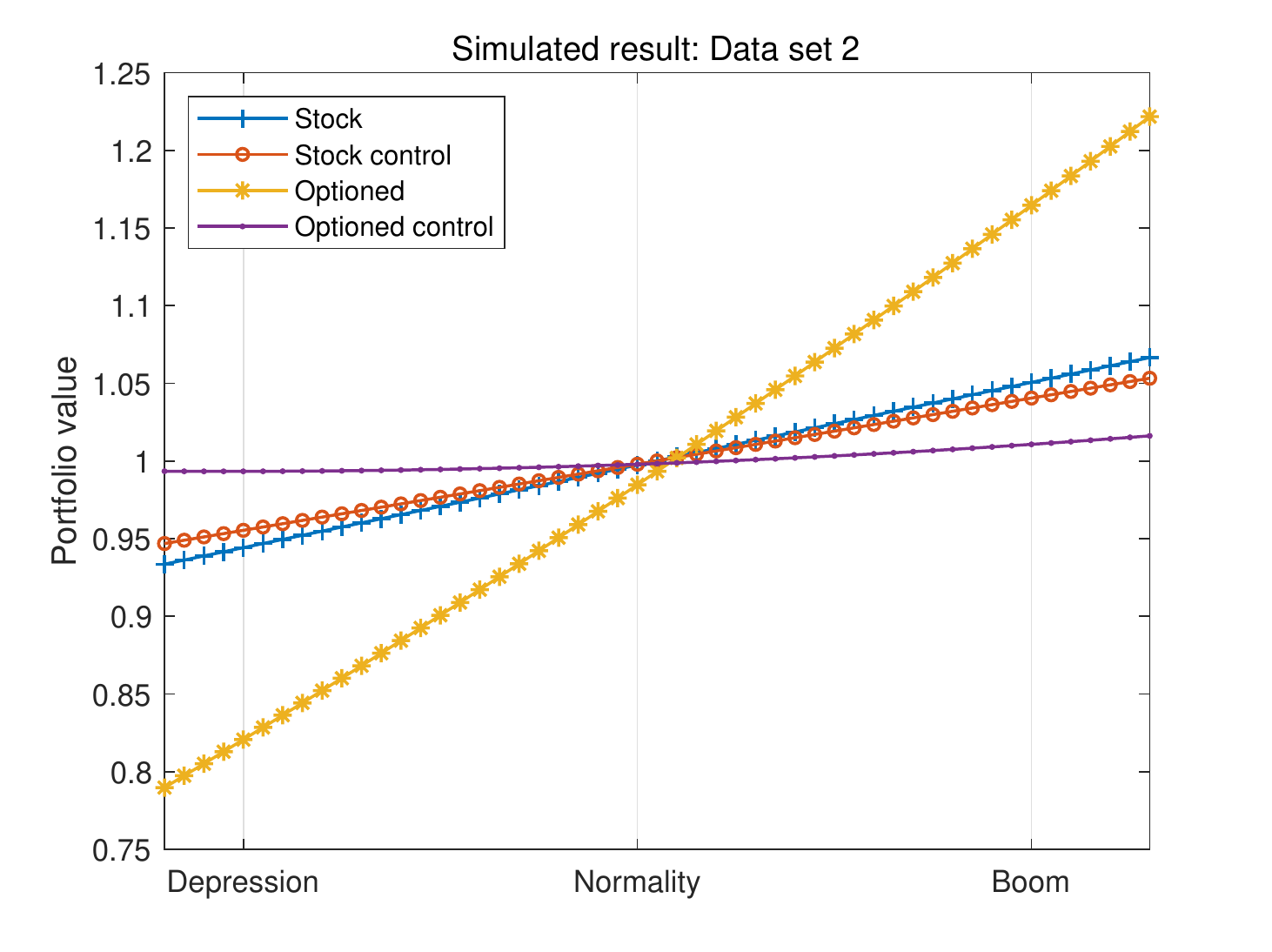}}
  \end{minipage}
  \hfill
  \begin{minipage}{0.5\linewidth}
  \centerline{\includegraphics[width=8cm]{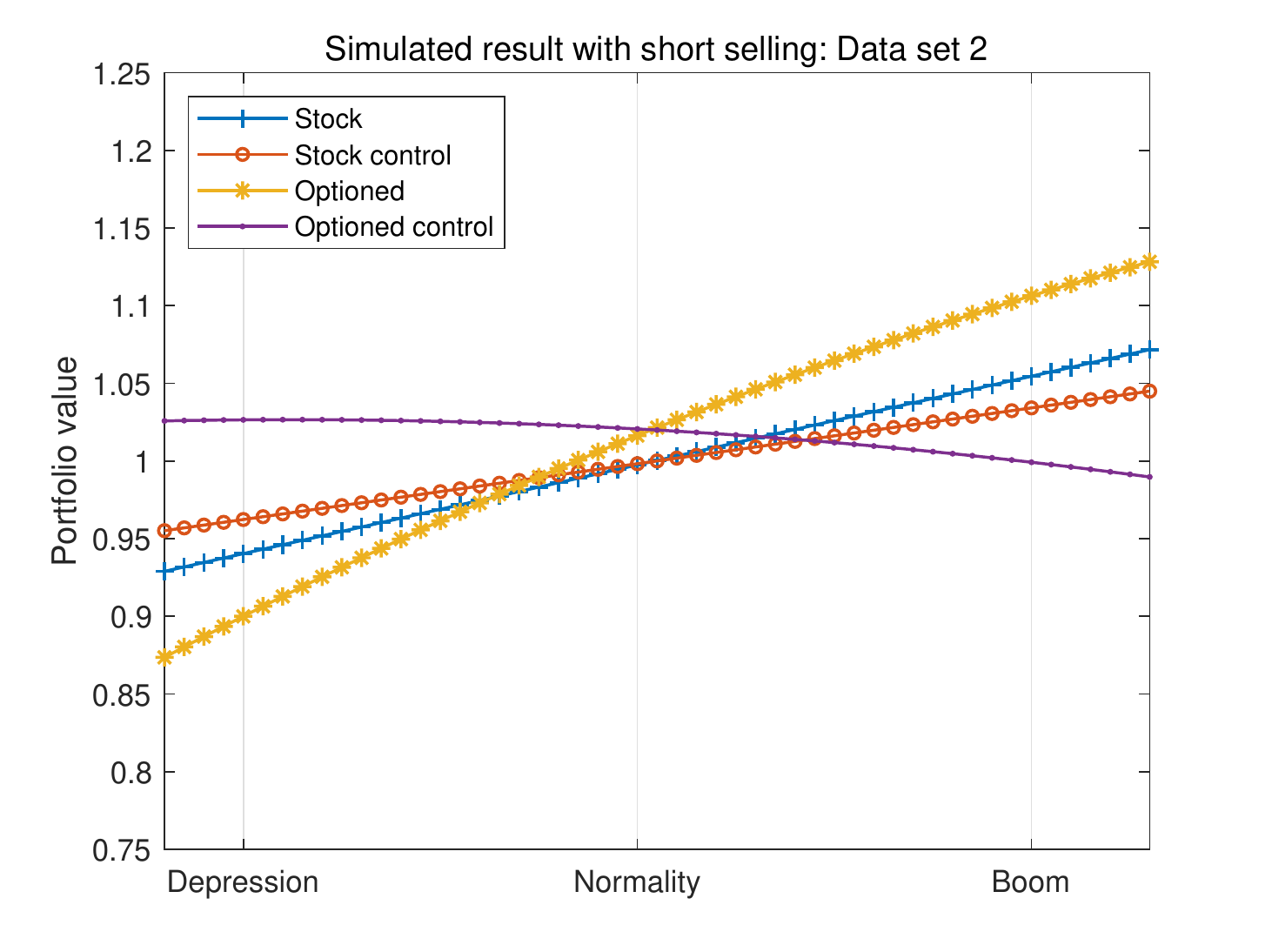}}
  \end{minipage}
  \caption{Portfolio performance under different scenarios}
  \label{fig9}
\end{figure}

\subsection{Out-of-sample test}
In this subsection, we compare the out-of-sample performance of different strategies. To be more practical, we model the stock returns as a multivariate normal distribution with heteroscedasticity, where the variances of stocks follow the GJR-GARCH model (\cite{glosten1993}) and the correlation matrix follows the DCC model (\cite{Engle2002}). Accordingly, the option pricing under GARCH process is referred to \cite{Duan1995}. Some details of the model specification are delegated to Appendix F. The criteria for determining the systemically important assets are the same as that of the scenario analysis of in-sample test (Section 4.1.3).

\subsubsection{Back testing during financial crisis via simulation}
As one of the most typical systemic risk event in recent years, the financial crisis in 2008 caused the stock market to suffer heavy losses. To verify our model in that period, we compare the performance of {\it stock}, {\it optioned} and two {\it optioned control} strategies during January 2008 to December 2009. As the option data in that period is unavailable for us, we simulate four types of European options, ATM call option, ATM put option, 20\% OTM call option and 20\% OTM put option. We use the Monte Carlo method to estimate the prices and ``Greeks". Specifically, we simulate 20000 different paths for each stock. In each path, we simulate stock returns with the GJR-GARCH model and discount the gain of options at the expiration date. The average of discounted gains of all paths is the estimated option prices. The ``Greeks'' are calculated by the finite difference method. When approaching the due date, the effect of estimation error on ``Greeks'' is large. Therefore, we set the expiration of options at 1.5 years, but all of the options are tradable just for one year and then replaced by new options.

We start with an initial portfolio value $v_0=1$ for each strategy and rebalance the strategies once a week. At the beginning of every week, the parameters of the optimization model are estimated with historical data from January 2000 and the systematically important stocks are identified. At the beginning of period $t$, we set $l_{d_i}=l_{p_i}=0$, $u_{d_i}=u_{p_i}=0.1v_t$ and $\bar\sigma=0.1$ for all strategies, where $v_t$ is the portfolio value at the beginning of period $t$. We set $\bar\rho_1=0.05$ and $\bar\rho_2=0.001$. The corresponding portfolios are defined as {\it high risk control} and {\it low risk control}. Assume that $(\bm{x}_t,\bm{y}_t)$ is the optimal portfolio. Then the cash $k_t$ is calculated by $k_t=v_t-\bm{d}_t^\top\bm{x}_t-\bm{p}_t^\top\bm{y}_t$, where $\bm{d}_{t}$ and $\bm{p}_{t}$ are the prices of options and stocks at the beginning of period $t$. If $k_t>0$, we invest the cash on the risk-free asset and earn a risk-free interest rate. The portfolio value at the beginning of the next period is
\begin{eqnarray*}
&&v_{t+1}=\bm{d}_{t+1}^\top\bm{x}_t+\bm{p}_{t+1}^\top\bm{y}_t+k_t(1+r\Delta t),
\end{eqnarray*}
where $r$ is the risk-free rate.
This process is repeated until the end of the out-of-sample period and the portfolio value of each period is collected.

Figure \ref{fig10} shows the portfolio values of different strategies from January 2008 to December 2009. As we can see, during the financial crisis, the optioned portfolios outperforms other strategies. When the market experiences a dramatic decline, the prices of the deep out-of-the-money put options increased sharply. Since the GARCH model predicted the market downturn, the optioned portfolios invest in those put options and derive a huge profit. Particularly, the systemic risk control strategies perform better than other strategies since they require a lower systemic risk.
\begin{figure}[H]
  \centering
  \includegraphics[width=12cm]{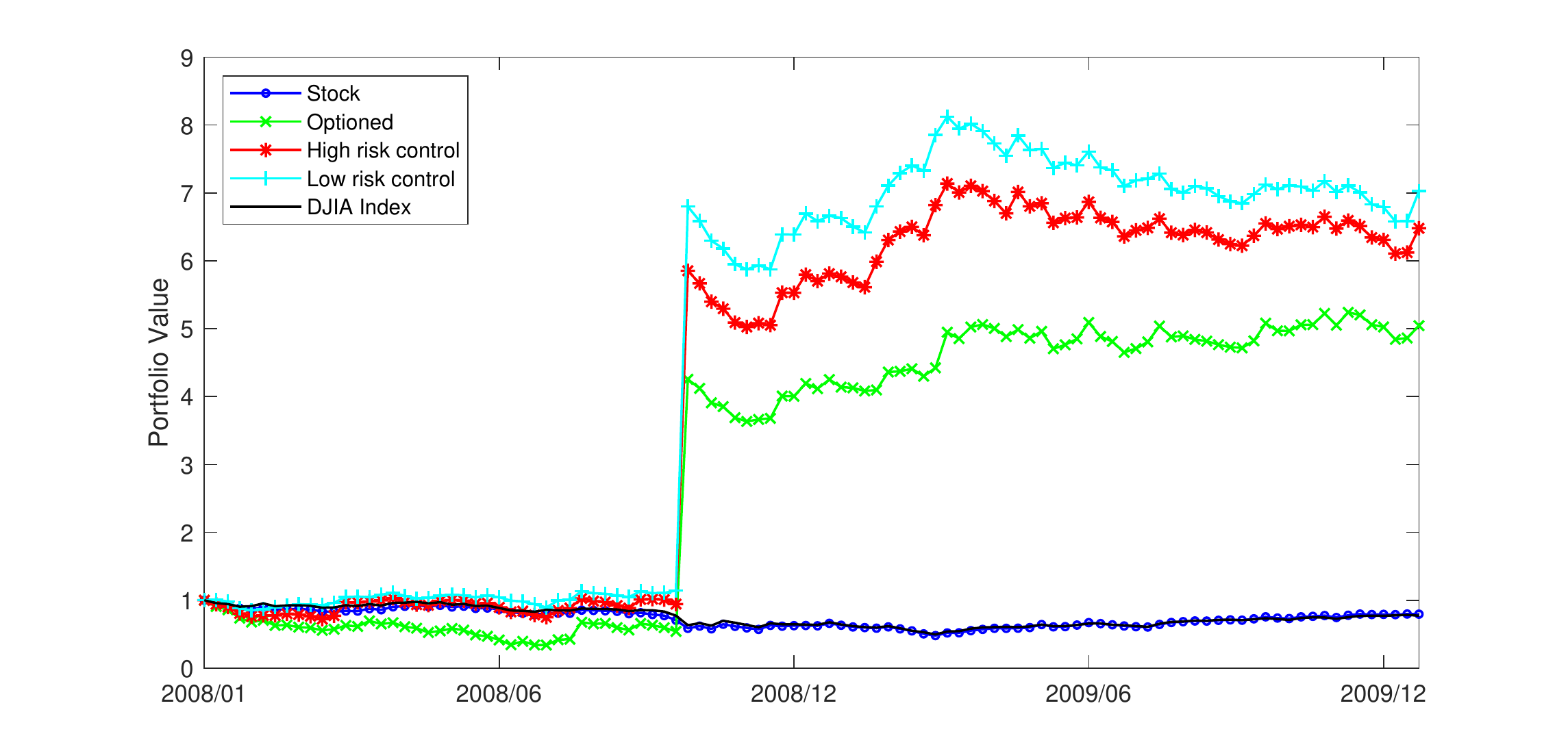}
  \caption{Out-of-sample performance during 2008-2009}
  \label{fig10}
\end{figure}
Furthermore, we calculate the weekly return of all strategies and summarize their performance: mean (Mean), standard deviation (Std), minimal return (Min), average drawdown (ADD, \cite{alexander2006}), upside potential ratio (UP ratio, \cite{sortino1991}) and downside Sharpe ratio (DS ratio, \cite{ziemba2005}). According to Table 1, the {\it low risk strategy} performs better than any other strategy, with four of six performance measures being the best. It has the lowest minimal return, lowest average drawdown, highest upside potential ratio and highest downside sharp ratio. In addition, it has a satisfactory mean of returns.

\begin{table}[H]
  \centering
  \caption{Performance of different strategies during 2008-2009}
  \begin{tabular}{lcccccc}
\hline
Strategy &  Mean &     Std &         Min &        ADD &         UP ratio &          DS ratio\\
\hline
Stock & -0.0014  & \textbf{0.0409} & -0.1681 & 0.4065 & 0.4660 & -0.0350 \\
Optioned & \textbf{0.0655} & 0.6819 & -0.1683 & 0.3346 &	1.9821 & 1.0919 \\
High risk control & 0.0521 & 0.5152 & -0.1065 &	0.0999 & 2.4582 & 1.4523 \\
Low risk control & 0.0503 &	0.4888 & \textbf{-0.1055} &	\textbf{0.0849} & \textbf{2.9732} &	\textbf{1.8397} \\
DJIA Index & -0.0017 & 0.0389 & -0.1815 & 0.3675 & 0.4341 & -0.0441 \\
\hline
  \end{tabular}
\end{table}

\subsubsection{Back testing with real data}
We use real data to examine the out-of-sample performance of the four strategies adopted in section 4.2.1 with the same parameter settings. Here, the ask-bid spread of options is considered as the transaction cost since it is non-negligible for the trade of options. In this case, the cash $k_t$ is calculated by $k_t=v_t-(\bm{d}_t^{ask})^\top\bm{x}_t^+-(\bm{d}_t^{bid})^\top\bm{x}_t^--\bm{p}_t^\top\bm{y}_t$, where $\bm{d}_{t}^{ask}$ and $\bm{d}_{t}^{bid}$ are the ask prices and bid prices of options at the beginning of period $t$. The portfolio value at the beginning of period $t+1$ is
\begin{eqnarray*}
&&v_{t+1}=(\bm{d}_{t+1}^{ask})^\top\bm{x}_t^++(\bm{d}_{t+1}^{bid})^\top\bm{x}_t^-+\bm{p}_{t+1}^\top\bm{y}_t+k_t(1+r\Delta t).
\end{eqnarray*}
We still start with an initial wealth of one for each strategy and rebalance the strategies once a week.
\begin{figure}[H]
  \begin{minipage}{0.5\linewidth}
  \centerline{\includegraphics[width=8cm]{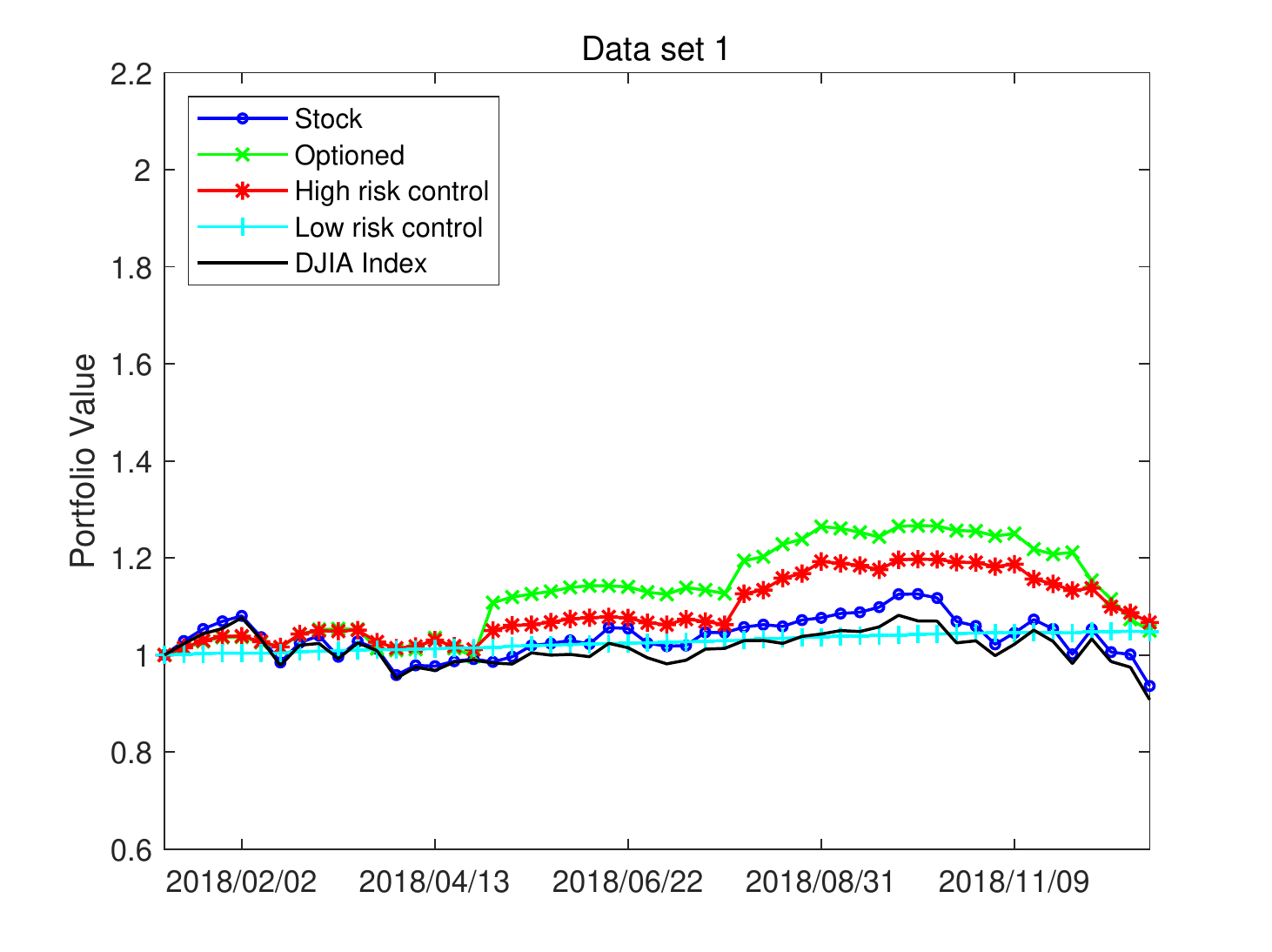}}
  \end{minipage}
  \hfill
  \begin{minipage}{0.5\linewidth}
  \centerline{\includegraphics[width=8cm]{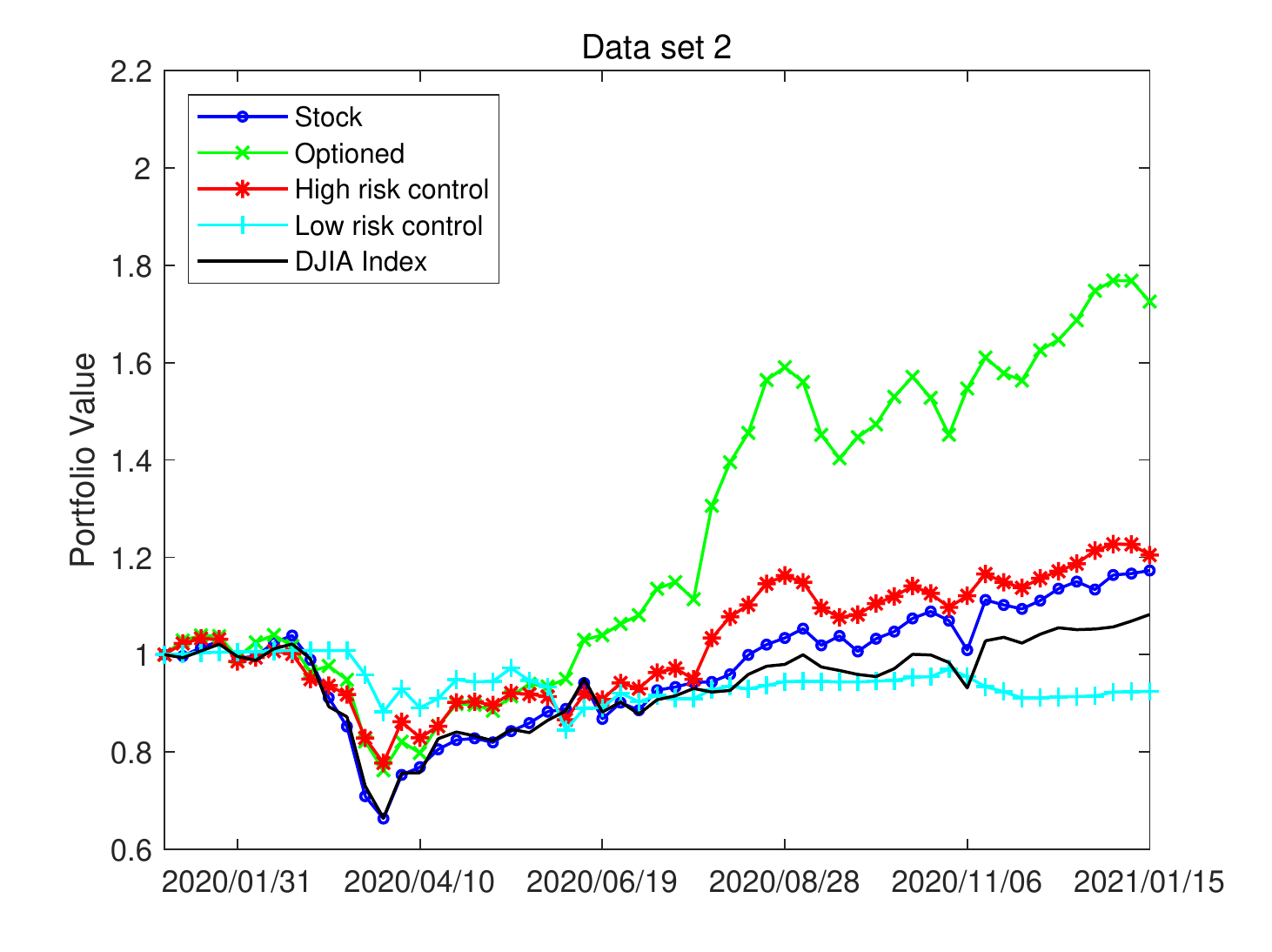}}
  \end{minipage}
  \caption{Out-of-sample performance of different strategies}
  \label{fig11}
\end{figure}

\begin{table}[H]
  \centering
  \caption{Performance measures of different strategies}
\begin{tabular}{clcccccc}
\hline
~&Strategy &  Mean &     Std &         Min &        ADD &         UP ratio &          DS ratio\\
\hline
\multirow{5}*{Data set 1}&Stock & -0.0009 & 0.0264 & -0.0648 & 0.0534 & 0.4198 & -0.0296 \\
~&Optioned & 0.0012 & 0.0237 & -0.0478 &	0.0222 & 0.5395 & 0.0669 \\
~&High risk control & \textbf{0.0014} & 0.0154 & -0.0335 & 0.0161 & \textbf{0.6036} &	0.1115 \\
~&Low risk control&  0.0009 & \textbf{0.0008} & \textbf{-0.0010} & \textbf{0.0001} &	0.4568 & \textbf{1.5556} \\
~&DJIA Index & -0.0016 &	0.0260 & -0.0687 & 0.0608 &	0.4084 & -0.0517 \\
\hline
\multirow{5}*{Data set 2}&Stock & 0.0040 & 0.0450 & -0.1682 & 0.1015 & 0.5073 & 0.0677 \\
~&Optioned & \textbf{0.0112} & 0.0473 & -0.1339 &	\textbf{0.0627} & \textbf{0.8292} & \textbf{0.2531} \\
~&High risk control & 0.0041 & 0.0356 &	-0.0973 & 0.0716 & 0.6397 &	0.1224 \\
~&Low risk control& -0.0011 & \textbf{0.0250} & \textbf{-0.0949} & 0.0705 &	0.2861 & -0.0327 \\
~&DJIA Index & 0.0025 & 0.0457 &	-0.1630 & 0.0999 &	0.4609 & 0.0504 \\
\hline
\end{tabular}
\end{table}
Figure \ref{fig11} shows the changes in the portfolio value of different strategies. During 2018, which is close to the normal market situation, the portfolio value of {\it low risk control} strategy changes little and the {\it optioned} strategy performs the best. During 2020, the market experiences a dramatic decline during the outbreak of COVID-19 and then a slow recovery. The {\it low risk control} strategy performs well during the downturn but poorly during the recovery. This result is consistent with our scenario analysis of the in-sample test. The {\it high risk control} strategy has moderate performance in both periods. Performance measures in Table 2 show that in general the {\it low risk control} strategy outperforms others during 2018 and performs more stable during 2020. The performance of the {\it high risk control} strategy is between the {\it low risk control} and {\it optioned} strategy. Figures \ref{fig12} and \ref{fig13} show the return distribution of different strategies during the out-of-sample period. The thinner tail of return distribution indicates that the {\it low risk control} strategy is more stable than other strategies.

\begin{figure}
  \begin{minipage}{0.5\linewidth}
  \centerline{\includegraphics[width=7.5cm]{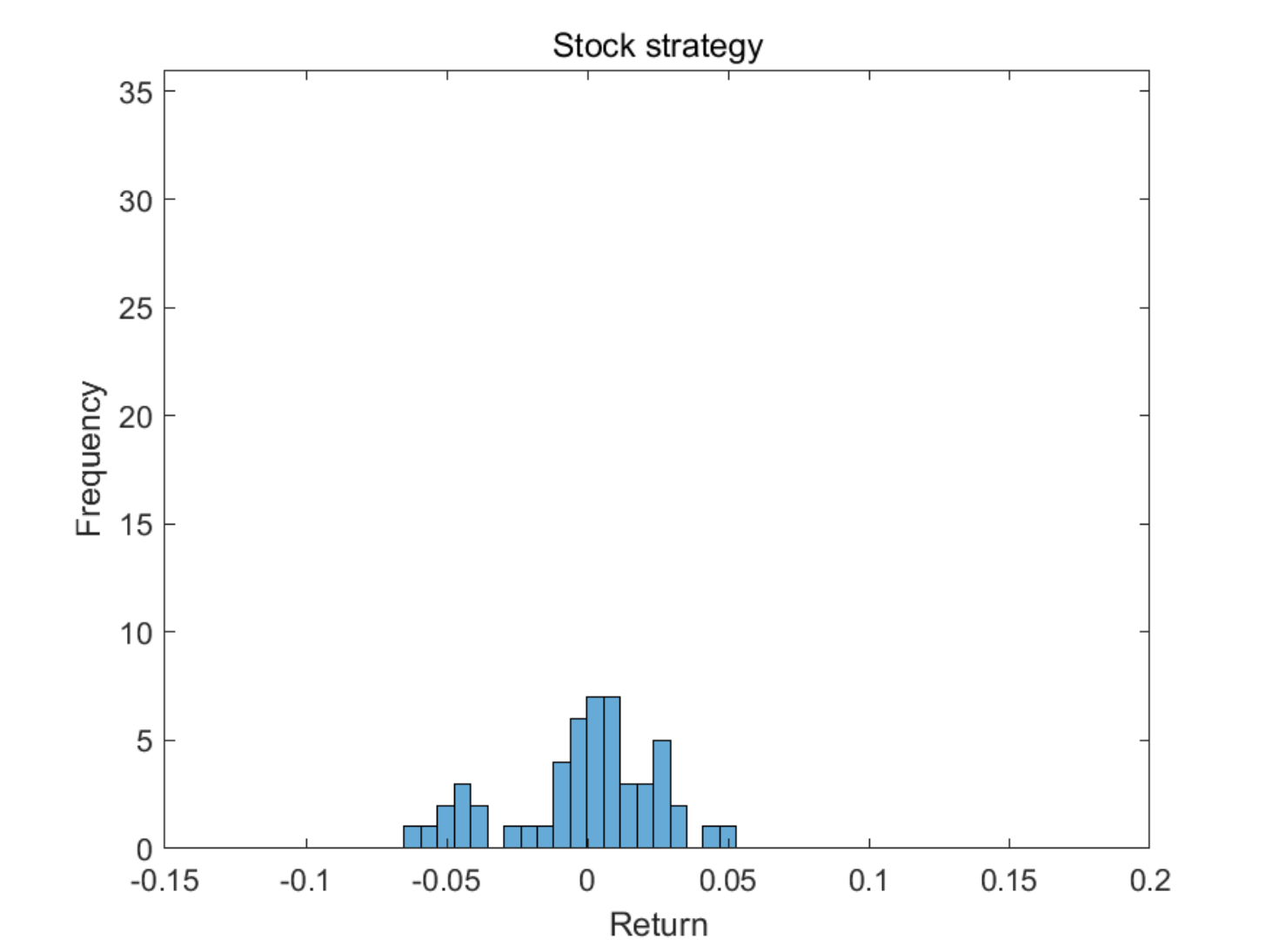}}
  \end{minipage}
  \hfill
  \begin{minipage}{0.5\linewidth}
  \centerline{\includegraphics[width=7.5cm]{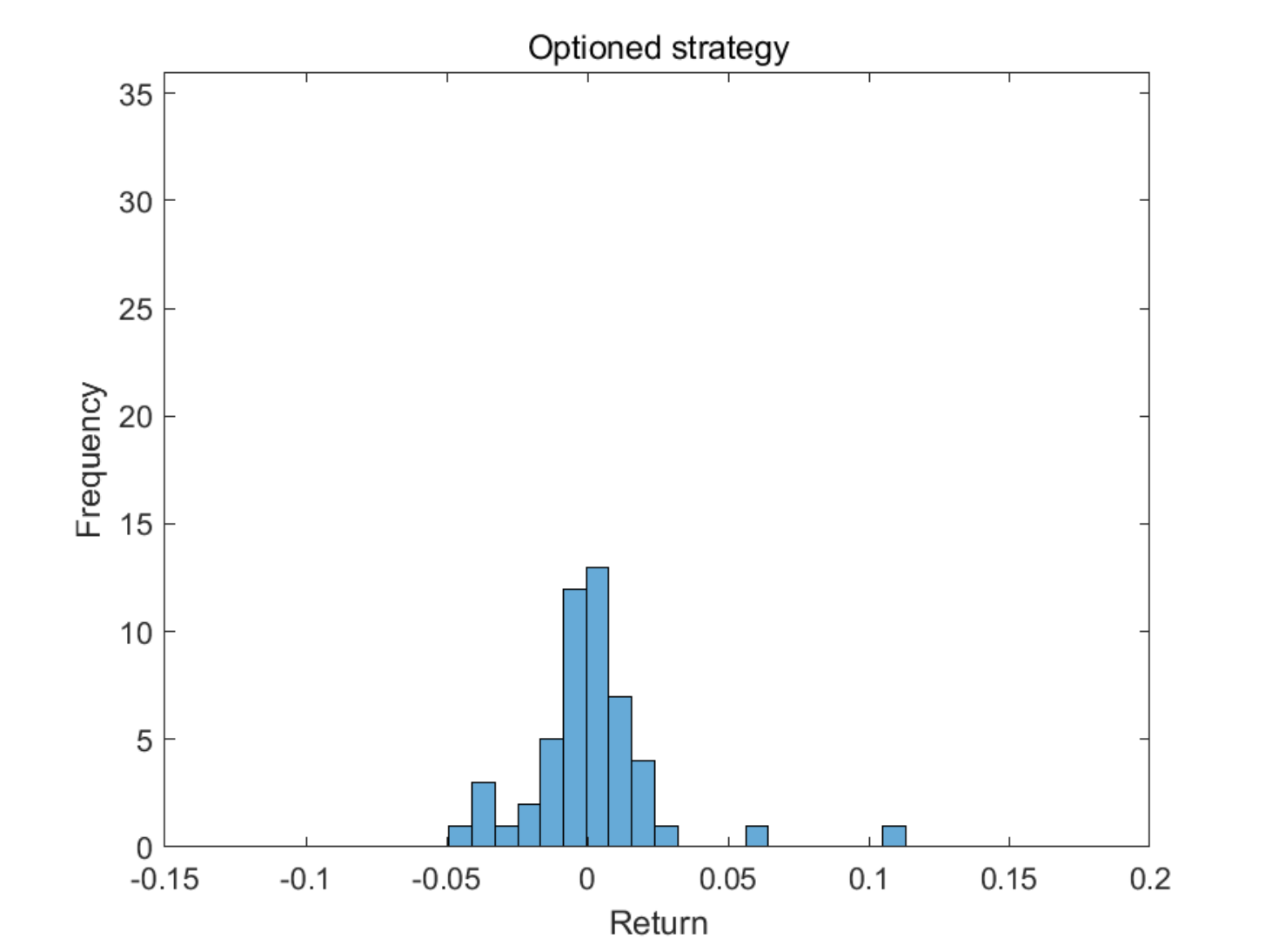}}
  \end{minipage}
  \vfill
  \begin{minipage}{0.5\linewidth}
  \centerline{\includegraphics[width=7.5cm]{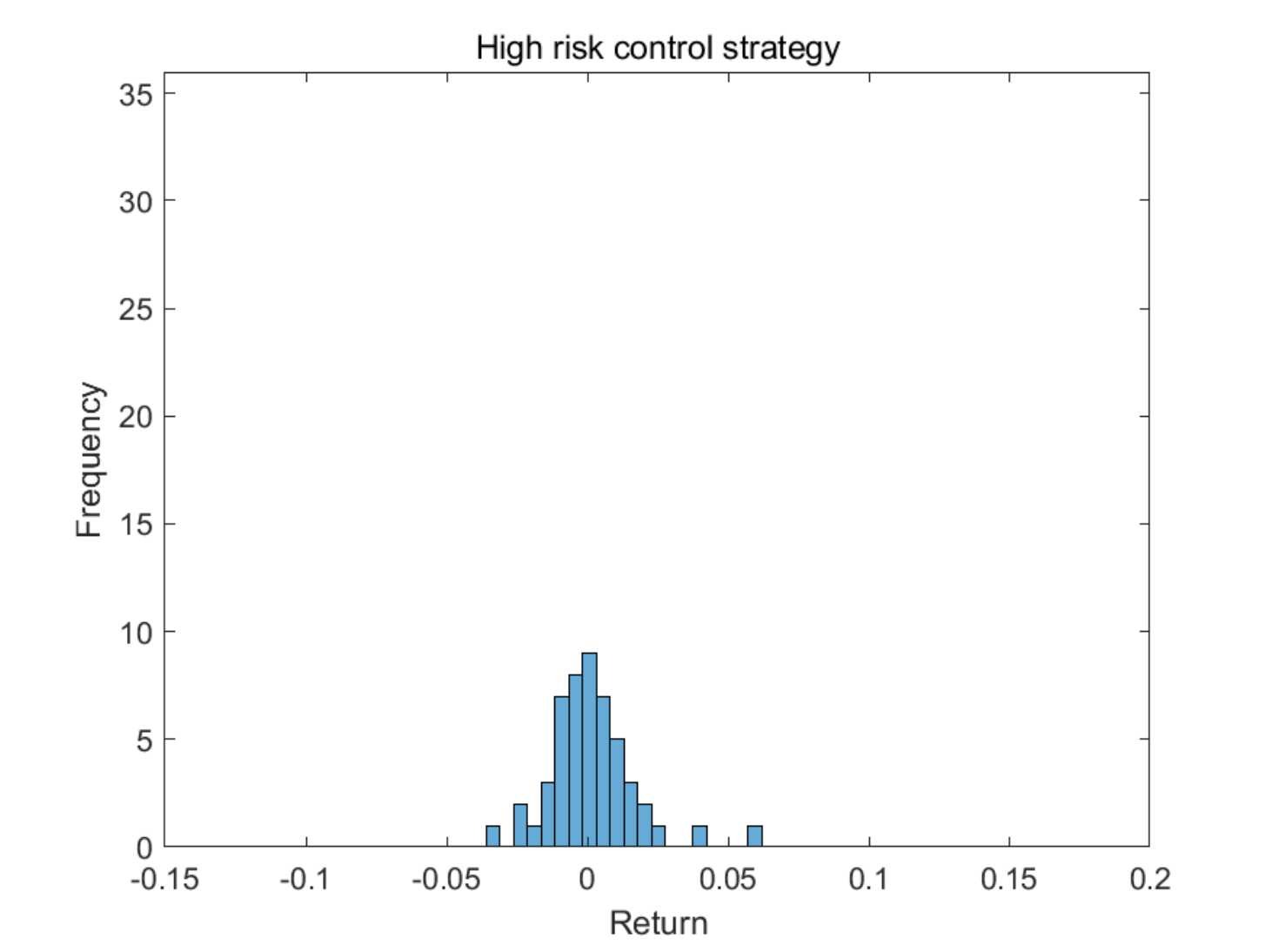}}
  \end{minipage}
  \hfill
  \begin{minipage}{0.5\linewidth}
  \centerline{\includegraphics[width=7.5cm]{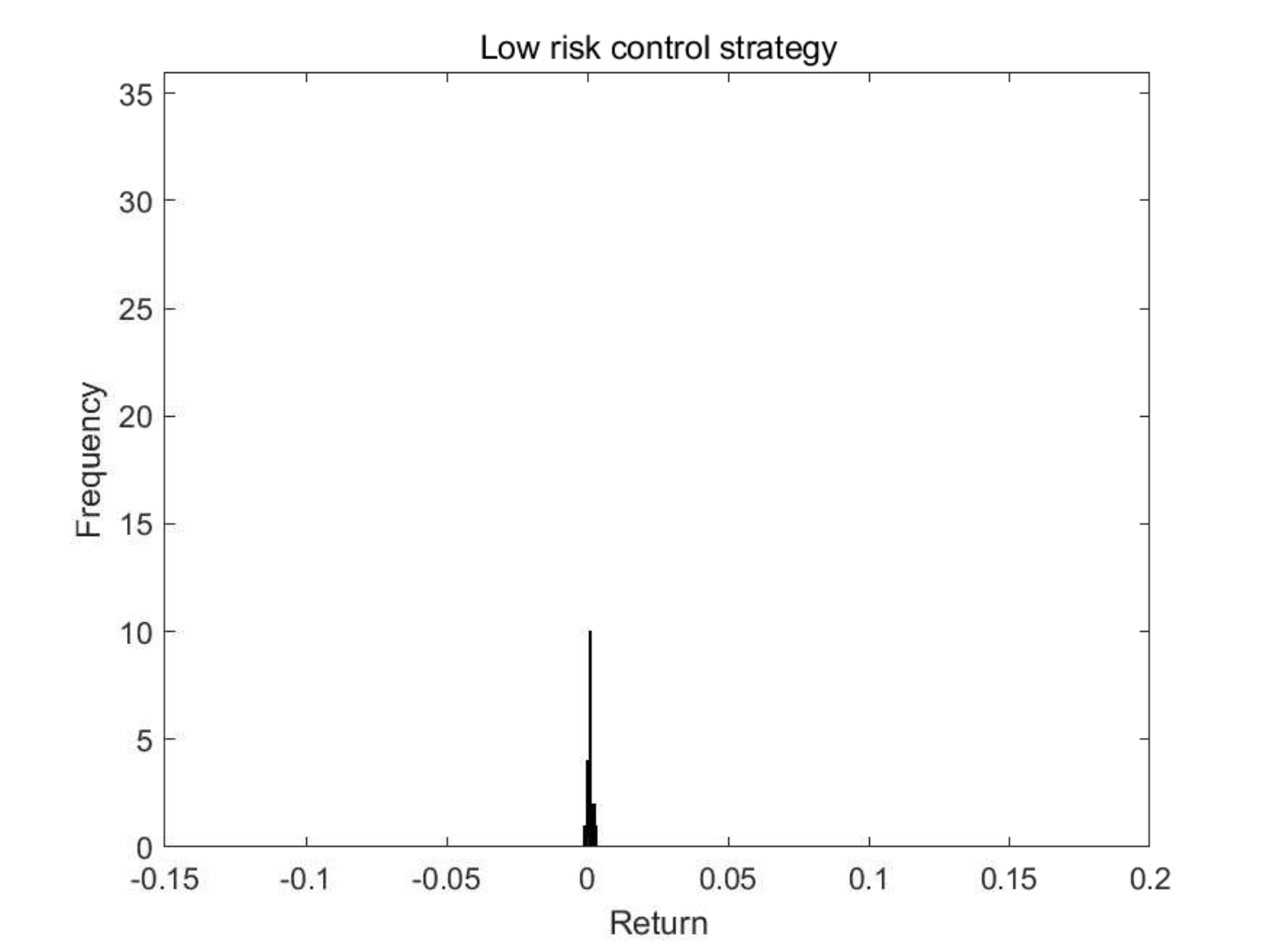}}
  \end{minipage}
  \caption{Out-of-sample return distribution: Data set 1}
  \label{fig12}
\end{figure}

\begin{figure}
  \begin{minipage}{0.5\linewidth}
  \centerline{\includegraphics[width=7.5cm]{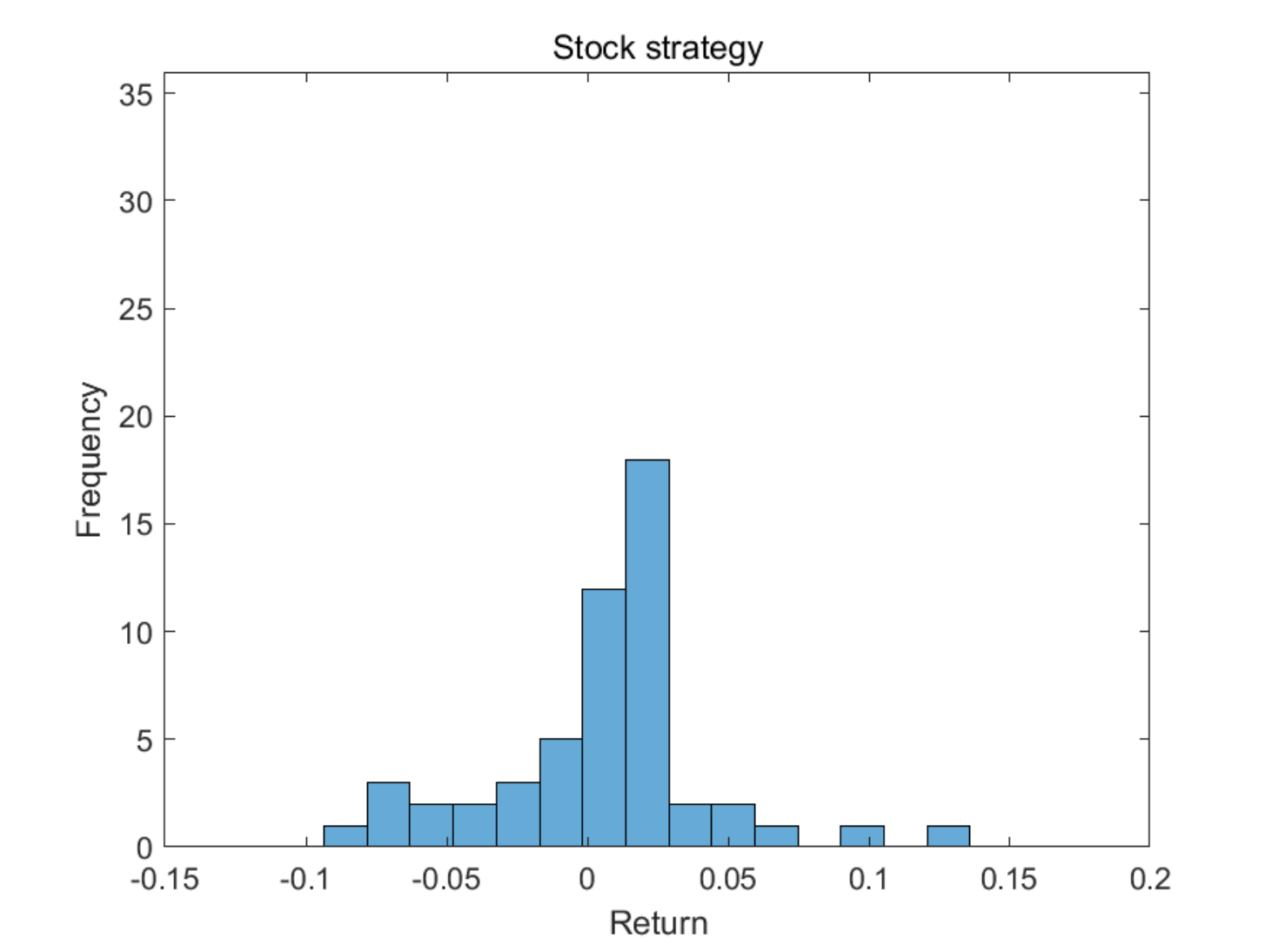}}
  \end{minipage}
  \hfill
  \begin{minipage}{0.5\linewidth}
  \centerline{\includegraphics[width=7.5cm]{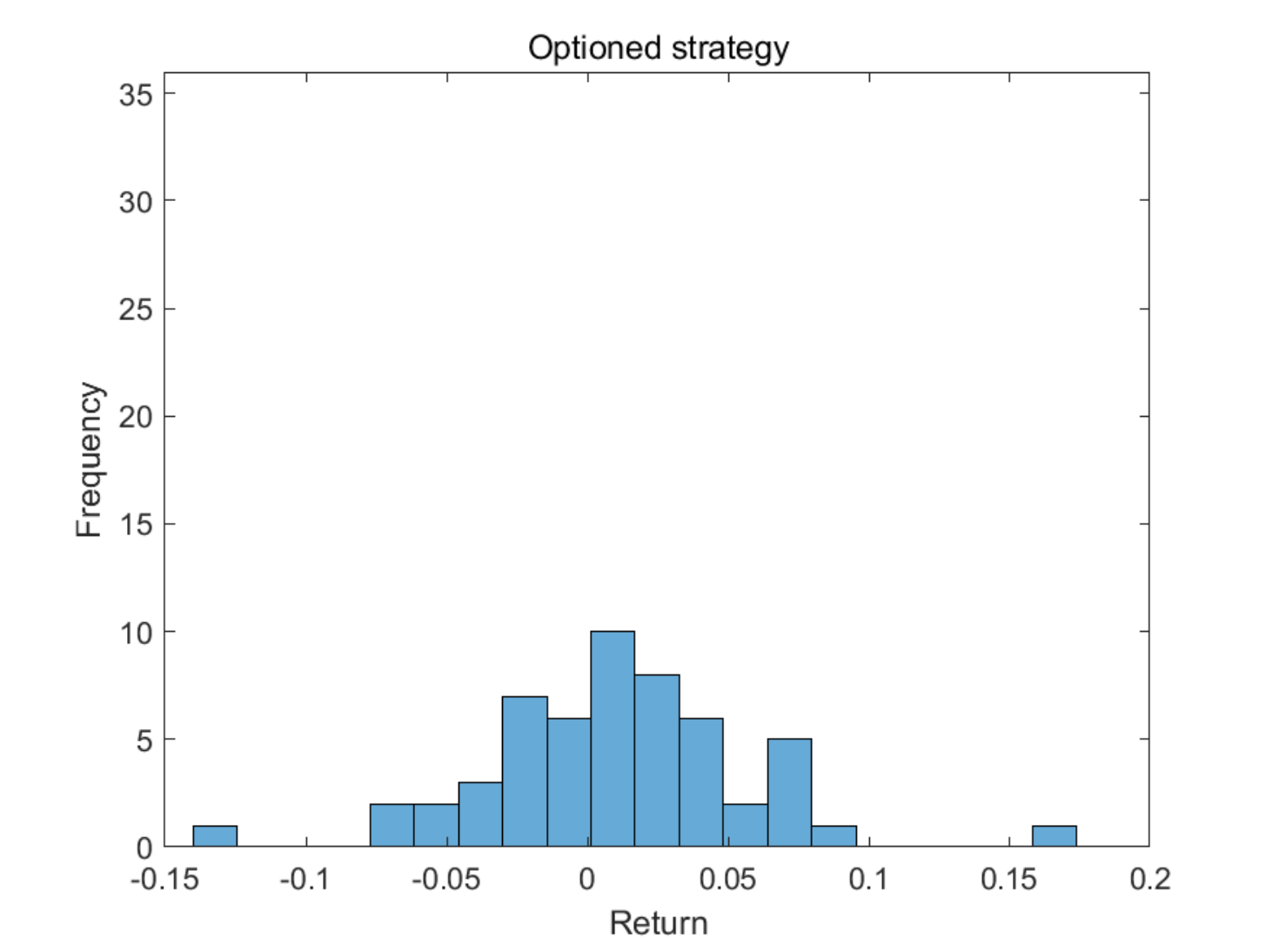}}
  \end{minipage}
  \vfill
  \begin{minipage}{0.5\linewidth}
  \centerline{\includegraphics[width=7.5cm]{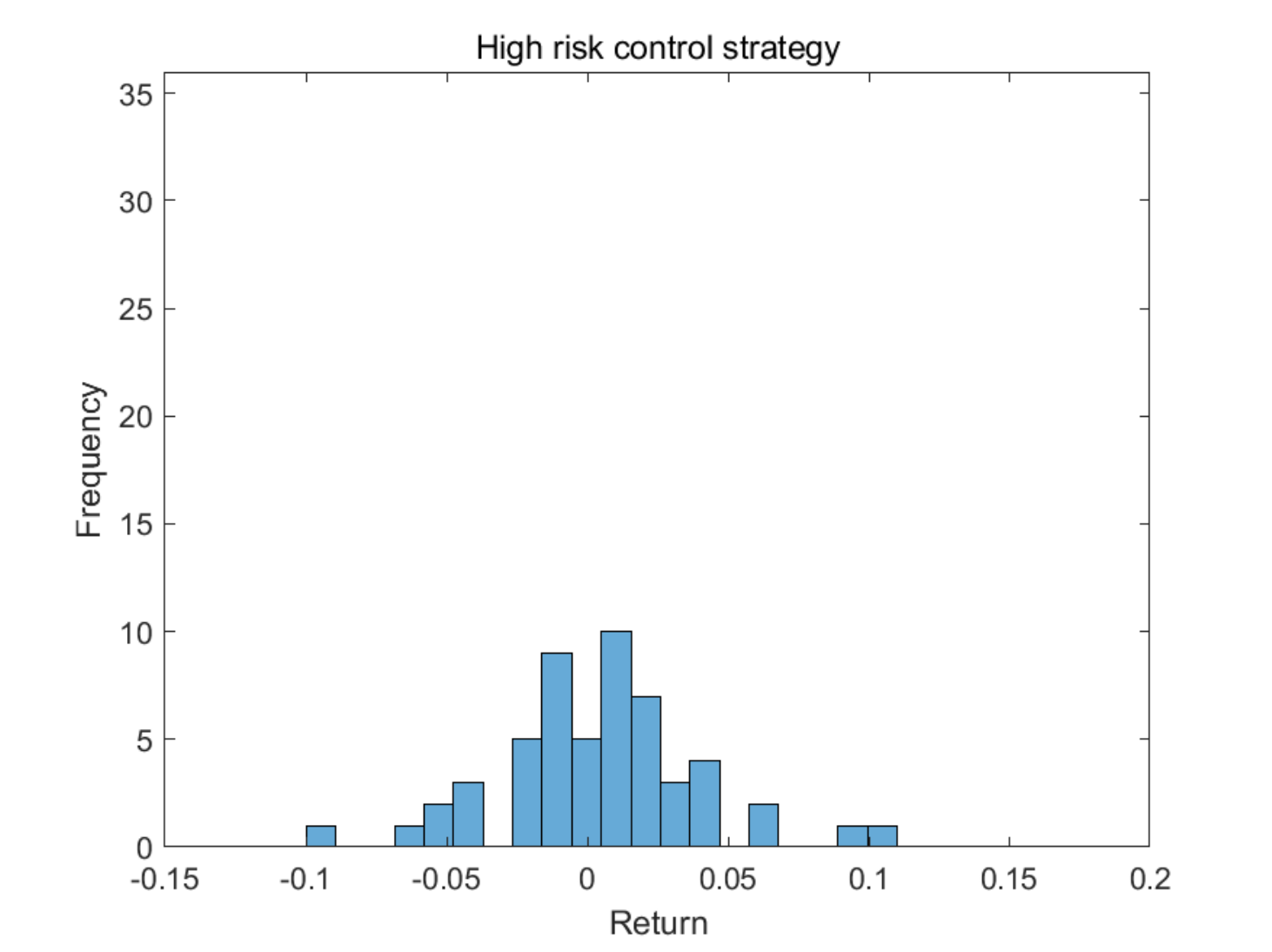}}
  \end{minipage}
  \hfill
  \begin{minipage}{0.5\linewidth}
  \centerline{\includegraphics[width=7.5cm]{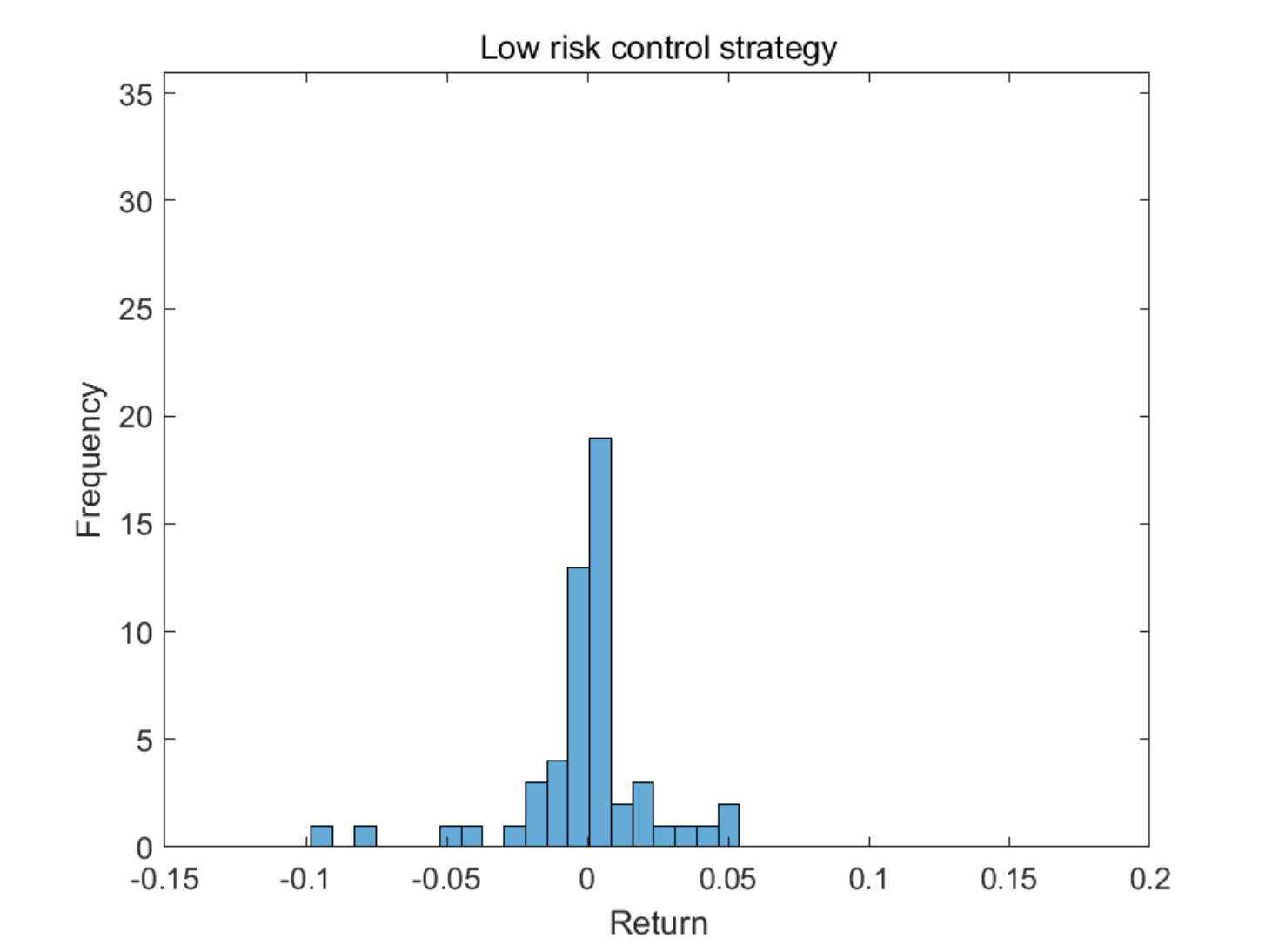}}
  \end{minipage}
  \caption{Out-of-sample return distribution: Data set 2}
  \label{fig13}
\end{figure}

We can see from the out-of-sample period of 2020 that the {\it low risk control} strategy performed poorly during the time of booming market, which is due to the strict systemic risk constraint that is unnecessarily required in such a situation. To take full advantage of our model, the systemic risk constraint should be used with discretion. In the following, we examine the performance of systemic risk control strategy with risk prediction. Specifically, at the beginning of each investment period, we use GJR-GARCH model to predict the variance of each stock. If more than half of stocks have a higher predictive variance than the currently estimated variance, we use the {\it low risk control} strategy, otherwise use the {\it optioned} strategy. This strategy is defined as the {\it discretion optioned} strategy. Similarly, for the pure stock portfolio, if more than half of stocks have a higher predictive variance, we execute stock strategy with $\bar\sigma=0.02$, otherwise with $\bar\sigma=0.1$. We define it as the {\it discretion stock} strategy.
\begin{figure}[H]
  \begin{minipage}{0.5\linewidth}
  \centerline{\includegraphics[width=8cm]{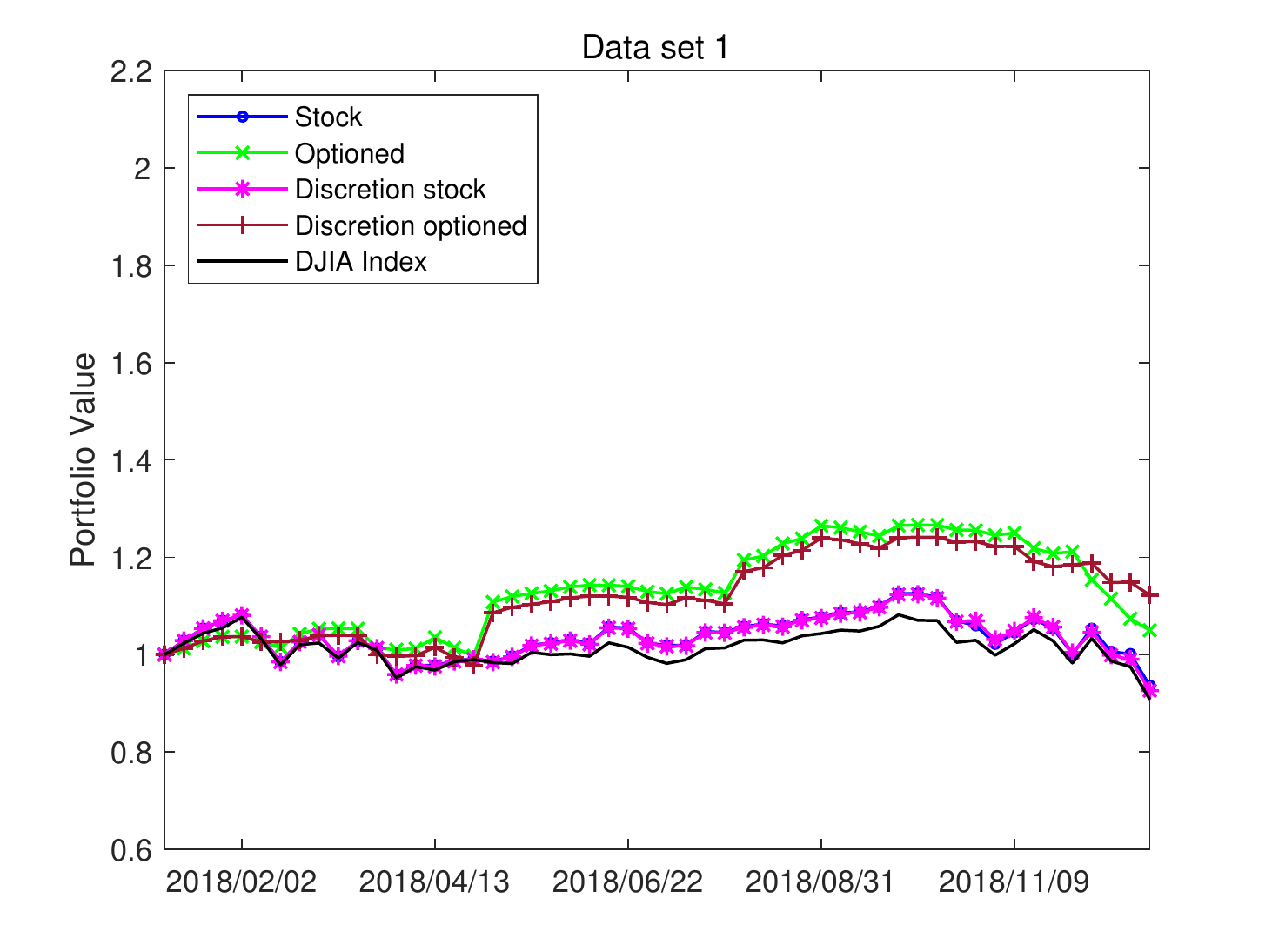}}
  \end{minipage}
  \hfill
  \begin{minipage}{0.5\linewidth}
  \centerline{\includegraphics[width=8cm]{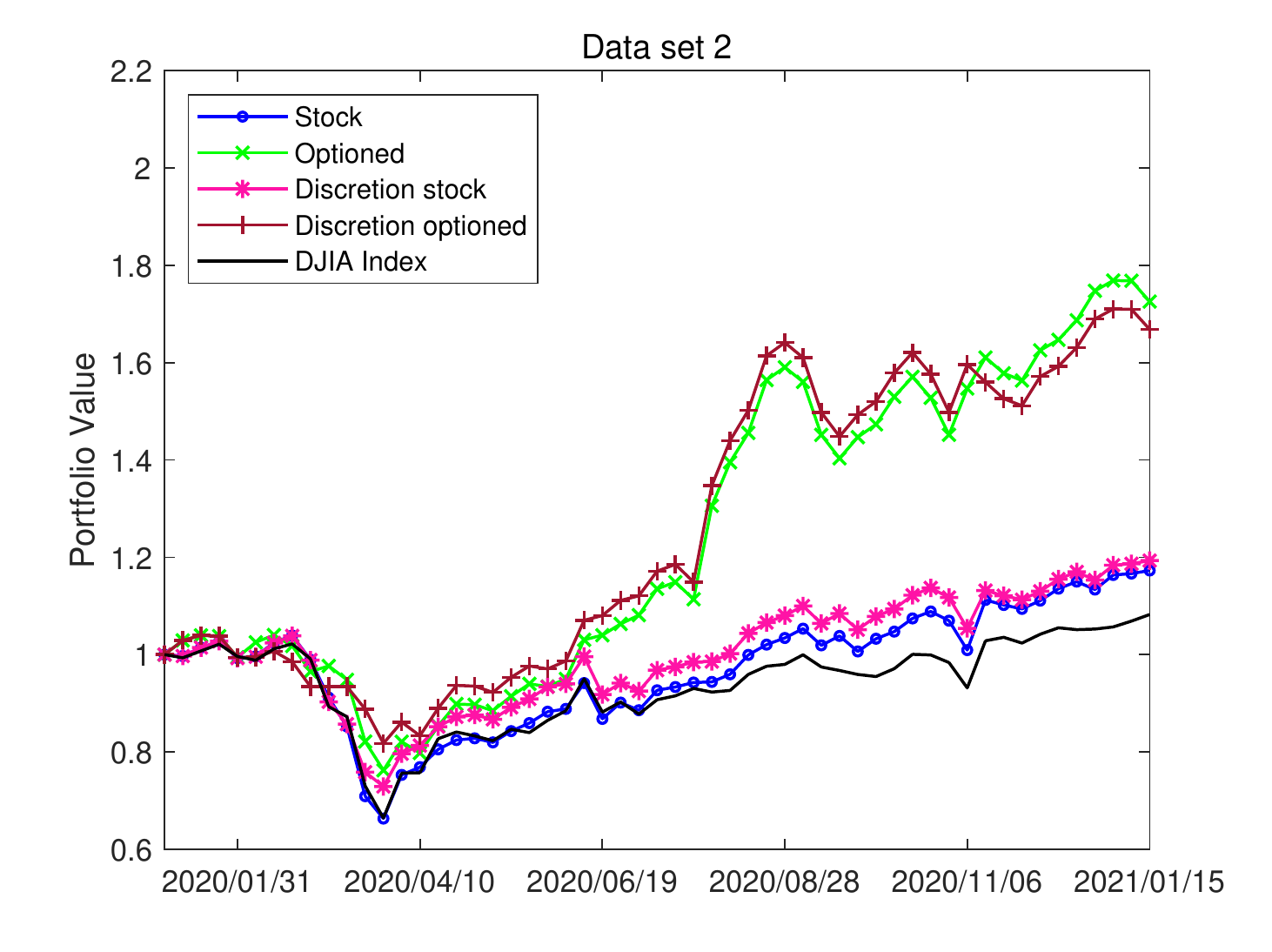}}
  \end{minipage}
  \caption{Out-of-sample performance of strategies with discretion}
  \label{fig14}
\end{figure}

\begin{table}[H]
  \centering
  \caption{Performance measures of different strategies}
\begin{tabular}{clcccccc}
\hline
~&Strategy &  Mean &     Std &         Min &        ADD &         UP ratio &          DS ratio\\
\hline
\multirow{5}*{Data set 1}&Stock & -0.0009 & 0.0264 & -0.0648 &	0.0534 & 0.4198 & -0.0296 \\
~&Optioned& 0.0012 & 0.0237 & -0.0478 & 0.0222 & 0.5395 & 0.0669 \\
~&Discretion stock & -0.0012 & 0.0259 & -0.0648 &	0.0538 & 0.4041 & -0.0360 \\
~&Discretion optioned &  \textbf{0.0025} & \textbf{0.0216} & \textbf{-0.0376} & \textbf{0.0168} &	\textbf{0.6431} & \textbf{0.1644} \\
~&DJIA Index & -0.0016 &	0.0260 & -0.0687 & 0.0608 &	0.4084 & -0.0517 \\
\hline
\multirow{5}*{Data set 2}&Stock & 0.0040 &	0.0450 & -0.1682 & 0.1015 &	0.5073 & 0.0677 \\
~&Optioned & \textbf{0.0112} & 0.0473 & -0.1339 & \textbf{0.0627} & 0.8292 & 0.2531 \\
~&Discretion stock & 0.0040 &	\textbf{0.0373} & -0.1151 & 0.0753 &	0.5507 & 0.0812 \\
~&Discretion optioned & 0.0104 & 0.0431 &	\textbf{-0.0794} & 0.0583 & \textbf{0.9227} &	\textbf{0.2966} \\
~&DJIA Index & 0.0025 & 0.0457 &	-0.1630 & 0.0999 &	0.4609 & 0.0504 \\
\hline
\end{tabular}
\end{table}
As we can see in Figure \ref{fig14}, during 2018, the {\it optioned} strategy still performs best. The {\it discretion optioned} strategy is very close to the {\it optioned} strategy and much better than other strategies. During 2020, no matter in market collapses or market booms, the {\it discretion optioned} strategy has the best performance. It combines the advantage of {\it optioned} strategy and {\it low risk control} strategy. Most of the time, the portfolio value of {\it discretion optioned} strategy comes first. The performance measures in Table 3 also suggest that the {\it discretion optioned} strategy outperforms others. It has the lowest risk measures and highest return measures. Figures \ref{fig15} and \ref{fig16} display the return distribution of different strategies. We can see that the {\it discretion optioned} strategy achieves more high positive returns and maintains relatively small losses.

\begin{figure}[H]
  \begin{minipage}{0.5\linewidth}
  \centerline{\includegraphics[width=7.5cm]{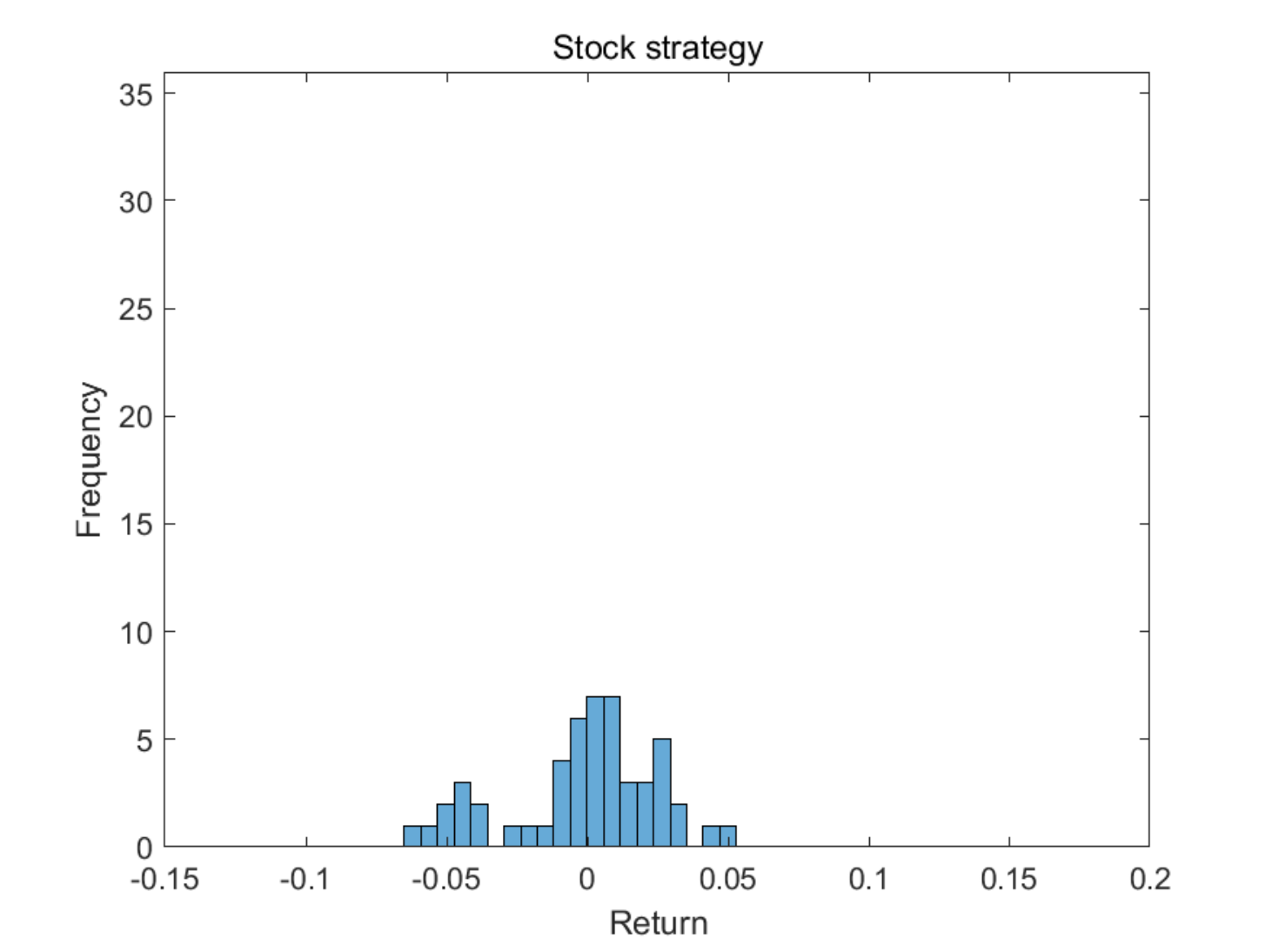}}
  \end{minipage}
  \hfill
  \begin{minipage}{0.5\linewidth}
  \centerline{\includegraphics[width=7.5cm]{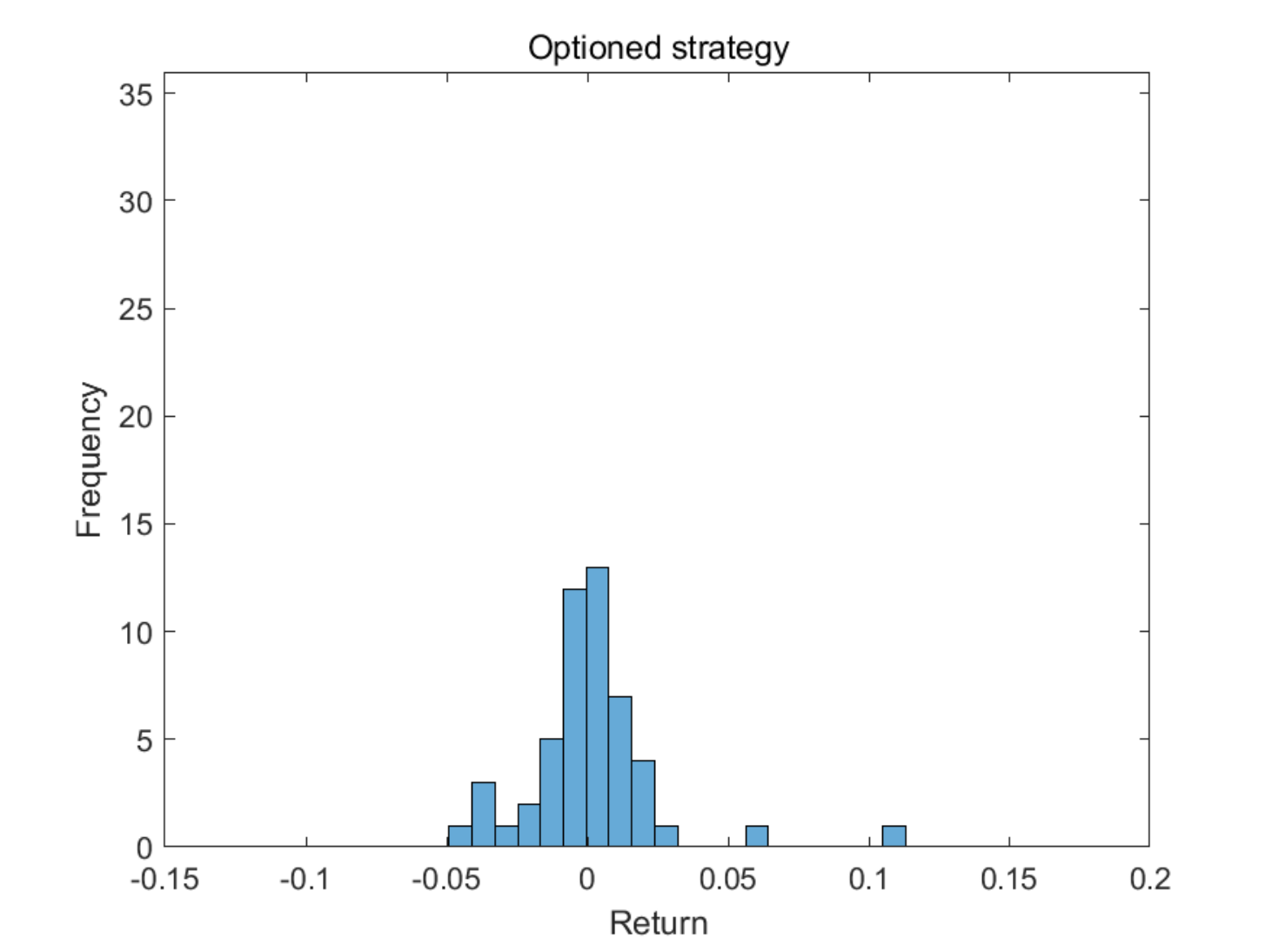}}
  \end{minipage}
  \vfill
  \begin{minipage}{0.5\linewidth}
  \centerline{\includegraphics[width=7.5cm]{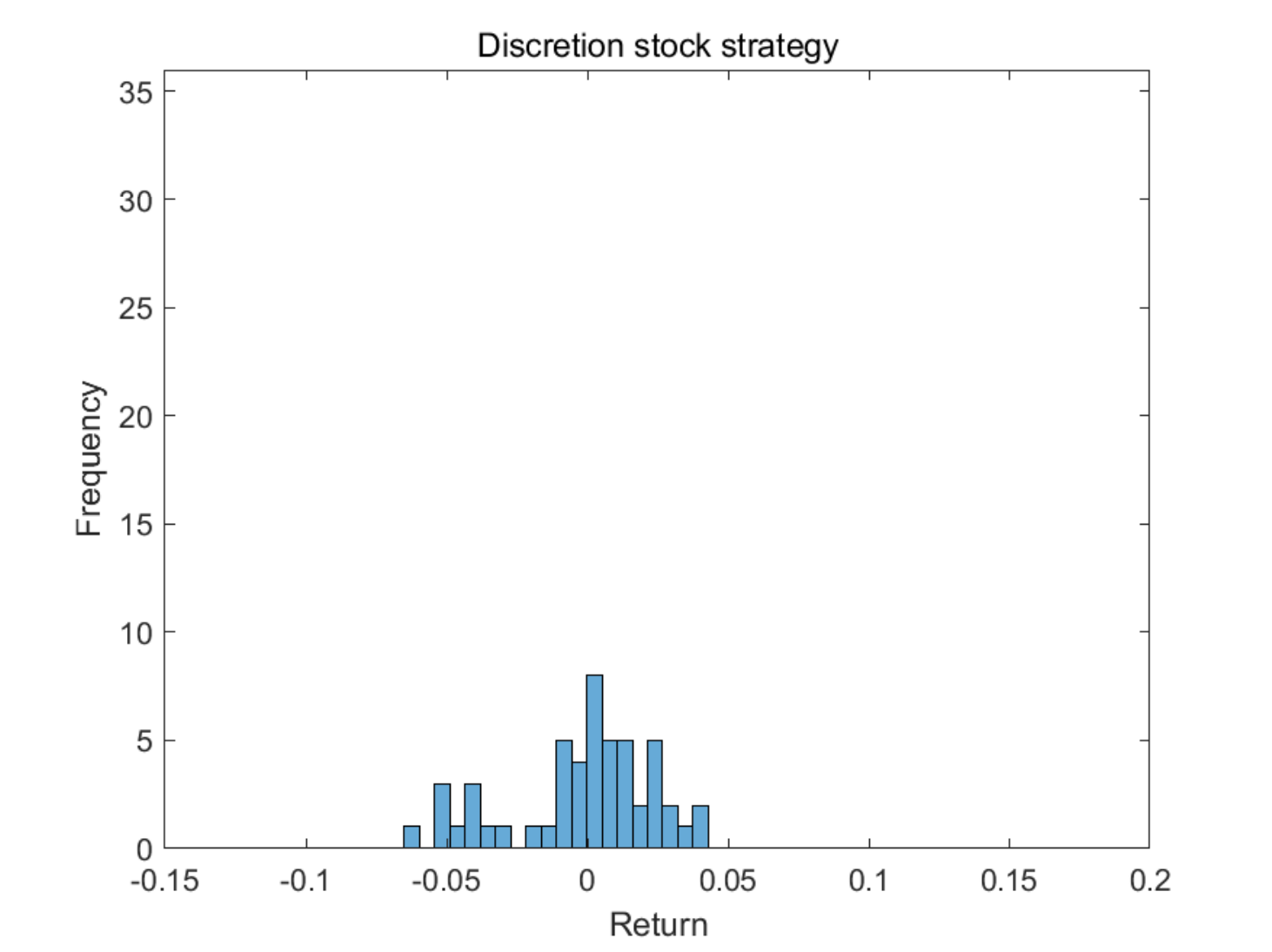}}
  \end{minipage}
  \hfill
  \begin{minipage}{0.5\linewidth}
  \centerline{\includegraphics[width=7.5cm]{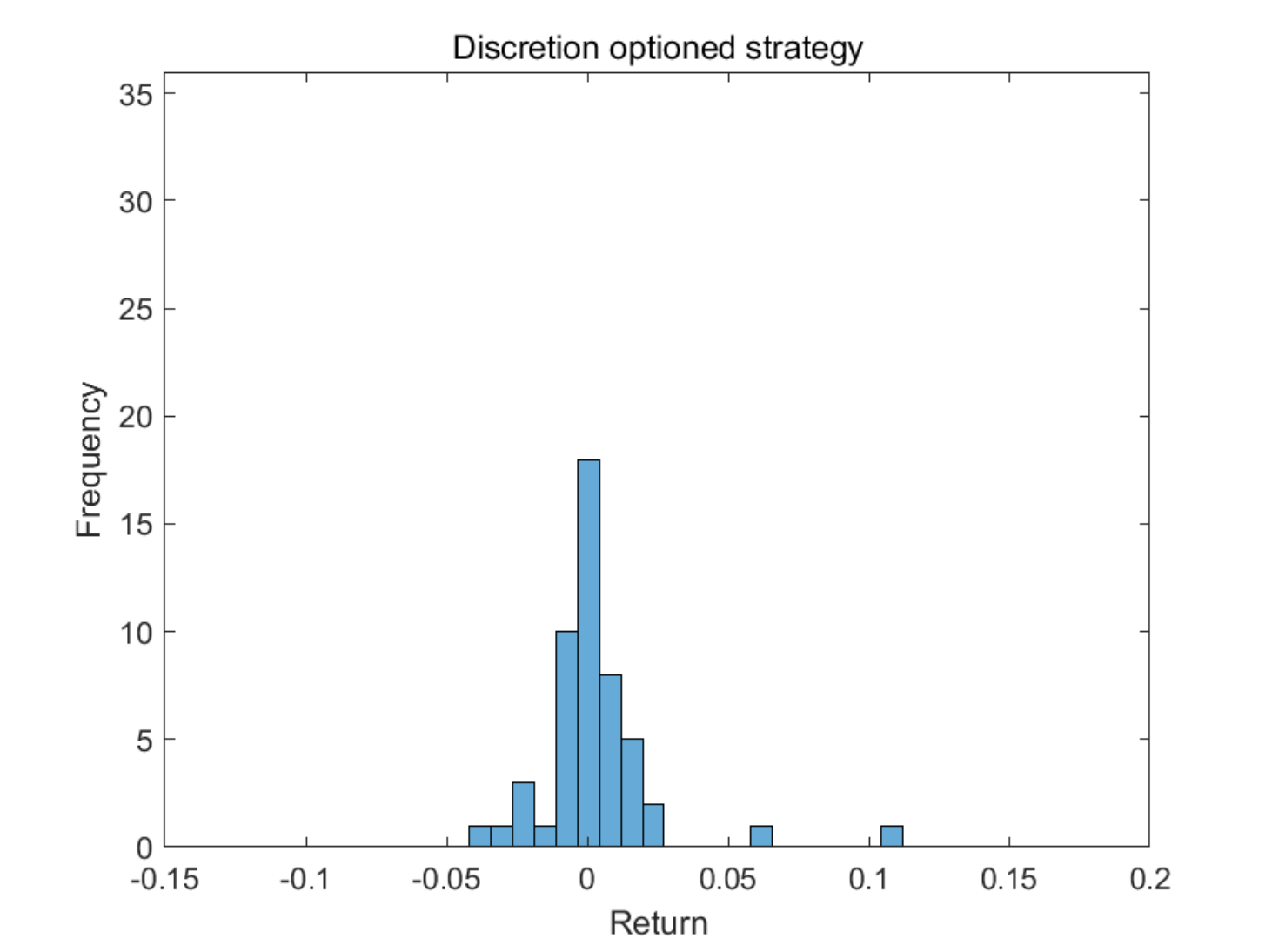}}
  \end{minipage}
  \caption{Out-of-sample return distribution: Data set 1}
  \label{fig15}
\end{figure}

\begin{figure}[H]
  \begin{minipage}{0.5\linewidth}
  \centerline{\includegraphics[width=7.5cm]{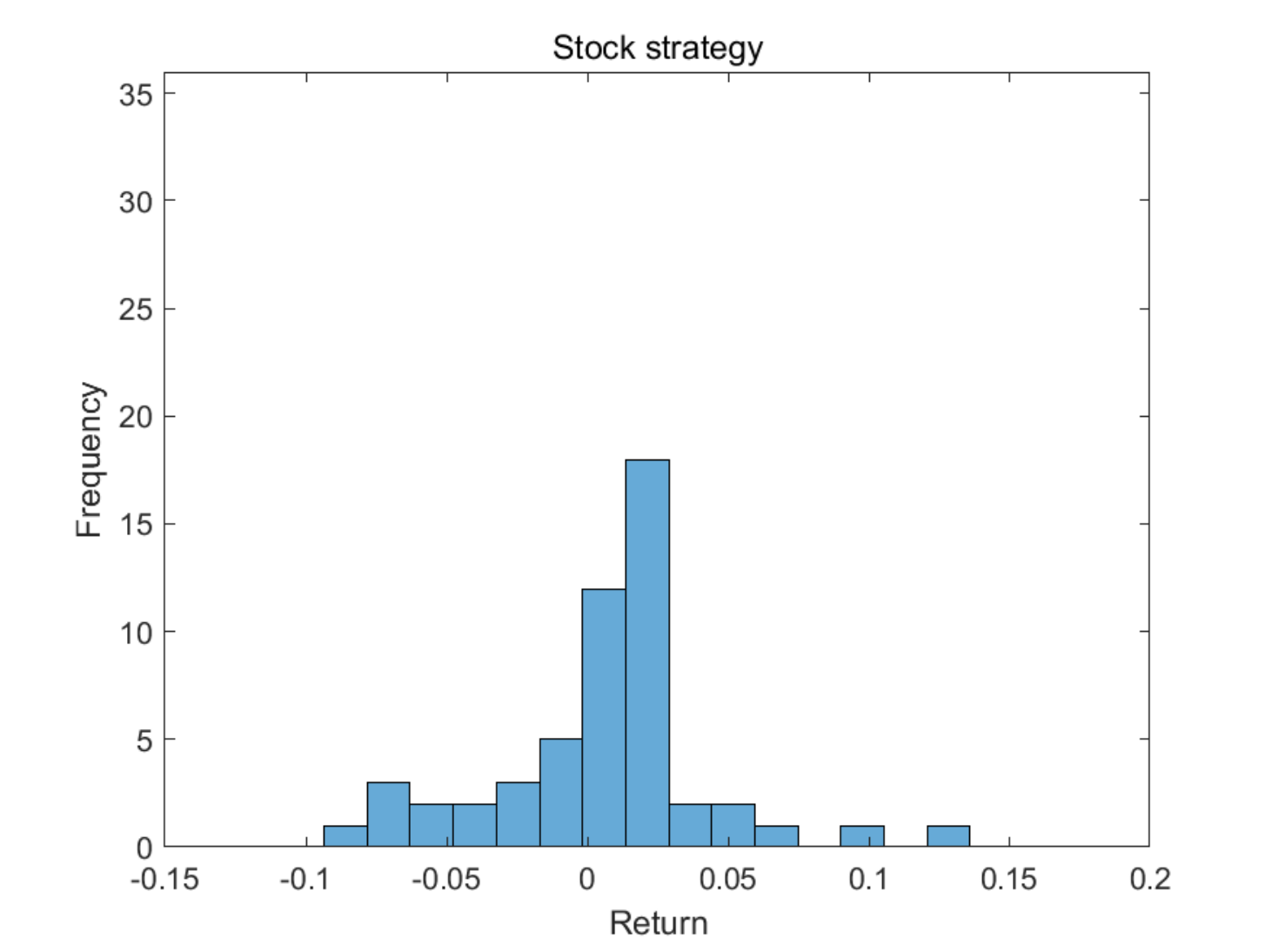}}
  \end{minipage}
  \hfill
  \begin{minipage}{0.5\linewidth}
  \centerline{\includegraphics[width=7.5cm]{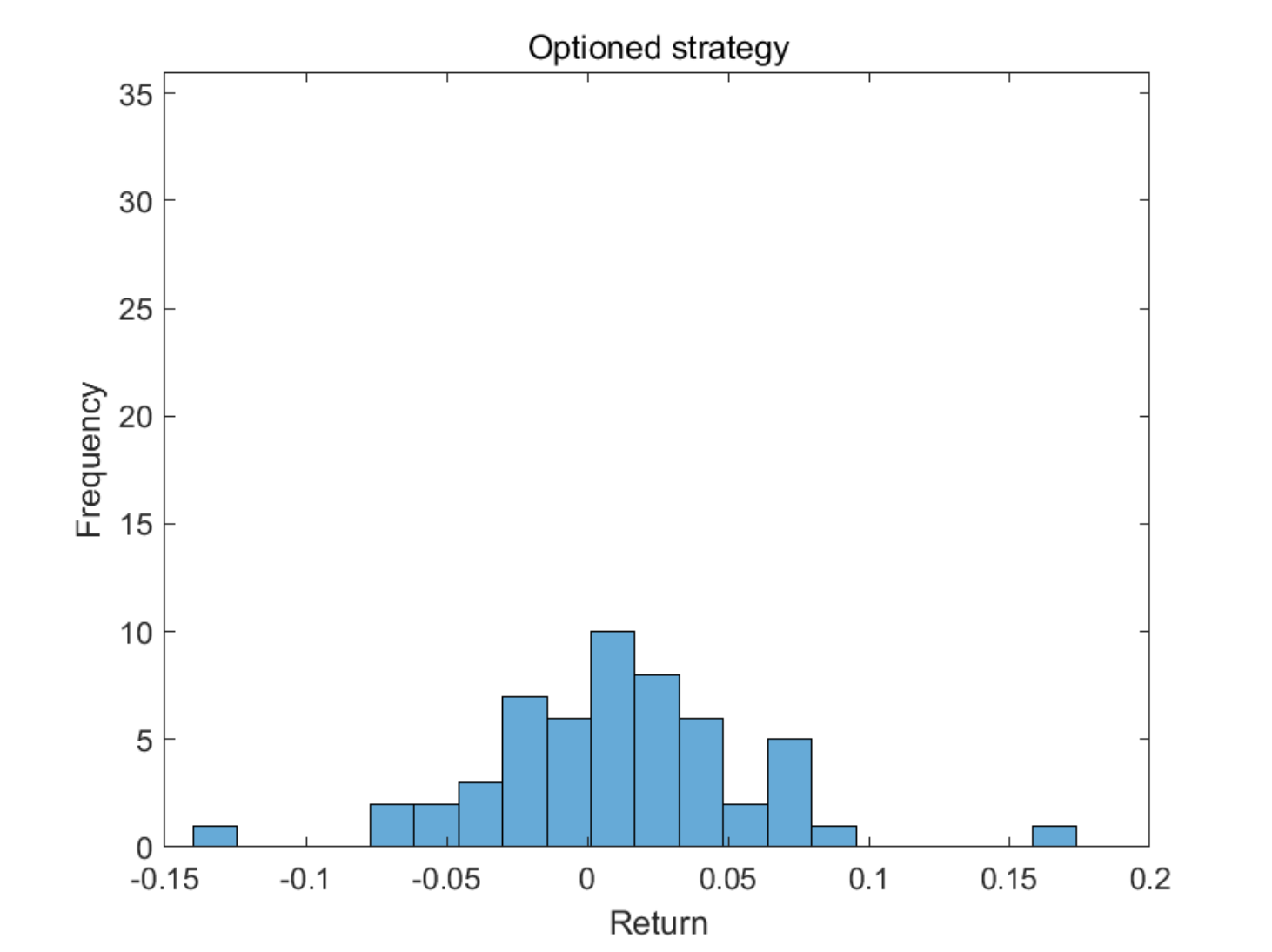}}
  \end{minipage}
  \vfill
  \begin{minipage}{0.5\linewidth}
  \centerline{\includegraphics[width=7.5cm]{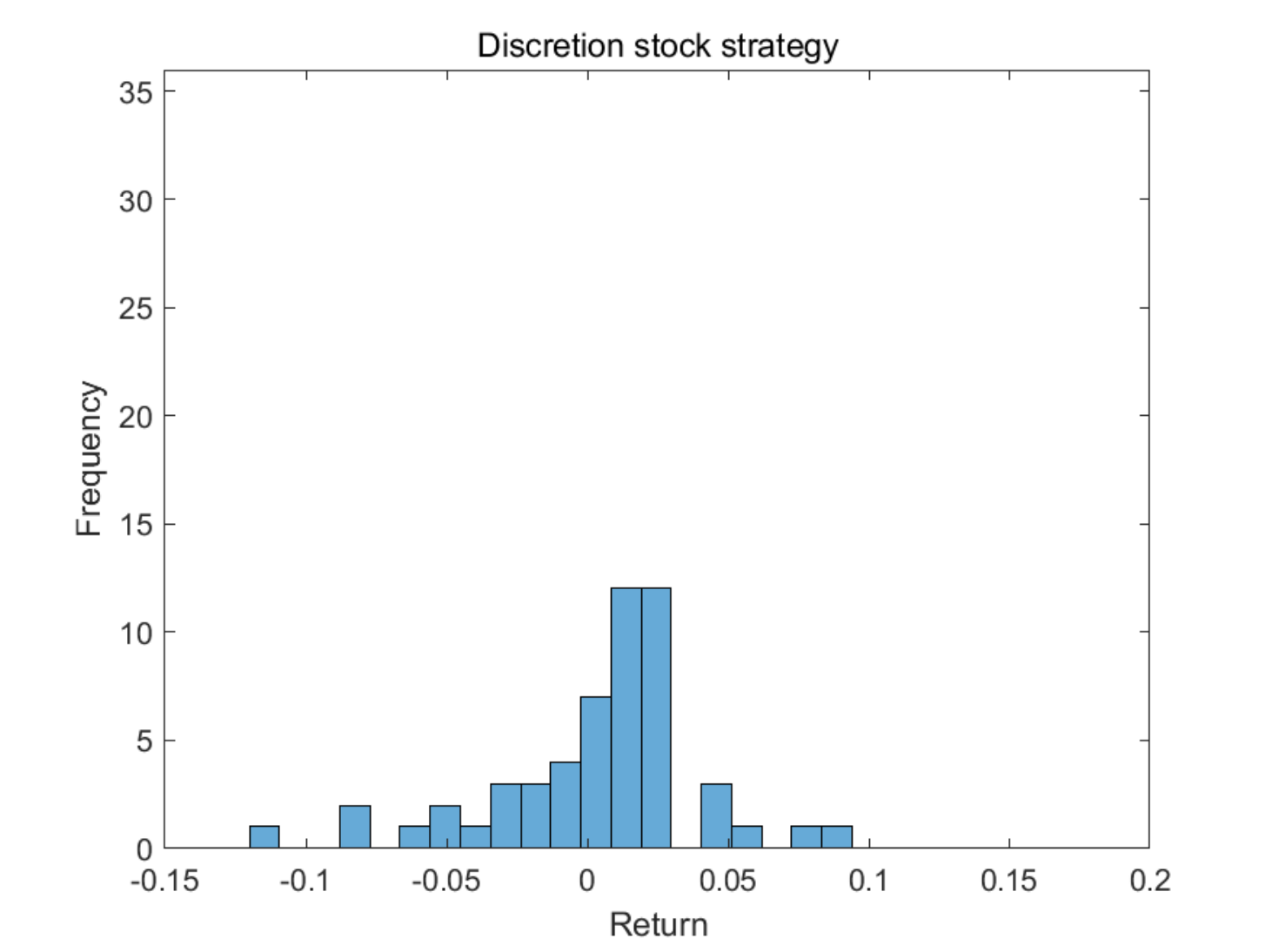}}
  \end{minipage}
  \hfill
  \begin{minipage}{0.5\linewidth}
  \centerline{\includegraphics[width=7.5cm]{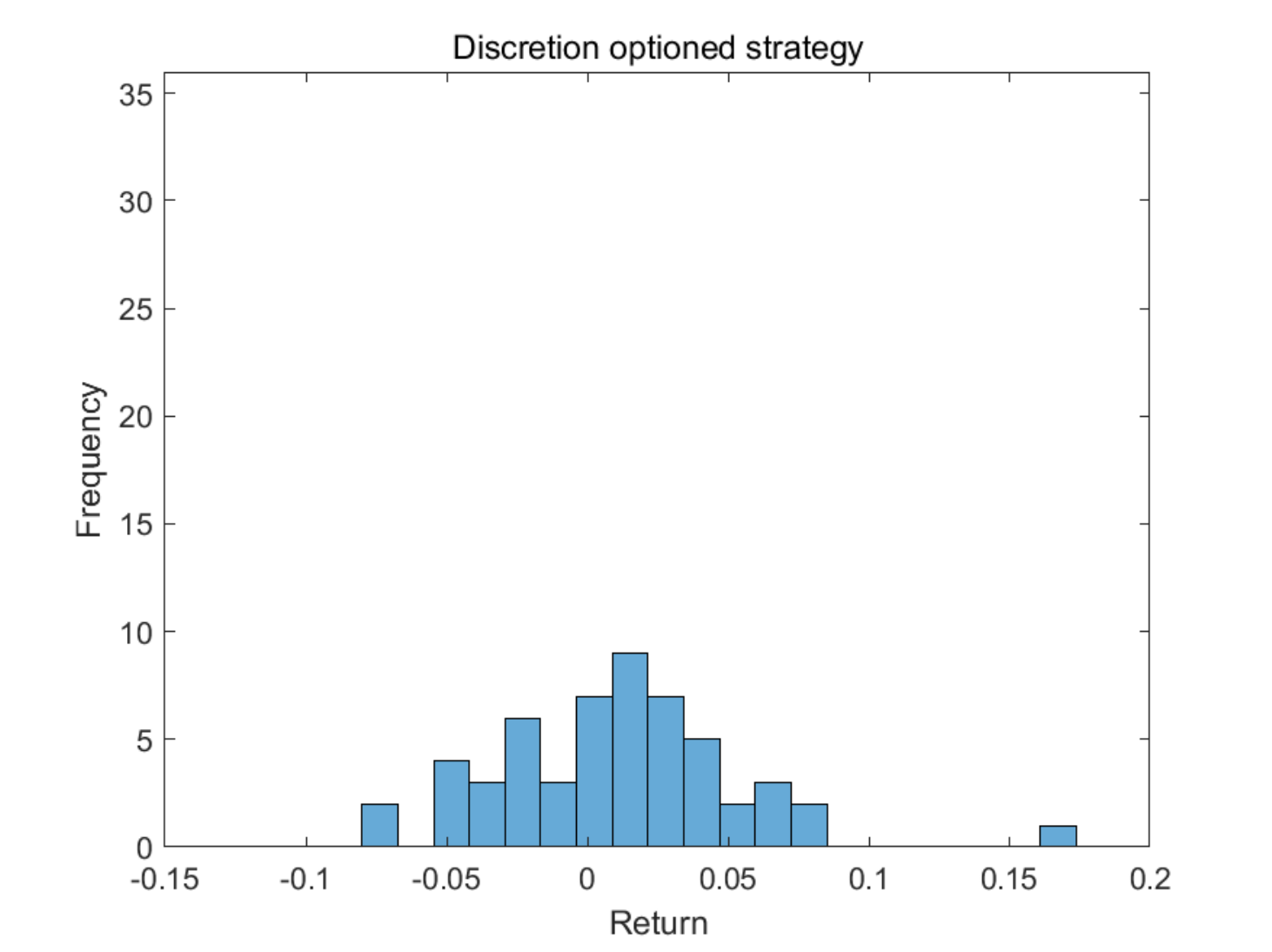}}
  \end{minipage}
  \caption{Out-of-sample return distribution: Data set 2}
  \label{fig16}
\end{figure}

\section{Conclusion}
This paper explores whether the systemic risk of portfolios can be controlled and how it can be controlled. It is illustrated that the systemic risk of pure stock portfolios is usually uncontrollable due to the contagion effect and the seesaw effect.  We demonstrate that these two effects forming systemic risk can be well treated via the correlation hedging and the extreme loss hedging by introducing options. We further derive a reasonable approximation of the distribution of optioned portfolios under some mild conditions and show that the optimal portfolio selection with systemic risk control can be formulated as a convex SOCP problem. which is critical to facilitate the portfolio optimization.

In addition, we examine the theoretical results via simulation and test the performance of the proposed model with empirical study. It is shown that during the time of market downturn, the optioned portfolio with systemic risk control outperforms other strategies, no matter in the in-sample test and the out-of-sample test. However, the systemic risk constraint makes portfolios perform poorly during the market booms, which is not surprising since everything has two sides and the model just focuses on risk control. Therefore, the systemic risk constrained model should be combined with some risk prediction methods and follow the rule of discretion to achieve the balance between risk control and return pursuit.

\subsection*{Appendix A: Proof of Proposition \ref{moment}}
\begin{proof}
We first introduce the following lemma on mean and variance of a quadratic form of normally distributed vector (Proposition 1 of \cite{zhu2020}).
\begin{lemma}\label{un-m-v}
Suppose
\begin{eqnarray*}
z=\frac{1}{2}\bm{\xi}^{\top}B\bm{\xi}+\bm{b}^{\top}\bm{\xi}+b_0,
\end{eqnarray*}
where $\bm{\xi}\sim\mathcal{N}(\bm{\mu},\Sigma)$, $B$ is a symmetric matrix, $\bm{b}$ and $b_0$ are parameters. Then the mean and variance of $z$ are given by
\begin{eqnarray*}
\mathbb{E}(z)&=&\frac{1}{2}\bm{\mu}^{\top}B\bm{\mu}+\frac{1}{2}{\rm tr}(B\Sigma)+\bm{b}^{\top}\bm{\mu}+b_0,\\
\mathbb{V}(z)&=&(B\bm{\mu}+\bm{b})^{\top}\Sigma(B\bm{\mu}+\bm{b})+\frac{1}{2}{\rm tr}((B\Sigma)^2).
\end{eqnarray*}
\end{lemma}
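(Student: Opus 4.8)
The plan is to split $z$ into a pure quadratic part, a pure linear part, and a constant, compute the mean by linearity, and then obtain the variance by expanding the square and invoking two standard Gaussian moment identities, the harder of which I will derive by whitening.

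First I would center the vector. Writing $\bm{\eta}=\bm{\xi}-\bm{\mu}\sim\mathcal{N}(\bm{0},\Sigma)$ and substituting into $z$, the two cross terms $\tfrac{1}{2}\bm{\mu}^{\top}B\bm{\eta}$ and $\tfrac{1}{2}\bm{\eta}^{\top}B\bm{\mu}$ coincide because $B$ is symmetric, so that
\begin{eqnarray*}
z=\frac{1}{2}\bm{\eta}^{\top}B\bm{\eta}+(B\bm{\mu}+\bm{b})^{\top}\bm{\eta}+\left(\frac{1}{2}\bm{\mu}^{\top}B\bm{\mu}+\bm{b}^{\top}\bm{\mu}+b_0\right).
\end{eqnarray*}
This isolates a centered quadratic piece, a centered linear piece with coefficient $\bm{a}:=B\bm{\mu}+\bm{b}$, and the deterministic constant already matching the non-trace part of the target mean. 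The mean then follows at once from $\mathbb{E}(\bm{\eta})=\bm{0}$ together with $\mathbb{E}(\bm{\eta}^{\top}B\bm{\eta})={\rm tr}(B\Sigma)$, which reproduces the claimed $\mathbb{E}(z)$.

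For the variance I would work with $z-\mathbb{E}(z)=\tfrac{1}{2}\bigl(\bm{\eta}^{\top}B\bm{\eta}-{\rm tr}(B\Sigma)\bigr)+\bm{a}^{\top}\bm{\eta}$ and expand its square into three expectations. The cross expectation $\mathbb{E}\bigl[(\bm{\eta}^{\top}B\bm{\eta}-{\rm tr}(B\Sigma))\,\bm{a}^{\top}\bm{\eta}\bigr]$ is a third-order moment of a zero-mean Gaussian and hence vanishes by symmetry, so no independence between the quadratic and linear pieces is needed. The linear piece contributes $\mathbb{E}[(\bm{a}^{\top}\bm{\eta})^{2}]=\bm{a}^{\top}\Sigma\bm{a}=(B\bm{\mu}+\bm{b})^{\top}\Sigma(B\bm{\mu}+\bm{b})$, which is exactly the first term of the target formula. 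It then remains only to show $\mathbb{V}(\bm{\eta}^{\top}B\bm{\eta})=2\,{\rm tr}((B\Sigma)^2)$, one quarter of which is the second term.

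The main obstacle is this last identity, and I would establish it by whitening rather than by direct Wick contraction. Writing $\Sigma=\Sigma^{1/2}\Sigma^{1/2}$ and representing $\bm{\eta}=\Sigma^{1/2}\bm{\zeta}$ with $\bm{\zeta}\sim\mathcal{N}(\bm{0},{\rm I})$, one has $\bm{\eta}^{\top}B\bm{\eta}=\bm{\zeta}^{\top}\widetilde{B}\bm{\zeta}$ with $\widetilde{B}=\Sigma^{1/2}B\Sigma^{1/2}$ symmetric; diagonalizing $\widetilde{B}=O\Lambda O^{\top}$ and rotating by the orthogonal $O$ preserves the standard normal law, so $\bm{\zeta}^{\top}\widetilde{B}\bm{\zeta}=\sum_k\lambda_k w_k^{2}$ is a linear combination of independent $\chi^2_1$ variables. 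Since each $w_k^2$ has variance $2$, the variance equals $2\sum_k\lambda_k^2=2\,{\rm tr}(\widetilde{B}^{2})$, and cyclicity of the trace gives ${\rm tr}(\widetilde{B}^2)={\rm tr}((\Sigma^{1/2}B\Sigma^{1/2})^2)={\rm tr}((B\Sigma)^2)$. Assembling the vanishing cross term, the linear contribution, and this quadratic contribution yields the stated $\mathbb{V}(z)$. I expect the only delicate bookkeeping to be the trace manipulation just displayed and the verification that the rotated vector is again standard normal; everything else is routine.
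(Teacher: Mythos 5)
Your proof is correct. Note, however, that the paper itself does not prove this lemma at all: it is imported verbatim as Proposition 1 of the cited work \cite{zhu2020}, so there is no in-paper argument to compare yours against. What you have done is supply the missing self-contained derivation, and it is sound at every step: the centering $\bm{\eta}=\bm{\xi}-\bm{\mu}$ with the symmetric-$B$ merger of the two cross terms gives the correct decomposition $z=\tfrac{1}{2}\bm{\eta}^{\top}B\bm{\eta}+(B\bm{\mu}+\bm{b})^{\top}\bm{\eta}+\mathrm{const}$; the mean follows from $\mathbb{E}(\bm{\eta}^{\top}B\bm{\eta})={\rm tr}(B\Sigma)$; the cross term in the variance vanishes because all third-order moments of a centered Gaussian are zero (no independence claim needed, as you correctly observe); and the key identity $\mathbb{V}(\bm{\eta}^{\top}B\bm{\eta})=2\,{\rm tr}((B\Sigma)^2)$ is cleanly established by the whitening $\bm{\eta}\stackrel{d}{=}\Sigma^{1/2}\bm{\zeta}$, spectral decomposition of $\Sigma^{1/2}B\Sigma^{1/2}$, invariance of the standard normal law under orthogonal rotation, and cyclicity of the trace. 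One small remark: the representation $\bm{\eta}\stackrel{d}{=}\Sigma^{1/2}\bm{\zeta}$ only requires $\Sigma$ to be positive semidefinite, so your argument is actually slightly more general than the paper needs (Assumption 1 there takes $\Sigma$ nonsingular). Incidentally, the same whitening-plus-diagonalization device reappears in the paper's Appendix D to reduce the conditional value change to a weighted sum of noncentral Chi-square variables, so your route is stylistically consistent with how the paper handles quadratic forms elsewhere.
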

Now we turn to the value change of the portfolio. Conditioning on $\mathscr{G}=\{\Delta p_i=-k_i,i\in\mathcal{I}\}$, the conditional value change of the portfolio can be written as
\begin{eqnarray*}
&&\Delta v^{\mathscr{G}}=\frac{1}{2}\Delta\bm{p}_{\mathcal{J}}^\top\Gamma_{\mathcal{J}\mathcal{J}}\Delta \bm{p}_{\mathcal{J}}+
(\bm{\delta}_{\mathcal{J}}-\Gamma_{\mathcal{J}\mathcal{I}}\bm{k})^\top\Delta\bm{p}_{\mathcal{J}}+\theta\Delta t-\bm{\delta}_{\mathcal{I}}^\top\bm{k}+\frac{1}{2}\bm{k}^\top\Gamma_{\mathcal{I}\mathcal{I}}\bm{k},
\end{eqnarray*}
where the conditional distribution of $$\Delta p_{\mathcal{J}}|\mathscr{G}\sim\mathcal{N}(\bm{c},E)$$ is normally distributed. By the properties of multivariate normal distribution, since $\Delta\bm{p}\sim\mathcal{N}(\bm{\mu},\Sigma)$, we have $\bm{c} = (c_j)_{|\mathcal{J}|\times 1}=\bm{\mu}_{\mathcal{J}}-\Sigma_{\mathcal{J}\mathcal{I}}\Sigma_{\mathcal{I}\mathcal{I}}^{-1}(\bm{k}+\bm{\mu}_{\mathcal{I}})$ and $E = (e_{jj})_{|\mathcal{J}|\times |\mathcal{J}|}= \Sigma_{\mathcal{J}\mathcal{J}}-\Sigma_{\mathcal{J}\mathcal{I}}\Sigma_{\mathcal{I}\mathcal{I}}^{-1}\Sigma_{{\mathcal{I}}{\mathcal{J}}}$. Therefore, the conditional value change of the portfolio is also a quadratic form of normally distributed vector. By Lemma \ref{un-m-v}, the conditional mean  and variance are respectively given as
\begin{eqnarray*}
\mathbb{E}(\Delta v|\mathscr{G})&=&\frac{1}{2}\bm{c}^{\top}\Gamma_{\mathcal{JJ}}\bm{c}+\frac{1}{2}{\rm tr}(\Gamma_{\mathcal{JJ}}E)+(\bm{\delta}_{\mathcal{J}}-\Gamma_{\mathcal{JI}}\bm{k})^{\top}\bm{c}+\theta\Delta t-\bm{\delta}_{\mathcal{I}}^\top\bm{k}+\frac{1}{2}\bm{k}^\top\Gamma_{\mathcal{I}\mathcal{I}}\bm{k}\\
&=&\frac{1}{2}\bm{h}^{\top}\Gamma\bm{h}+\frac{1}{2}{\rm tr}(\Gamma_{\mathcal{JJ}}E)+\bm{\delta}^{\top}\bm{h}+\theta\Delta t,\\
\mathbb{V}(\Delta v|\mathscr{G})&=&(\Gamma_{\mathcal{JJ}}\bm{c}+\bm{\delta}_{\mathcal{J}}-\Gamma_{\mathcal{JI}}\bm{k})^{\top}E(\Gamma_{\mathcal{JJ}}\bm{c}+\bm{\delta}_{\mathcal{J}}-\Gamma_{\mathcal{JI}}\bm{k})+\frac{1}{2}{\rm tr}\left((\Gamma_{\mathcal{JJ}}E)^2\right)\\
&=&(\Gamma_{\mathcal{J}\cdot}\bm{h}+\bm{\delta}_{\mathcal{J}})^{\top}E(\Gamma_{\mathcal{J}\cdot}\bm{h}+\bm{\delta}_{\mathcal{J}})+\frac{1}{2}{\rm tr}\left((\Gamma_{\mathcal{JJ}}E)^2\right),
\end{eqnarray*}
where $\bm{h}=\left\{\begin{array}{ccc}\bm{h}_{\mathcal{I}}&=&-\bm{k}\\
\bm{h}_{\mathcal{J}}&=&\bm{c}\end{array}\right.$.

We can further rewrite the conditional mean and variance in the form of functions with respect to decision vector $\bm{x}$ and $\bm{y}$ as follows.
\begin{eqnarray*}
\mathbb{E}(\Delta v(\bm{x},\bm{y})|\mathscr{G})&=&\frac{1}{2}\bm{h}^{\top}\left(\sum_{i=1}^nx_i\Gamma^i\right)\bm{h}+\frac{1}{2}{\rm tr}\left(\left(\sum_{i=1}^nx_i\Gamma^i_{\mathcal{JJ}}\right)E\right)+\left(\sum_{i=1}^nx_i\bm{\delta}^i+\bm{y}\right)^{\top}\bm{h}\\
&&+\left(\sum_{i=1}^nx_i\theta^i\right)\Delta t\\
&=&\sum_{i=1}^n\left(\frac{1}{2}\bm{h}^{\top}\Gamma^i\bm{h}+\frac{1}{2}{\rm tr}(\Gamma^{i}_{\mathcal{JJ}}E)+(\bm{\delta}^i)^{\top}\bm{h}+\theta^i\Delta t\right)x_i+\bm{h}^{\top}\bm{y}\\
&=&\bm{g}^\top\bm{x}+\bm{h}^\top\bm{y},
\end{eqnarray*}
where $\bm{g}=(g_i)_{n}=\left(\frac{1}{2}\bm{h}^{\top}\Gamma^i\bm{h}+\frac{1}{2}{\rm tr}(\Gamma^{i}_{\mathcal{JJ}}E)+(\bm{\delta}^i)^{\top}\bm{h}+\theta^i\Delta t\right)_{n}$.
\begin{eqnarray*}
\mathbb{V}(\Delta v(\bm{x},\bm{y})|\mathscr{G})&=&
\left(\left(\sum_{i=1}^nx_i\Gamma^i_{\mathcal{J}\cdot}\right)\bm{h}+\sum_{i=1}^nx_i\bm{\delta}^i_{\mathcal{J}}+\bm{y}_{\mathcal{J}}\right)^{\top}E\left(\left(\sum_{i=1}^nx_i\Gamma^i_{\mathcal{J}\cdot}\right)\bm{h}+\sum_{i=1}^nx_i\bm{\delta}^i_{\mathcal{J}}+\bm{y}_{\mathcal{J}}\right)\\
&&+\frac{1}{2}{\rm tr}\left(\left(\left(\sum_{i=1}^nx_i\Gamma^i_{\mathcal{JJ}}\right)E\right)^2\right)\\
&=&\left(\sum_{i=1}^n\left(\Gamma^i_{\mathcal{J}\cdot}\bm{h}+\bm{\delta}^i_{\mathcal{J}}\right)x_i+\bm{y}_{\mathcal{J}}\right)^{\top}E\left(\sum_{i=1}^n\left(\Gamma^i_{\mathcal{J}\cdot}\bm{h}+\bm{\delta}^i_{\mathcal{J}}\right)x_i+\bm{y}_{\mathcal{J}}\right)\\
&&+\frac{1}{2}\sum_{i=1}^n\sum_{j=1}^nx_ix_j{\rm tr}\left(\Gamma^i_{\mathcal{JJ}}E\Gamma^j_{\mathcal{JJ}}E\right)\\
&=&(\bm{x}^{\top},\bm{y}_{\mathcal{J}}^{\top})R\left(\begin{array}{c}\bm{x}\\\bm{y}_{\mathcal{J}}\end{array}\right)+\frac{1}{2}{\bm x}^\top S\bm{x},
\end{eqnarray*}
where
\begin{eqnarray*}
R&=&\left(\Gamma^1_{\mathcal{J}\cdot}\bm{h}+\bm{\delta}^1_{\mathcal{J}},\cdots,\Gamma^n_{\mathcal{J}\cdot}\bm{h}+\bm{\delta}^n_{\mathcal{J}},{\rm I}\right)^{\top}
E\left(\Gamma^1_{\mathcal{J}\cdot}\bm{h}+\bm{\delta}^1_{\mathcal{J}},\cdots,\Gamma^n_{\mathcal{J}\cdot}\bm{h}+\bm{\delta}^n_{\mathcal{J}},{\rm I}\right),\\
S&=&(s_{ij})_{n\times n}=\left({\rm tr}\left(E\Gamma^i_{\mathcal{J}\mathcal{J}}E\Gamma^j_{\mathcal{J}\mathcal{J}}\right)\right)_{n\times n}.
\end{eqnarray*}
The proof is completed.
\end{proof}

\subsection*{Appendix B: Proof of Corollary \ref{delta-gamma-hedge}}
\begin{proof}
We use the contradiction. If $|\mathcal{D}|< m-2$, there must be at least three optioned assets satisfying $\delta_i\neq0$ or $\gamma_i\neq0$ (suppose $i=1,2,3$ without loss of generality). According to the assumption ${\rm Cov}(\Delta\varphi_i,\Delta\varphi_j)=0$, $i\neq j$, $i,j=1,2,3$, we have
\begin{eqnarray}
\frac{1}{2}\gamma_1\gamma_2\sigma_{12}&=&-(\gamma_1\mu_1+\delta_1)(\gamma_2\mu_2+\delta_2),\label{d1}\\
\frac{1}{2}\gamma_2\gamma_3\sigma_{23}&=&-(\gamma_2\mu_2+\delta_2)(\gamma_3\mu_3+\delta_3),\label{d2}\\
\frac{1}{2}\gamma_3\gamma_1\sigma_{31}&=&-(\gamma_3\mu_3+\delta_3)(\gamma_1\mu_1+\delta_1).\label{d3}
\end{eqnarray}
Multiplying two sides of the above equations, respectively, yields
\begin{eqnarray}\label{ine_zss}
0\leq\frac{1}{8}\gamma_1^2\gamma_2^2\gamma_3^2\sigma_{12}\sigma_{23}\sigma_{31}=-(\gamma_1\mu_1+\delta_1)^2(\gamma_2\mu_2+\delta_2)^2(\gamma_3\mu_3+\delta_3)^2\leq0.
\end{eqnarray}

Notice that the inequality on the left side of (\ref{ine_zss}) holds since the correlation is strictly positive according to the assumption. By (\ref{ine_zss}), we have $\gamma_1\gamma_2\gamma_3=0$ since $\sigma_{12}\sigma_{23}\sigma_{31}>0$. Without loss of generality, we assume $\gamma_1=0$, and then $\delta_1\neq0$ according to the assumption. Equation (\ref{d1}) implies that $\gamma_2\mu_2+\delta_2=0$. Replacing it into the equation (\ref{d2}) derives $\gamma_2\gamma_3=0$. If $\gamma_2=0$, we have $\delta_2=0$ from $\gamma_2\mu_2+\delta_2=0$, which is contradictory. So, there must be $\gamma_3=0$ and $\delta_3\neq0$. Replacing it into equation (\ref{d3}) yields $\gamma_1\mu_1+\delta_1=0$. Combining with $\gamma_1=0$, we derive $\delta_1=0$, which is also contradictory to the assumption. The proof is completed.
\end{proof}

\subsection*{Appendix C: Proof of Corollary \ref{frontier}}
\begin{proof}
For a given stock portfolio $\bm{y}$, set the optioned portfolio $\bm{z}=\bm{y}$. By Proposition \ref{dominate}, for any $i\in\{1,\cdots,m\}$, there exists $(\delta_i,\gamma_i,\theta_i)$ satisfying
\begin{eqnarray*}
&&\mathbb{E}(\Delta\varphi_i)>\mathbb{E}(\Delta p_i)~\mbox{and}~
\mathbb{V}(\Delta\varphi_i)<\mathbb{V}(\Delta p_i).
\end{eqnarray*}
Notice that $\bm{y}>0$ from the assumption,
\begin{eqnarray*}
&&\mathbb{E}(\Delta v(\bm{y})) = \sum_{i=1}^my_i\mathbb{E}(\Delta p_i)<\sum_{i=1}^mz_i\mathbb{E}(\Delta\varphi_i)=\mathbb{E}(\Delta v(\bm{z})).
\end{eqnarray*}
Furthermore, according to Proposition {\ref{corr}} and $\rho_{ij}>0$ for any $i,j\in\{1,\cdots,m\}$, we have $\rho_{ij}\geq|\varrho_{ij}|\geq\varrho_{ij}$ and then
\begin{eqnarray*}
\mathbb{V}(\Delta v(\bm{y}))&=&\sum_{i=1}^m\sum_{j=1}^my_iy_j\mathbb{V}^{\frac{1}{2}}(\Delta p_i)\mathbb{V}^{\frac{1}{2}}(\Delta p_j)\rho_{ij}\\
&>&\sum_{i=1}^m\sum_{j=1}^mz_iz_j\mathbb{V}^{\frac{1}{2}}(\Delta\varphi_i)\mathbb{V}^{\frac{1}{2}}(\Delta\varphi_j)\varrho_{ij}=\mathbb{V}(\Delta v(\bm{z})).
\end{eqnarray*}

Given $\mathscr{G}=\{\Delta p_i=-k_i,i\in\mathcal{I}\}$, by proposition {\ref{dominate}} we obtain
\begin{eqnarray*}
&&\mathbb{E}(\Delta\varphi_i|\mathscr{G})>\mathbb{E}(\Delta p_i|\mathscr{G})~\mbox{and}~\mathbb{V}(\Delta\varphi_i|\mathscr{G})<\mathbb{V}(\Delta p_i|\mathscr{G})
\end{eqnarray*}
for $i\in\mathcal{J}$ and
\begin{eqnarray*}
&&\mathbb{E}(\Delta\varphi_i|\mathscr{G})>\mathbb{E}(\Delta p_i|\mathscr{G})
\end{eqnarray*}
for $i\in\mathcal{I}$.
Recalling $\bm{y}>0$ by assumption, we obtain
\begin{eqnarray*}
&&\mathbb{E}(\Delta v(\bm{y})|\mathscr{G}) = \sum_{i=1}^my_i\mathbb{E}(\Delta p_i)|\mathscr{G})<\sum_{i=1}^mz_i\mathbb{E}(\Delta\varphi_i)|\mathscr{G})=\mathbb{E}(\Delta v(\bm{z})|\mathscr{G}).
\end{eqnarray*}
By proposition \ref{corr} and $\rho^{\mathscr{G}}_{ij}>0$ for any $i,j\in\{1,\cdots,m\}$, we obtain $\rho^{\mathscr{G}}_{ij}\geq|\varrho^{\mathscr{G}}_{ij}|\geq\varrho^{\mathscr{G}}_{ij}$ and then
\begin{eqnarray*}
\mathbb{V}(\Delta v(\bm{y})|\mathscr{G}) &=& \sum_{i\in\mathcal{J}}\sum_{j\in\mathcal{J}}y_iy_j\mathbb{V}^{\frac{1}{2}}(\Delta p_i|\mathscr{G})\mathbb{V}^{\frac{1}{2}}(\Delta p_j|\mathscr{G})\rho_{ij}^{\mathscr{G}}\\
&>&
\sum_{i\in\mathcal{J}}\sum_{j\in\mathcal{J}}z_iz_j\mathbb{V}^{\frac{1}{2}}(\Delta\varphi_i|\mathscr{G})\mathbb{V}^{\frac{1}{2}}(\Delta\varphi_j|\mathscr{G})\varrho_{ij}^{\mathscr{G}}=\mathbb{V}(\Delta v(\bm{z})|\mathscr{G}).
\end{eqnarray*}
Finally, combining the above two inequalities yields
\begin{eqnarray*}
CoVaR_q^{\Delta v(\bm{y})|\mathscr{G}} &=& \alpha_q\mathbb{V}^{\frac{1}{2}}(\Delta v(\bm{y})|\mathscr{G})-\mathbb{E}(\Delta v(\bm{y})|\mathscr{G})\\
&>& \alpha_q\mathbb{V}^{\frac{1}{2}}(\Delta v(\bm{z})|\mathscr{G})-\mathbb{E}(\Delta v(\bm{z})|\mathscr{G}) = CoVaR_q^{\Delta v(\bm{z})|\mathscr{G}}.
\end{eqnarray*}
The proof is completed.
\end{proof}

\subsection*{Appendix D: Reformulation of the conditional value change of portfolio}
A general version is provided by \cite{zhu2020}. However, to make this paper self-contained, we provide the details of how the conditional value change of the portfolio is translated to equation (\ref{reform_quad}).

Conditioning on $\mathscr{G}=\{\Delta p_i=-k_i,i\in\mathcal{I}\}$, the conditional value change of the portfolio can be written as
\begin{eqnarray*}
&&\Delta v^{\mathscr{G}}=\frac{1}{2}\Delta\bm{p}_{\mathcal{J}}^\top\Gamma_{\mathcal{J}\mathcal{J}}\Delta \bm{p}_{\mathcal{J}}+
\bm{s}^\top\Delta\bm{p}_{\mathcal{J}}+c_0,
\end{eqnarray*}
where $\bm{s}=\bm{\delta}_{\mathcal{J}}-\Gamma_{\mathcal{J}\mathcal{I}}\bm{k}$ and $c_0=\theta\Delta t-\bm{\delta}_{\mathcal{I}}^\top\bm{k}+\frac{1}{2}\bm{k}^\top\Gamma_{\mathcal{I}\mathcal{I}}\bm{k}$. Denote $\Delta\overline{\bm{p}}=E^{-\frac{1}{2}}\Delta\bm{p}_{\mathcal{J}}$. Since $\Delta\bm{p}_{\mathcal{J}}\sim\mathcal{N}(\bm{c},E)$, we have $\Delta\overline{\bm{p}}\sim\mathcal{N}(E^{-\frac{1}{2}}\bm{c},{\rm I})$.

The conditional value change of the portfolio can be rewritten as
\begin{eqnarray*}
&&\Delta v^{\mathscr{G}}=\frac{1}{2}\Delta\overline{\bm{p}}^{\top}E^{\frac{1}{2}}\Gamma_{\mathcal{J}\mathcal{J}}E^{\frac{1}{2}}\Delta\overline{\bm{p}}+
(E^{\frac{1}{2}}\bm{s})^\top\Delta\overline{\bm{p}}+c_0.
\end{eqnarray*}
Notice that $E^{\frac{1}{2}}\Gamma_{\mathcal{J}\mathcal{J}}E^{\frac{1}{2}}$ is a symmetric matrix. It can be decomposed as $E^{\frac{1}{2}}\Gamma_{\mathcal{J}\mathcal{J}}E^{\frac{1}{2}}=D\Lambda D^{\top}$, where $D$ is an orthogonal matrix and $\Lambda$ is a diagonal matrix whose diagonal elements consisting of eigenvalues $(\lambda_1,\cdots,\lambda_{m'})$. We assume $\lambda_1,\cdots,\lambda_h$ $(h\leq m')$ are the nonzero eigenvalues. Denote $\bm{q}=D^{\top}\Delta\overline{\bm{p}}$. Thus we have $\bm{q}\sim\mathcal{N}(D^{\top}E^{-\frac{1}{2}}\bm{c},{\rm I})$.

Furthermore, denote $\bm{\iota}=D^{\top}E^{\frac{1}{2}}\bm{s}$. Then the conditional value change of the portfolio can be reformulated as
\begin{eqnarray}
\Delta v^{\mathscr{G}}&=&\frac{1}{2}\Delta\overline{\bm{p}}^{\top}D\Lambda D^{\top}\Delta\overline{\bm{p}}+
(E^{\frac{1}{2}}\bm{s})^\top\Delta\overline{\bm{p}}+c_0 \nonumber\\
&=&\frac{1}{2}\bm{q}^{\top}\Lambda\bm{q}+\bm{\iota}^{\top}\bm{q}+c_0 \nonumber\\
&=&\frac{1}{2}\sum_{i=1}^h\left(\lambda_iq_i^2+2\iota_iq_i\right)+\sum_{i=h+1}^{m'}\iota_iq_i+c_0 \nonumber\\
&=&\frac{1}{2}\sum_{i=1}^h\lambda_i(q_i+\frac{\iota_i}{\lambda_i})^2+\sum_{i=h+1}^{m'}\iota_iq_i+\tau,\nonumber
\end{eqnarray}
where $\tau=c_0-\frac{1}{2}\sum_{i=1}^h\frac{\iota_i^2}{\lambda_i}$. The conditional value change of the portfolio is the linear weighted sum of independent noncentral Chi-square random variables, independent normal random variables and a constant, where the first two terms are independent.

\subsection*{Appendix E: Parameter generation of 4.1.1}
To generate a correlation matrix, we firstly generate a lower triangular matrix $L$ with elements generated from uniform distribution. Then we normalize the matrix $L$ so that the norm of each row is equal to 1. The correlation matrix is calculated by $LL^{\top}$. The price of stocks are set 1. For each stock, the annual standard deviation $\sigma$ are generated from uniform distribution $U(0.1,0.3)$, and the annual return $\mu = 0.15+0.5\sigma$. Accordingly, we estimate the option parameter using B-S model (\cite{black1973}). The moneyness (the ratio of the exercise price to the stock price) is generated from uniform distribution U(0.8, 1.2). The type of option and the corresponding underlying asset are uniformly randomly determined. The expiration of each option is set to one year.

\subsection*{Appendix F: GARCH model in empirical study}
We assume that each stock follows GJR-GARCH model that considers the leverage effect. For stock $i\in\{1,\cdots,m\}$, we assume
\begin{eqnarray*}
r_{it} &=& r+\kappa_i\sqrt{\sigma_{it}}-\frac{1}{2}\sigma_{it}+\epsilon_{it},\quad \epsilon_{it}|\mathscr{F}_{t-1}\sim\mathcal{N}(0,\sigma_{it}),\\
\sigma_{it} &=& \alpha_{i0}+\alpha_{i1}\epsilon_{it-1}^2+\alpha_{i2}\epsilon_{it-1}^2\mathbbm{1}_{\{\epsilon_{it-1}<0\}}+\alpha_{i3} \sigma_{it-1},
\end{eqnarray*}
where $r_{it}$ and $\sigma_{it}$ are the return and the variance of return of stock $i$ at time $t$, respectively, $r$ is the risk-free interest rate,  $\kappa_i$ is the risk premium, $\alpha_{ij}$ ($j=0,1,2,3$) are parameters, $\mathbbm{1}$ is an indicator function and $\mathscr{F}_{t}$ is the information set at time $t$.

We further assume that the correlation matrix of stocks follows a DCC type model
\begin{eqnarray*}
&&\bm{\epsilon}_t|\mathscr{F}_{t-1}\sim N(0,H_tC_tH_t),\\
&&D_t = C(1-\beta_1-\beta_2)+\beta_1\bm{\epsilon}_{t-1}\bm{\epsilon}_{t-1}^{\top}+\beta_2D_{t-1},\\
&&C_t = \text{diag}\{D_t\}^{-1}D_t\text{diag}\{D_t\}^{-1},
\end{eqnarray*}
where $H_t={\rm diag}\left(\sqrt{\sigma_{1t}},\cdots,\sqrt{\sigma_{mt}}\right)$, $\text{diag}(D_t)$ is a diagonal matrix with the elements being determined by the diagonal elements of matrix $D_t$ accordingly, $C_t$ is the correlation matrix at time $t$, and $C$ is the average correlation of the stocks. Using the Radon-Nikodym derivative of \cite{Rombouts2011}, we can obtain the DCC GARCH process under risk neutral measure that is used for option pricing as follows
\begin{eqnarray*}
&&r_{it} = r-\frac{1}{2}\sigma_{it}+\epsilon^Q_{it},\quad \epsilon^Q_{it}|\mathscr{F}_{t-1}\sim \mathcal{N}(0,\sigma_{it}),\quad i=1,\cdots,m,\\
&&\sigma_{it} = \alpha_{i0}+\alpha_{i1}(\epsilon^Q_{it-1}-\kappa_i\sqrt{\sigma_{it-1}})^2+\alpha_{i2}(\epsilon^Q_{it-1}-\kappa_i\sqrt{\sigma_{it-1}})^2\mathbbm{1}_{\{\epsilon^Q_{it-1}<\kappa_i\sqrt{\sigma_{it-1}}\}}+\alpha_{i3} \sigma_{it-1},\\
&&\bm{\epsilon}^Q_t|\mathscr{F}_{t-1}\sim \mathcal{N}(0,H_tC_tH_t),\\
&&D_t = C(1-\beta_1-\beta_2)+\beta_1(\bm{\epsilon}^Q_{t-1}-\bm{\kappa\sqrt{\sigma_{t-1}}})(\bm{\epsilon}^Q_{t-1}-\bm{\kappa\sqrt{\sigma_{t-1}}})^{\top}+\beta_2D_{t-1},\\
&&C_t = \text{diag}\{D_t\}^{-1}D_t\text{diag}\{D_t\}^{-1},
\end{eqnarray*}
where $\epsilon_{it}^Q=\epsilon_{it}+\kappa_i\sqrt{\sigma_{it}}$ and $\bm{\kappa\sqrt{\sigma_{t}}}=(\kappa_i\sqrt{\sigma_{it}})_{m}$ is a column vector.

\bibliographystyle{ormsv080}
\bibliography{hedgerisk}

\begin{thebibliography}{37}
\expandafter\ifx\csname natexlab\endcsname\relax\def\natexlab#1{#1}\fi
\expandafter\ifx\csname url\endcsname\relax
  \def\url#1{{\tt #1}}\fi
\expandafter\ifx\csname urlprefix\endcsname\relax\def\urlprefix{URL }\fi
\expandafter\ifx\csname urlstyle\endcsname\relax
  \expandafter\ifx\csname doi\endcsname\relax
  \def\doi#1{doi:\discretionary{}{}{}#1}\fi \else
  \expandafter\ifx\csname doi\endcsname\relax
  \def\doi{doi:\discretionary{}{}{}\begingroup \urlstyle{rm}\Url}\fi \fi

\bibitem[{Acharya et~al.(2012)Acharya, Engle, and Richardson}]{acharya2012}
Acharya, V.V., R.~Engle, M.~Richardson. 2012.
\newblock Capital shortfall: a new approach to ranking and regulating systemic
  risk.
\newblock {\it American Economic Review\/} {\bf 102}(3) 59--64.

\bibitem[{Acharya et~al.(2017)Acharya, Pedersen, Philippon, and
  Richardson}]{acharya2017}
Acharya, V.V., L.H. Pedersen, T.~Philippon, M.~Richardson. 2017.
\newblock Measuring systemic risk.
\newblock {\it Review of Financial Studies\/} {\bf 30}(1) 2--47.

\bibitem[{Adrian and Brunnermeier(2016)}]{adrian2011}
Adrian, T., M.K. Brunnermeier. 2016.
\newblock {CoVaR}.
\newblock {\it American Economic Review\/} {\bf 106}(7) 1705--1741.

\bibitem[{Alexander et~al.(2006)Alexander, Coleman, and Li}]{alexander2006}
Alexander, S., T.F. Coleman, Y.~Li. 2006.
\newblock {Minimizing CVaR and VaR for a portfolio of derivatives}.
\newblock {\it Journal of Banking and Finance\/} {\bf 30}(2) 583--605.

\bibitem[{Alizadeh and Goldfarb(2003)}]{Alizadeh2003}
Alizadeh, F., D.~Goldfarb. 2003.
\newblock Second-order cone programming.
\newblock {\it Mathematical Programming\/} {\bf 95}(1) 3--51.

\bibitem[{Benoit et~al.(2017)Benoit, Colliard, Hurlin, and
  P$\acute{e}$rignon}]{benoit2017}
Benoit, S., J.E. Colliard, C.~Hurlin, C.~P$\acute{e}$rignon. 2017.
\newblock Where the risks lie: a survey on systemic risk.
\newblock {\it Review of Finance\/} {\bf 21}(1) 109--152.

\bibitem[{Biglova et~al.(2014)Biglova, Ortobelli, and Fabozzi}]{biglova2014}
Biglova, A., S.~Ortobelli, F.J. Fabozzi. 2014.
\newblock Portfolio selection in the presence of systemic risk.
\newblock {\it Journal of Asset Management\/} {\bf 15}(5) 285--299.

\bibitem[{Black and Scholes(1973)}]{black1973}
Black, F., M.~Scholes. 1973.
\newblock The pricing of options and corporate liabilities.
\newblock {\it Journal of Political Economy\/} {\bf 81}(3) 637--654.

\bibitem[{Bollerslev(1986)}]{bollerslev1986}
Bollerslev, T. 1986.
\newblock Generalized autoregression conditional heteroskedasticity.
\newblock {\it Journal of Econometrics\/} {\bf 31}(3) 307--327.

\bibitem[{Boyd and Vandenberghe(2004)}]{boyd2004}
Boyd, S., L.~Vandenberghe. 2004.
\newblock {\it Convex Optimization\/}.
\newblock Cambridge University.

\bibitem[{Britten-jones and Schaefer(1999)}]{britten-jones1999}
Britten-jones, M., S.~Schaefer. 1999.
\newblock {Non-linear Value-at-Risk}.
\newblock {\it Review of Finance\/} {\bf 2}(2) 161--187.

\bibitem[{Brownlees and Engle(2017)}]{brownless2017}
Brownlees, C., R.F. Engle. 2017.
\newblock {SRISK: a conditional capital shortfall measure of systemic risk}.
\newblock {\it Review of Financial Studies\/} {\bf 30}(1) 48--79.

\bibitem[{Capponi and Rubtsov(2022)}]{Capponi2022}
Capponi, A., A.~Rubtsov. 2022.
\newblock Systemic risk-driven portfolio selection.
\newblock {\it Operations Reasearch\/}
  \urlprefix\url{https://doi.org/10.1287/opre.2021.2234}.

\bibitem[{Cui et~al.(2013)Cui, Zhu, Sun, and Li}]{cui2013}
Cui, X.T., S.S. Zhu, X.L. Sun, D.~Li. 2013.
\newblock {Nonlinear portfolio selection using approximating parametric
  Value-at-Risk}.
\newblock {\it Journal of Banking and Finance\/} {\bf 37}(6) 2124--2139.

\bibitem[{Das and Uppal(2004)}]{das2004}
Das, S., R.~Uppal. 2004.
\newblock Systemic risk and international portfolio choice.
\newblock {\it Journal of Finance\/} {\bf 59}(6) 2809--2834.

\bibitem[{Deng et~al.(2014)Deng, Dulaney, and McCann}]{deng2014}
Deng, G., T.~Dulaney, C.~McCann. 2014.
\newblock Modeling a risk-based criterion for a portfolio with options.
\newblock {\it Journal of Risk\/} {\bf 16}(6) 77--100.

\bibitem[{Driessen and Maenhout(2007)}]{driessen2007}
Driessen, J., P.~Maenhout. 2007.
\newblock An empirical portfolio perspective on option pricing anomalies.
\newblock {\it Review of Finance\/} {\bf 11}(4) 561--603.

\bibitem[{Duan(1995)}]{Duan1995}
Duan, J.C. 1995.
\newblock {The GARCH option pricing model}.
\newblock {\it Mathematical Finance\/} {\bf 5}(1) 13--32.

\bibitem[{Eisenberg and Noe(2001)}]{eisenberg2001}
Eisenberg, L., T.H. Noe. 2001.
\newblock Systemic risk in financial systems.
\newblock {\it Management Science\/} {\bf 47}(2) 236--249.

\bibitem[{El~Ghaoui et~al.(2003)El~Ghaoui, Oks, and Oustry}]{El2003}
El~Ghaoui, L., M.~Oks, F.~Oustry. 2003.
\newblock {Worst-case Value-at-Risk and robust portfolio optimization: a conic
  programming approach}.
\newblock {\it Operations Research\/} {\bf 51}(4) 543--556.

\bibitem[{Elliot et~al.(2014)Elliot, Golub, and Jackson}]{elliott2014}
Elliot, M., B.~Golub, M.O. Jackson. 2014.
\newblock Financial networks and contagion.
\newblock {\it American Economic Review\/} {\bf 104}(10) 3115--3153.

\bibitem[{Engle(2002)}]{Engle2002}
Engle, R. 2002.
\newblock Dynamic conditional correlation.
\newblock {\it Journal of Business and Economic Statistics\/} {\bf 20}(3)
  339--350.

\bibitem[{Faias and Santa-Clara(2017)}]{faias2017}
Faias, J.A., P.~Santa-Clara. 2017.
\newblock Optimal option portfolio strategies: deepening the puzzle of index
  option mispricing.
\newblock {\it Journal of Financial $\&$ Quantitative Analysis\/} {\bf 52}(1)
  277--303.

\bibitem[{Glasserman(2004)}]{glasserman2004}
Glasserman, P. 2004.
\newblock {\it Monte Carlo Methods in Financial Engineering\/}.
\newblock Springer-Verlag, New York.

\bibitem[{Glosten et~al.(1993)Glosten, Jagannathan, and Runkle}]{glosten1993}
Glosten, L.R., R.~Jagannathan, D.E. Runkle. 1993.
\newblock On the relation between the expected value and the volatility of
  nominal excess return on stocks.
\newblock {\it Journal of Finance\/} {\bf 48}(5) 1779--1801.

\bibitem[{Hull(2009)}]{hull2009}
Hull, J. 2009.
\newblock {\it Options, Futures and Other Derivatives\/}.
\newblock Pearson Prentice Hall, New Jersey.

\bibitem[{Liang et~al.(2008)Liang, Zhang, and Li}]{liang2008}
Liang, J., S.~Zhang, D.~Li. 2008.
\newblock Optioned portfolio selection: models and analysis.
\newblock {\it Mathematical Finance\/} {\bf 18}(4) 569--593.

\bibitem[{Liu and Pan(2003)}]{liujun2003}
Liu, J., J.~Pan. 2003.
\newblock Dynamic derivative strategies.
\newblock {\it Journal of Financial Economics\/} {\bf 69}(3) 401--430.

\bibitem[{Longstaff and Schwartz(2001)}]{longstaff2001}
Longstaff, F.A., E.S. Schwartz. 2001.
\newblock {Valuing American options by simulation: a simple least-square
  approach}.
\newblock {\it Review of Financial Studies\/} {\bf 14}(1) 113--147.

\bibitem[{Ma et~al.(2020)Ma, Zhu, and Wu}]{ma2021}
Ma, J.L., S.S. Zhu, Y.~Wu. 2020.
\newblock Joint effects of the liability network and portfolio overlapping on
  systemic financial risk: contagion and rescue.
\newblock {\it Quantitative Finance\/} {\bf 21}(5) 753--770.

\bibitem[{Markowitz(1952)}]{markowitz1952}
Markowitz, H. 1952.
\newblock Portfolio selection.
\newblock {\it Journal of Finance\/} {\bf 7}(1) 77--91.

\bibitem[{Rombouts and Stentoft(2011)}]{Rombouts2011}
Rombouts, J.V.K., L.~Stentoft. 2011.
\newblock Multivariate option pricing with time varing volatility and
  correlations.
\newblock {\it Journal of Banking and Finance\/} {\bf 35}(9) 2267--2281.

\bibitem[{Sortino and Van Der~Meer(1991)}]{sortino1991}
Sortino, F.A., R.~Van Der~Meer. 1991.
\newblock Downside risk.
\newblock {\it Journal of Portfolio Management\/} {\bf 17}(4) 27--31.

\bibitem[{Wilmott(2007)}]{wilmott2007}
Wilmott, P. 2007.
\newblock {\it Quantitative Finance\/}.
\newblock 2nd ed. John Wiley \& Sons Ltd.

\bibitem[{Zhu et~al.(2020)Zhu, Zhu, Pei, and Cui}]{zhu2020}
Zhu, S.S., W.~Zhu, X.~Pei, X.T. Cui. 2020.
\newblock Hedging crash risk in optimal portfolio selection.
\newblock {\it Journal of Banking and Finance\/} {\bf 119} 105905.

\bibitem[{Ziemba(2005)}]{ziemba2005}
Ziemba, W.T. 2005.
\newblock The symmetric downside-risk sharp ratio.
\newblock {\it Journal of Portfolio Management\/} {\bf 32}(1) 108--122.

\bibitem[{Zymler et~al.(2013)Zymler, Kuhn, and B.}]{zymler2013}
Zymler, S., D.~Kuhn, Rustem B. 2013.
\newblock {Worst-case Value at Risk of nonlinear portfolios}.
\newblock {\it Management Science\/} {\bf 59}(1) 172--188.

\end{thebibliography}

\end{document}